\documentclass[11pt]{article}

\usepackage{amsmath, amsfonts, latexsym, amsthm, amssymb, verbatim, tikz, cite, array,appendix,pgfplots}

\textwidth = 6.5 in
\textheight = 9 in
\oddsidemargin = 0.0 in
\evensidemargin = 0.0 in
\topmargin = 0.0 in
\headheight = 0.0 in
\headsep = 0.0 in
\parskip = 0.2in
\parindent = 0.0in

\usetikzlibrary{calc} 
\usetikzlibrary{shapes}
\usetikzlibrary{arrows} 
\usetikzlibrary{through} 
\usepackage[mathcal]{euscript}
\usepackage{color,graphicx}

\definecolor{darkgreen}{rgb}{0,0.6,0}

\newcommand{\Z}{{\mathbb Z}}
\newcommand{\ZN}{\Z_N}
\newcommand{\Hist}[1]{{\mathcal H_{#1}}}
\newcommand{\Histg}[1]{H_{#1}}

\newcommand{\IS}{{\mathcal I}}
\newcommand{\ISp}[1]{\IS^{{#1}+}}
\newcommand{\ISm}[1]{\IS^{{#1}-}}
\newcommand{\ip}{{\mathcal S}}
\newcommand{\cp}{{\mathcal C}}
\newcommand{\obs}{{\mathcal Z}}
\newcommand{\+}{{\text{ }\cap\text{ }}}

\numberwithin{equation}{section}
\numberwithin{figure}{section}
\numberwithin{table}{section}
\newtheorem{thm}{Theorem}[section]
\newtheorem{cor}[thm]{Corollary}
\newtheorem{lem}[thm]{Lemma}

\newtheorem{hyp}[thm]{Hypothesis}
\newcommand{\Prob}{\mathbb{P}} 
\newcommand{\E}{\mathbb{E}} 
\newcommand{\pmfp}[1]{F_{#1}^+}
\newcommand{\pmfm}[1]{F_{#1}^-}

\newcommand{\pmf}{{\mathcal P}(\ZN)}

\newcommand{\argmax}[1]{\underset{#1}{\operatorname{arg\,max\,}}}
\newcommand{\argmin}[1]{\underset{#1}{\operatorname{arg\,min\,}}}

\usepackage{multicol}

\newcommand{\allhist}{\Hist{}}
\newcommand{\ZZ}{\mathbf{Z}}


\title{Information Maximization Fails to Maximize Expected Utility in a Simple Foraging Model}
\author{Edward K. Agarwala$^{1,2}$, Hillel J. Chiel$^{3,4,5}$, and Peter J. Thomas$^{1,3,6,7}$.\\
$^{1-6}$Case Western Reserve University, Departments of \\ $^1$Mathematics, $^2$Operations Reserach, $^3$Biology,\\ $^4$Neurosciences, $^5$Biomedical Engineering, and $^6$Cognitive Science.\\
$^7$Oberlin College Department of Neuroscience.}


\begin{document}
\maketitle
\begin{abstract}
Information theory has successfully explained the organization of many biological phenomena, from the physiology of sensory receptive fields to the variability of certain DNA sequence ensembles.  Some scholars have proposed that information should provide the \textit{central} explanatory principle in biology, in the sense that any behavioral strategy that is optimal for an organism's survival must necessarily involve  efficient information processing.  Here we challenge this view by providing a counterexample.  We present an analytically tractable model for a particular instance of a perception-action loop: a creature searching for a wandering food source confined to a one-dimensional ring world.  The model incorporates the statistical structure of the creature's world, the effects of the creature's actions on that structure, and the creature's strategic decision process.  The underlying model takes the form of a Markov process on an infinite dimensional state space.  To analyze it  we construct an exact coarse graining that reduces the model  to a Markov process on a finite number of ``information states".  This mathematical technique allows us to make quantitative comparisons between the performance of an information-theoretically optimal strategy with other candidate search strategies on a food gathering task.  We find that 
\begin{enumerate}
\item Information optimal search does not necessarily optimize utility (expected food gain).
\item The rank ordering of search strategies by information performance does not predict their ordering by expected food obtained.
\item The relative advantage of different strategies depends on the statistical structure of the environment, in particular the variability of motion of the source.
\end{enumerate} 
We conclude that \emph{there is no simple relationship between information and utility}. Behavioral optimality does not imply information efficiency, nor is there a simple tradeoff between the two objectives of gaining information about a food source \textit{versus} obtaining the food itself.

\end{abstract}
\textbf{Keywords:} Infotaxis, Ideal Observer, Markov Process, Lumping, Coarse Graining, Search Strategy, Perception-Action Loop.

\section{Introduction}

\subsection{Information optimality in sensory systems}
It has long been recognized that information efficiency is an important explanatory principle underlying the function of biological sensory systems \cite{Attneave1954,Barlow1961}.  Linsker \cite{Linsker1988Computer} proposed that an information maximization  principle might govern the structure of  layered self-adaptive neural networks such as those found in the visual system.  It was subsequently shown that efficient encoding of the visual and auditory environment  could account for receptive field physiology observed  in cats and macaques \cite{van1998independent,Smith-Lewicki-06}. Analysis of information optimality in sensory systems typically proceeds  independently of considerations of the relative biological significance of different stimuli in the environment.  Simoncelli and Olshausen \cite{Simoncelli+Olshausen:2001:AnnRevNsci} noted that the principle of efficient coding 
succeeds in part because it does \emph{not} take into account complicating factors such as the accuracy with which signals are represented, or the costs to the organism of making mistakes.

Several authors have extended the framework of information theory to take into account the biological context of  behaving organisms.
Shraiman and colleagues, for instance, coined the term \emph{infotaxis} \cite{vergassola2007infotaxis} to describe  search strategies that locally maximize the expected rate of information gain by an organism.
In the context of a model organism pursuing the source of a diffuse pheromone plume, these authors showed that infotaxis outperformed chemotaxis (locally maximizing the expected rate of increase in signal concentration) with respect to mean latency of capture.  

Information theoretic arguments have succeeded in molecular biology as well \cite{OfriaAdamiCollier:2003:JTB}. Schneider, for instance, demonstrated that the statistical entropy of nucleic acid sequences encoding location information read out by proteins docking at specific DNA sites can be  predicted by the amount of information required to specify the docking sites among all possible binding sites, \textit{i.e.}~the mean entropy decrease from the undocked to the docked state \cite{Schneider1986JMolBiol,Schneider:2000:NAR}.  Moreover the uncertainty decrease in such ``molecular machines'' can be related \emph{quantitatively} to the free energy of protein--DNA binding \cite{Schneider:2010:NAR}.

These results underscore the importance of information theory for understanding sensory, behavioral and genetic systems.  Some authors go so far as to suggest that information theory could serve as an overriding explanatory principle throughout biology.  For example, Adami writes
\begin{quotation}
The discovery of the genetic code cemented the fact that information is the \emph{central} 
pillar in any attempt to understand life, and the dynamics of information storage and acquisition that come with it. [\cite{adami1998ial}, pg. 59, original emphasis].
\end{quotation}
In a similar spirit Polani has proposed parsimonious information processing 
as a universality principle in biology,
arguing that ``if organisms would develop a suboptimal information-processing 
strategy, this would waste metabolic energy. Such a disadvantage would then be selected against by evolution.'' [\cite{Polani2009HFSP}, p.3].
For the sake of argument we will state what we call the \emph{strong hypothesis}: 
\begin{hyp} \label{hyp:strong}
Behaviors that are optimal for an organism's survival are necessarily information optimal.
\end{hyp}
Evaluating such a hypothesis requires precise definitions of behavioral and information optimality.  On its face,
the hypothesis is plausible, inasmuch as it would appear that any successful survival strategy that led to suboptimal information processing should be capable of improvement through improved efficiency of information processing. However, it is possible that  complicating factors such as metabolic constraints or the costs of acquiring and processing information may play a role on a par with that of information maximization \textit{per se}.   
We undertook a quantitative analysis of the relationship between information optimality and utility by constructing an analytically tractable model of a creature using sensory information to search for a source of food, taking care to allow for precise definitions of
behavioral and information optimality.  As we shall show in the sequel, we find that \emph{even in the absence of metabolic and computational constraints,}  information-optimal behavior and utility-optimal behavior bear no simple relation to one another.  

\subsection{Perception-Action Loop}

Studies of the interplay of sensory processing and behavior are complicated by the structure of the problem, which necessarily embodies a closed ``perception-action loop".  
Barlow's redundacy reduction hypothesis \cite{Barlow1961} led to models of sensory processing within the framework now known as acyclic graphical models \cite{Jordan1998}.  
In such  models the pattern of statistical dependencies form an unambiguous chain or tree from causes (\textit{e.g.}~environmental signals) to effects (\textit{e.g.}~activity in a layer of sensory neurons). 
The analysis of information efficiency in this context translates into statements about chains of self consistent conditional probabilities relating the distributions of random variables defined on the nodes of the acyclic graph; one obtains the optimal architecture  to accomplish a processing goal (\textit{e.g.}~output decorrelation) given the statistical structure of the sensory world.   But although modeling sensory perception may be approached along such lines, allowing for a creature's behavior requires a fundamentally different probabilistic framework.  

\begin{figure}[htbp] 
   \centering
   \includegraphics[width=4in]{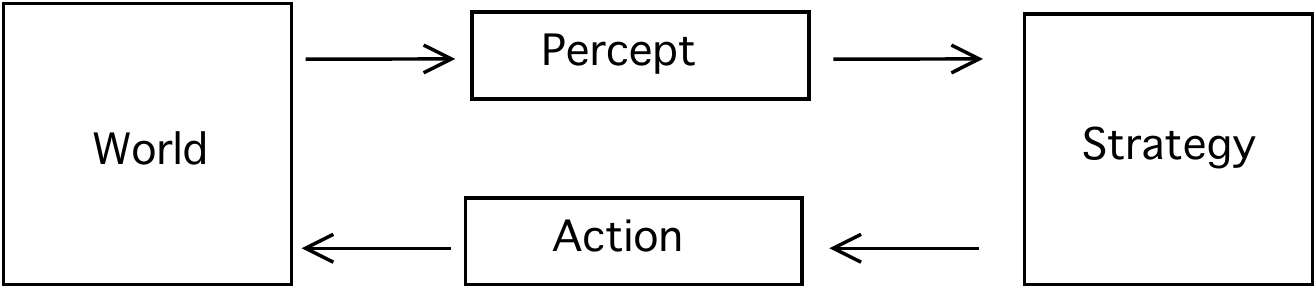} 
   \caption[The perception-action loop.]{The perception-action loop.  The statistical structure of the world determines the percept; the creature determines an action by virtue of a strategy (taking into account the percept); the action changes the statistical structure of the world, relative to the creature.}
   \label{fig:loop}
\end{figure}

Consider a scenario in which a creature observes the environment in order to accomplish some survival related goal, such as locating and exploiting a food source.  Having observed the available stimuli a creature may take action that in turn alters the statistical structure of the world, as seen from  the creature's perspective.  In the simplest case, the movement of the creature towards or away from the source changes the statistics of the target related stimuli \footnote{An analogous situation arises in saccadic visual search, because the spatial resolution of the retina varies significantly from the foveal to the peripheral region.  Making a saccade both gathers additional information about possible target locations and changes the statistical distribution of the inputs relative to the (unknown) target location \cite{Najemnik+Geisler:2005:Nature}.
 In both that situation and in the model considered here, the problem becomes tractable when  placed in an ideal observer framework \cite{GeislerChapter52:2004}. }.  Under these circumstances the conditional probabilities describing the stimulus (the world), the received signal (a percept), the creature's decision about how to proceed (a strategy) and the effects of the creature's behavior (an action) form a closed loop (Figure \ref{fig:loop}).   Perception-action loops appear in models of behavior ranging from  simplified computational models of evolving agents \cite{KlyubinPolaniNehaniv:2007:NeuralComp} to conceptual models of cognitive processing in the mammalian prefrontal cortex\footnote{Compare Figure 1 of \cite{Fuster:2004:TrendsCogSci} with Figure \ref{fig:loop} of the present manuscript.} \cite{Fuster:2004:TrendsCogSci},
 as well as systems in which the brain, the body and the environment coevolve \cite{Chiel+Beer:1997:TrendsNsci}.
 The importance of feedback loops reciprocally connecting a creature's ``input'' and ``output'' has been recognized since the earliest attempts to apply information theory to biology (\textit{cf} \cite{Ashby1956}, Chapter 3).   
 
 As is well known, statistical problems such as inference,  estimation and sampling, which  are well understood for probabilistic models on acyclic graphs, become essentially intractable for graphical models containing cyclic dependencies \cite{Wainwright2002,WainwrightJordan2006IEEE}.  The difficulty of analyzing probabilistic systems on graphs with cycles has been identified as a significant barrier to the progress of theory in both neuroscience and computational intelligence \cite{Bell1999PTRSLB}.
We overcome this obstacle for a relatively tractable model system by incorporating  
the entire system of interest into a larger Markov process for which the full transition matrix and steady state distribution can be obtained analytically.  To this end, we consider a simple model of an organism searching for a target, introduced below (Section \ref{ssec:2models}).  
We begin with a creature that can approximately estimate the location of a food source by measuring the (noisy) local food concentration at a sequence of locations.  The source itself moves unpredictably, following a random walk.  The combination of the creature and the world amounts to a Markov random walk in a large state space.  We then obtain a second, even more tractable model from the first through a heuristic limiting process, by reducing the noisiness of the food concentration measurements.  In the low noise limit we are able to reduce the entire system to a Markov process on a small number of states.  For the reduced system we then calculate the quantities of interest exactly: the mean amount of food gathered and the average uncertainty about the source location, and how these quantities depend on the creature's choice of search strategy.
We also vary a key parameter, the variability of the motion of the food source itself, and study its effect on the creature's performance under different strategies.




  An important feature of our system is the inclusion of an internal state in the model organism.  Whether they involve memory or metabolism, internal states play an important role in many models of biological decision making.  Life history theory, to cite one example, incorporated internal states as a form of phenotypic plasticity, thereby increasing its explanatory power.  Inclusion of information about the metabolic and environmental resources available to an organism improved understanding of decisions such as reproductive timing \cite{McNamara+Houston:1996:Nature}.  A similar distinction underlies the difference between chain reflex models of locomotion \cite{JindrichFull:2002JEB,ShikOrlovsky:1976:1976}, and robust interplay of a central pattern generator (internal state) guided by external information \cite{KukillayaProctorHolmes:2009:Chaos,MarkinKlishkoShevtsovaPrilutskyRybak2010AnnNYAS}.  
An analogous dichotomy appears in engineering and control theory, namely the distinction between open-loop control (throwing a baseball as precisely as possible at a target) and closed-loop control (flying an airplane with sensory feedback providing ongoing course corrections).  Notably, Touchette and Lloyd \cite{TouchetteLloyd:2000:PRL} analyzed both open- and closed-loop control in a general  information theoretic framework.  
In a control theoretic setting the requirement of controlling the state of a system while also obtaining information about unknown parameters describing it leads to the challenging ``dual-control" problem \cite{Griffiths+Loparo:1985:IntJCtrl,Feng+Loparo:1997:AutoControl-ieee,Feng+Loparo+Fang:1997:AutoControl-ieee}.





In the model we present here all four parts of a perception-action loop are represented in nontrivial terms, yet the entire system remains tractable enough to yield  a complete analytic treatment. 
This approach allows us to obtain quantitative results on the success of different search strategies.  We consider three movement algorithms, defined below (Section \ref{sec:strategies_intro}).  Each corresponds to a particular search strategy: one designed to minimize the creature's \emph{uncertainty about the location} of the food source (the ``Information Theory Creature", ITC), one designed to maximize the \emph{probability of colocation} of the creature with the food source (the ``Maximum Likelihood Creature", MLC) and a third strategy combining elements of the first two (the ``Hybrid" or ``modified Maximum Likelihood Creature", mMLC).  We studied the performance of each algorithm as a function of a parameter controlling the unpredictability of the source movement.  As expected, the ITC had more information about the source location on average than the other two, for all values of the source mobility parameter (Figure 2.9). 
Surprisingly, however, no one search strategy consistently dominated the others in terms of utility.  Each strategy earned a larger expected food yield than the other two for some range of mobility parameters (Figure \ref{fig:strategy_food}).    The ITC performed the best when the movement of the food source was the most random, yet the mMLC that used a mixture of this information-optimal strategy and the maximum likelihood based strategy performed the best when the source was the most predictable.  The MLC dominated the others for intermediate source mobility values. 

In short we show that for this tractable model system  there is \emph{no obvious simple relationship} between the information theoretic quality of a search strategy and its performance in terms of expected utility.  It therefore serves as a  counterexample against the strong hypothesis \ref{hyp:strong}. Our approach -- embedding the creature and its world into a single Markov process -- can be generalized in a variety of ways to incorporate additional biological details.  We consider several possible extensions in Section \ref{sec:conclusion}.



\subsection{Two Models of Search with Perfect Information}
\label{ssec:2models}

Appendix \ref{sec:notation} provides a summary of notation and abbreviations used throughout.

\subsubsection{Features common to both models}
In both models, the source (or injector) and the creature inhabit a discrete ring world with $N$ sites, identified with $\ZN$.  The source location $S_t\in\{0,\cdots,N-1\}$ is a random variable that is a function of discrete time $t\in\{0,1,2,\cdots\}$. Distance between locations on the ring is the minimum number of steps clockwise (CW) or counter-clockwise (CCW)  between the locations. Formally, the distance between locations $l$ and $m$ is equal to $\min\{{l-m \mod N}, {m-l\mod N\}}$ (see Figure \ref{fig:world}).\footnote{Addition and subtraction of positions on the ring are interpreted mod $N$ throughout.  By convention, we will enumerate loci on the circle clockwise. For simplicity we take $N$ to be even.}

\begin{figure}
\begin{center}
\begin{tikzpicture}[line cap=round,line width=1pt]
\draw (0,0) circle (1.3cm);
\foreach \angle in
{0, 20, 40, 60, 80, 100, 120, 140, 160, 180, 200, 220, 240, 260, 280, 300, 320, 340 }
{
\draw[line width=1pt] (\angle:1.4cm) -- (\angle:1.3cm);
}

\foreach \angle / \labl in
{40/$C$, 160/$S$}
{
\draw[line width=1pt] (\angle:1.7cm) -- (\angle:1.3cm);
\draw (\angle:1.9cm) node {\labl};
}

\draw (0,0) node {$\Z_N$};

\draw[color=blue,<->] (40:1.5cm) arc (40:160:1.5cm);
\draw[color=blue] (100:1.75) node {$D$};

\end{tikzpicture}
\end{center}

\caption[The World]{Geometry of the ring world, $\ZN$, occupied by the creature and the food source.  $C$ is the creature's location. $S$ is the source's location. $D$ is the distance between $C$ and $S$, taken to be $D = \min\{S-C \mod N, C-S\mod N\}$.}
\label{fig:world} 
\end{figure}
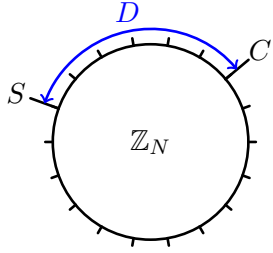

The source performs a random walk with uniformly distributed initial position $S_0$ and independent increments.
(Without loss of generality, we set the creature's initial position, $C_0$, to zero.)
The source's walk is governed by a probability mass function (PMF) $P$, defined such that the probability of the source  moving $m$ steps CW is $P(m)$.  
We denote the set of PMFs on $\ZN$, also known as the probability simplex on $\ZN$, as $\pmf$.
The probability mass function for the source location at time $t+1$, $S_{t+1}$ is obtained by convolving $P$ with the PMF for the current location  $S_t$. 
For simplicity we consider movements to the right or left of the current position equally likely, and thus source movement distributions take the form
\begin{equation}
P_x(m)=\left\{\begin{array}{ll}x,& m=1\\
1-2x,& m=0\\
x,&m=-1.  \end{array} \right.
\end{equation}
For technical reasons we will restrict the source mobility parameter $x$ to the range $1/4< x < 1/3$ (see Figure \ref{fig:information_states-}).


The world evolves following discrete time dynamics.  At each time $t$,  the source establishes a distribution of food molecules around the ring with a diffusion and decay process producing independent Poisson distributed molecule counts at each site.  The means are determined by an equilibrium condition balancing an injection rate $\gamma$, a decay rate $\alpha$ and a transition rate $y$ between adjacent nodes. 
%
%
Let $\lambda_m$,  $m\in\{0,\cdots,N-1\}$ be the mean number of molecules $m$ steps CW of the source, and define the vector of means $\Lambda=(\lambda_0,\cdots,\lambda_{N-1})$.
As calculated in Appendix \ref{app:food}, the vector $\Lambda$ is obtained (numerically) by solving a linear system of equations.
Figure \ref{fig:food-distribution} illustrates a typical food distribution; in this case
 $\alpha=.1$ and $y=1/3$.   
%


%
%


\begin{figure}

\begin{tikzpicture}

\definecolor{mycolor1}{rgb}{0,0.5,0}

\begin{axis}[%
scale only axis,
width=5.82222in,
height=3.80333in,
xmin=-8, xmax=9,
ymin=-0.5, ymax=4.5,
xlabel={Displacement from Source},
yticklabels={$-.5\gamma$,0, $.5\gamma$,$1\gamma$,$1.5\gamma$,$2\gamma$,$2.5\gamma$,$3\gamma$,$3.5\gamma$,$4\gamma$,$4.5\gamma$},
ylabel={Molecule Count},
every axis legend/.append style={nodes={right}},
legend entries={Within 1 STD,{$\gamma=1$},{$\gamma=10$},{$\gamma=100$}},
legend image code/.code={\fill[#1] (-.01,-.01) rectangle (0.2cm,0.15cm);}
]
\addplot[white] file {ErrorBar1}; 

\addplot[blue,fill=blue!7!white,densely dotted] file {ErrorBar1}; 
\addplot[mycolor1,fill=mycolor1!22!white,densely dotted] file {ErrorBar10};
\addplot[red,fill=red!35!white,densely dotted] file {ErrorBar100};

\addplot [
color=blue, solid
]
plot [error bars/.cd, y dir = both, y explicit]
coordinates{ (-8,0.0384222) +- (0,0.196016) (-7,0.0567185) +- (0,0.238156) (-6,0.093921) +- (0,0.306465) (-5,0.16243) +- (0,0.403027) (-4,0.285083) +- (0,0.533932) (-3,0.502764) +- (0,0.709059) (-2,0.888033) +- (0,0.942355) (-1,1.56931) +- (0,1.25272) (0,2.7737) +- (0,1.66544) (1,1.56931) +- (0,1.25272) (2,0.888033) +- (0,0.942355) (3,0.502764) +- (0,0.709059) (4,0.285083) +- (0,0.533932) (5,0.16243) +- (0,0.403027) (6,0.093921) +- (0,0.306465) (7,0.0567185) +- (0,0.238156) (8,0.0384222) +- (0,0.196016) (9,0.0329333) +- (0,0.181475)
};

\addplot [
color=mycolor1,
solid
]
plot [error bars/.cd, y dir = both, y explicit]
coordinates{ (-8,0.0384222) +- (0,0.0619857) (-7,0.0567185) +- (0,0.0753117) (-6,0.093921) +- (0,0.0969128) (-5,0.16243) +- (0,0.127448) (-4,0.285083) +- (0,0.168844) (-3,0.502764) +- (0,0.224224) (-2,0.888033) +- (0,0.297999) (-1,1.56931) +- (0,0.396146) (0,2.7737) +- (0,0.526659) (1,1.56931) +- (0,0.396146) (2,0.888033) +- (0,0.297999) (3,0.502764) +- (0,0.224224) (4,0.285083) +- (0,0.168844) (5,0.16243) +- (0,0.127448) (6,0.093921) +- (0,0.0969128) (7,0.0567185) +- (0,0.0753117) (8,0.0384222) +- (0,0.0619857) (9,0.0329333) +- (0,0.0573876)
};

\addplot [
color=red,
solid
]
plot [error bars/.cd, y dir = both, y explicit]
coordinates{ (-8,0.0384222) +- (0,0.0196016) (-7,0.0567185) +- (0,0.0238156) (-6,0.093921) +- (0,0.0306465) (-5,0.16243) +- (0,0.0403027) (-4,0.285083) +- (0,0.0533932) (-3,0.502764) +- (0,0.0709059) (-2,0.888033) +- (0,0.0942355) (-1,1.56931) +- (0,0.125272) (0,2.7737) +- (0,0.166544) (1,1.56931) +- (0,0.125272) (2,0.888033) +- (0,0.0942355) (3,0.502764) +- (0,0.0709059) (4,0.285083) +- (0,0.0533932) (5,0.16243) +- (0,0.0403027) (6,0.093921) +- (0,0.0306465) (7,0.0567185) +- (0,0.0238156) (8,0.0384222) +- (0,0.0196016) (9,0.0329333) +- (0,0.0181475)
};

\end{axis}
\end{tikzpicture}

\caption[Typical food distribution.]{Typical food distribution.  The mean number of food particles at each location relative to the injector as a function of the injection rate $\gamma$, for parameter values $\alpha=.1$, $y=1/3$. The error bars represent one standard deviation about the mean.  Note the vertical axis is scaled according to the injection rate $\gamma$.  As $\gamma$ increases the mean number of food particles a given distance from the source increases linearly in $\gamma$, while the standard deviation increases as $\gamma^{1/2}$, so the ratio of the standard deviation to the mean decreases.  
Hence the \emph{shape} of the plot is independent of the injection rate, but the \emph{number of standard deviations} between adjacent means increases without bound as $\gamma$ goes to infinity.  Increasing $\alpha$ or decreasing $y$ gradually narrows the distribution.  Subsequent results do not depend strongly on the choice of these parameters.}
   \label{fig:food-distribution}
\end{figure}

The means of the food distribution follow a strictly decreasing function of distance. Therefore the average amount of food is always higher nearer the injector and the expected amount food $m$ steps CW or CCW of the injector is the same. Additionally, the vector of means $\Lambda$ is directly proportional to the injection rate $\gamma$. Results for other food distributions are discussed in Section \ref{sec:results}.

A creature decides how to move based on its observations.  The random variable $C_t$ represents the creature's location at time $t$, and the random variable $Z_t$ represents the creature's observation at the location $C_t$ at time $t$. We consider two types of observations depending on the model, namely, molecule counts (in Model I) and distance to the source (in Model II).
Section \ref{sec:discrete_molecule} below describes these two scenarios in more detail.
In both cases, the current observation $Z_t$, given the creature's and source's locations at time $t$, is independent of past observations or locations.

We assume the creature is able to maintain a history of past locations and observations, for instance as a form of working memory. In Appendix \ref{app:histories} we introduce a formal definition of a history, parallel to the concept of a sigma algebra in the theory of stochastic processes, in order to obtain the results presented in Section \ref{sec:F}.  For ease of presentation here, however, it suffices to represent a history informally as a growing list of observations $Z_t=z_t$ and creature locations $C_t=c_t$.
%
%
Therefore we define a creature's \emph{history at time $t$} to be the list of observations and creature locations up to and including time $t$. A history is then $\{\{Z_i=z_i\},\{C_i=c_i\}\}_{i=0}^{t}$. \footnote{Unless otherwise stated, all references to histories are made with respect to time $t$.}
%
%
We define a creature's {\it strategy} to be a (possibly random) mapping from histories to a choice for the creature's next location. The strategies under consideration are discussed in Section \ref{sec:strategies_intro}.

The model creature will be endowed with perfect information about several aspects of its environment. In order to focus on the core issue of the (in)equivalence of information maximization and utility maximization, we will simplify the problem by assuming the
 creature
  has complete knowledge about every aspect of the world \emph{except} the injector's actual location. Therefore we assume the creature knows the following:
\begin{enumerate}
\item The creature knows that there are exactly $N$ different locations on a 1-dimensional, ring world.
\item The creature knows that at time $t=0$ the injector starts at a uniformly random location.
\item The creature knows the  history of its own locations and observations. \label{item:knowhistory}
\item The creature knows the injector's movement algorithm, $P_x$, and the value of $x$.
\item The creature knows that the molecules are distributed as Poisson random variables with known means  $\Lambda$ centered on the (unknown) source location.
\end{enumerate}
In addition, we assume the creature's movement is unconstrained, \textit{i.e.}~it can move to any position on the ring at the next iteration, without limitation.  This assumption greatly simplifies the formulation and analysis of strategies, as well shall see.  It is not entirely unreasonable, however, as in some instances animals may travel substantial  distances between foraging or search locations, a behavior known as \emph{saltatory search} or \emph{intermittent locomotion} \cite{KramerMcLaughlin2001AmZool,OBrienEvansBrowman:1989:Oecologia,ReynoldsFrye2007PLoSOne}.  Similarly, saltatory movements unconstrained by distance are observed in the pattern of saccades during visual search \cite{Najemnik+Geisler:2005:Nature}.

 
Assumption \ref{item:knowhistory} above would appear to require that the foraging creature retain an infinitely capacious memory of all of its prior locations and observations.  For the strategies under consideration, however, we will show in Section \ref{sec:F} that if the creature has the ability to maintain an internal state representing a single  probability mass function on $\ZN$, 
this knowledge is equivalent to the creature  knowing its full  history.  Hence only a finite dimensional ``memory" is required in order to exactly encapsulate the creature's entire history in a natural way.

\subsubsection{Discrete Molecule and Distance-Certain Models}
\label{sec:discrete_molecule}

In the Discrete Molecule model (Model I), the creature observes the number of molecules at its location at time $t$.  The Poisson distribution of food at each location relative to the injector is conditionally independent of past observations, given the current location of the source. Therefore given the source's and creature's locations, the creature's observation is independent of past observations.
The independence of successive observations and the creature's assumed knowledge of the world allows the creature to determine the probability mass function describing the injector's next location accurately (see Section \ref{sec:F}).  

\label{sec:distance_certain}

As the rate of food injection $\gamma$ increases, the mean number of molecule counts at each location grows linearly in $\gamma$ while the standard deviation about the mean grows as $\sqrt{\gamma}$.  Given an observation of molecule counts at time $t$, the creature can estimate the distance to the source because it knows the shape of the distribution of food relative to the source.  The food distribution is symmetric about the source; once released, food is presumed to move \textit{via} diffusion without drift (see Appendix \ref{app:food}). The creature can therefore only estimate the absolute distance, not the direction to the source, from the observed molecule count.  However, it can be argued \cite{Agarwala2009}, using  Chebyshev's inequality, that the probability of a creature mistaking the distance to the source approaches zero in the limit as $\gamma$ grows without bound.\footnote{Numerical simulations of this system show that moderately high injection rates lead to very low probabilities of error in estimating the distance to the source (not shown).} 
Consequently we introduce a second model, an idealization of Model I, in which the creature \emph{infallibly} determines the distance to the food source on each observation (although not necessarily which direction it lies in).  In this model the observation $Z_t$ is just $D_t$, the actual distance to the source.
 The \emph{Distance-Certain Model}, Model II, allows full analytical treatment of the asymptotic behavior of the creature under the different strategies of interest.

\subsection{Strategies}\label{sec:strategies_intro}
We focus on three search strategies.  The \emph{Information Maximization} (or \emph{Infomax}) strategy seeks to minimize the creature's uncertainty about the source location, as quantified by the entropy (see Appendix B) 
 of a probability mass function (PMF) representing the possible next source location, given the creatures' past observations.  If multiple locations satisfy this information-greedy algorithm, the creature randomly chooses between the locations with the highest expected food.\footnote{Alternatively the creature could chose the \textit{closest} location among those to which the algorithm is otherwise indifferent, rather than choosing several at random.}  

For purposes of comparison we introduce a second strategy, the \emph{Likelihood Maximization} (or \emph{Max-Likelihood}) strategy.  At each iteration, the creature following this strategy will move to the most likely next location of the source based on its prior observations.  As the food distribution peaks at the location of the source, this strategy is equivalent to a greedy or short-term maximization of expected gain.  (However, it does not necessarily maximize average expected gain over the long run.)  


In addition to the Infomax and Max-Likelihood strategies, we consider a third strategy combining elements of the first two.  Under the \emph{modified Maximum Likelihood} (or \emph{Hybrid}) strategy, the creature moves according to the Max-Likelihood strategy when it is \textit{certain} where the source is located, and otherwise it follows the Infomax strategy.  That is, the Hybrid creature attempts to minimize its uncertainty about the source until it actually locates it, after which it tries to colocalize with the source even at the risk of subsequently increasing its uncertainty.\footnote{This strategy could also be called \emph{track and pounce}.}  

We draw a distinction between \emph{survival strategies} (``maximize information''; ``maximize the likelihood of colocation'') and \emph{movement algorithms}, the latter being an implementation of a given strategy.\footnote{When there is no risk of ambiguity we may refer to strategies and the corresponding algorithms interchangeably.}
Appendix \ref{app:strategies} provides precise definitions of the algorithm corresponding to each of the three strategies described above.  The technical definitions of the Infomax and Max-Likelihood strategies will depend on the notion of a probability mass function encapsulating a creature's history (Section \ref{sec:F}); the definition of the Hybrid strategy will depend in addition on the definition of an information state (Section \ref{sec:infostates}).  These entities are introduced in Section \ref{sec:F}. 
While the algorithms are well defined for the full range of values of the source mobility parameter, $0\le x\le 1/2$, they only strictly correspond to implementations of the strategies for which they are named over a narrower range, $1/4<x<1/3$ (for further discussion of this point please see Section \ref{ssec:strat-vs-alg}.) We will focus our analysis on this narrower range of source mobilities.  


Although developed independently, our model system fits comfortably within the closed-loop information/control theoretic framework described in \cite{TouchetteLloyd:2000:PRL}.  The state variable subject to control ($X$, in \cite{TouchetteLloyd:2000:PRL}) is the displacement $S^t-C^t$ between the food source and the creature.  The ``noise'' introduced by the ``environment'' ($\mathcal{E}$) corresponds to the random walk performed by the food source, or $S^{t+1}-S^t$.  The measurement ($\mathcal{A}$)
is either the molecule count at location $C^t$ or else the observed distance $|S^t-C^t|$.  The control apparatus in their system ($\mathcal{C}$) corresponds to the strategy, along with the internal state of the creature, that together dictate the creature's next move.

Our model could also be seen as a special case of the general ``perception-action loop as a Bayesian network" discussed in (\cite{KlyubinPolaniNehaniv:2007:NeuralComp}, \textit{cf.}~their Figure 1).  In contrast with the particular model studied in \cite{KlyubinPolaniNehaniv:2007:NeuralComp}, our model allows an analytic, asymptotic analysis, rather than a computational analysis over a fixed number of time steps; and we consider the consequences of multiple behavioral strategies rather than information maximization alone.

\section{Results}\label{sec:results}
\subsection{Injector's Location Given a Creature's History}\label{sec:F}

The injector's location at time $t$, $S_t$, is a uniform random variable on the ring world (as shown in Lemma \ref{lem:probS}). However, the creature can use its history of observations to create a more informative distribution of the injector's current and (more importantly) next location.
We let $\pmfm{t}$ be the probability mass function representing the injector's location at time $t$ given the creature's prior history (the  history at time $t-1$).   
Furthermore, we let $\pmfp{t}$ be the random PMF of the injector's location at time $t$ given the creature's history at time $t$. 
(See Appendix \ref{app:histories} for formal definitions of histories and the PMFs $\pmfm{}$ and $\pmfp{}$.)  Note the PMFs $\pmfm{t}$ and $\pmfp{t}$ are themselves  random variables taking values in $\pmf$.
Given $\pmfp{t}$, it is easy to calculate $\pmfm{t+1}$ as the convolution of $\pmfp{t}$ with the injector's movement algorithm, $P$:
\begin{align}
 \pmfm{t+1}=&\pmfp{t}*P.\label{eq:pmfm}
\end{align}
Clearly $\pmfm{0}\equiv1/N$, the uniform distribution, due to the injector's starting at a random location with respect to the creature.\footnote{This statement and others are technically only true ``with probability one".  We dispense with this terminology except where necessary for clarity.}
Henceforth we will suppress the subscript $t$; all random variables will refer to the same time unless noted otherwise.

After a creature makes the observation $Z=z$,
its estimate of the source's location, represented by the PMF $\pmfp{}$, is updated in a Bayesian fashion (see Appendix \ref{app:bayes}).  The update is based on the conditional probability of making the observation $Z$ given the creature's location ($C$) and the possible source location $S=l$:
\begin{align}
 \pmfp{}(l)=&\frac{\Prob(Z=z|S=l\cap C=c) \pmfm{}(l)}{\sum_m \Prob(Z=z|S=m\cap C=c) \pmfm{}(m)}.\label{eq:pmfp}
\end{align}
The resulting expression, while somewhat unwieldy in the case of Model I, is nevertheless tractable because of the assumed Poisson nature of the molecule counts \cite{Agarwala2009}.
In the Distance-Certain Model, where the observation $Z$ is just the distance $D$, the right hand side of Equation \ref{eq:pmfp} simplifies considerably.  
The conditional probability of observing a given distance,
$\Prob(D=d|S=l\cap C=c)$,  equals one if the distance between $l$ and $c$ is equal to $d$, and zero otherwise.
Moreover, when $D$ is exactly equal to either zero or $N/2$, then Equation \ref{eq:pmfp} reduces to
\begin{align}
 \pmfp{}(l)=&\left\lbrace
\begin{array}{ll}
	1	& 	l=C+D\\
	0 	& 	\text{otherwise}.
\end{array}
\right. \label{eq:pmfpdc1}
\end{align}
And when $0>D>N/2$, Equation \ref{eq:pmfp} reduces to
\begin{align}
 \pmfp{}(l)=&\left\lbrace
\begin{array}{ll}
	\frac{\pmfm{}(C+D)}{\pmfm{}(C+D)+\pmfm{}(C-D)} 	& 	l=C+D\\
	\frac{\pmfm{}(C-D)}{\pmfm{}(C+D)+\pmfm{}(C-D)}	& 	l=C-D\\
	0 	& 	\text{otherwise.}
\end{array}
\right.
\label{eq:pmfpdc2}
\end{align}
Equation \ref{eq:pmfpdc1} differs from Equation \ref{eq:pmfpdc2} because when $D$ is exactly 0 or $N/2$, then $C+D= C-D,\mod N$. Figure 2.1 
shows examples of probability mass functions arising in the Distance-Certain Model.

It is biologically implausible to assume that a creature could retain an unlimited list of all of its past locations and observations in order to implement a search strategy.  Fortunately, Equations \ref{eq:pmfm} and \ref{eq:pmfp} demonstrate that with regard to the injector's location, a  lifetime of observations and locations can be captured \emph{exactly} by an appropriate probability mass function. What is more, in both the Discrete Molecule and Distance-Certain models, the PMF required can be updated iteratively in a lossless fashion to incorporate the creature's latest observation and location.
%
%
Finally, because the PMF of the injector's next location (conditioned on a history) can be calculated in a lossless fashion, there is no theoretical limit on the ability of past observations to accurately inform future predictions.

\subsection{Coarse Graining: Information States}\label{sec:infostates}
In Section \ref{sec:strategies_intro} a movement algorithm is defined as a (possibly random) mapping from creature histories to the creature's next location. We now define a {\it Markovian movement algorithm} to be any mapping of a creature's probability mass function $\pmfm{t+1}$ (describing the injector's \emph{next} location, given the creature's history at time $t$) to the creature's next location $C_{t+1}$.  
As with a movement algorithm, a \emph{Markovian} movement algorithm may be  random  or determinstic.
Any Markovian algorithm defines a random process on the state comprising the source, creature, observations, and the PMFs $\pmfm{}$ and $\pmfp{}$.  By construction, this random walk satisfies the Markov property, i.e.~given its state at time $t$, its state at time $t+1$ is conditionally independent of its states at all preceding times $t'<t$. The algorithm corresponding to each of the strategies to be considered (Infomax, Max-Likelihood, and Hybrid) is Markovian, as will become evident in Section \ref{sec:formal_strategies}.

In both the Discrete Molecule and Distance-Certain models, the random walk takes place on an infinite state space.  For the Distance-Certain model, however, it is possible to define  a finite number of coarse-grained \emph{information states} with respect to which the system constitutes a finite dimensional Markov process.
In most instances, coarse graining a Markov process on a network produces a random process on a smaller network that no longer satisfies the Markov property. 
However, for each of the Markov strategies considered,
the coarse graining we construct maps the random walk on the original  ``microscopic" states of the Distance-Certain model to information states in such a way as to preserve the Markovian character of the random walk (as we will show in this Section).  Consequently 
we are able to apply standard results from the theory of Markov processes to obtain analytical results that in turn shed light on the biological merits of the different strategies.\footnote{In the Markov chain literature, an exact coarse-graining is also called a \emph{lumping} of a Markov process.}  

%

The information states not only provide a coarse graining that preserves the Markov property for the system, they also correspond to intuitively appealing equivalence classes that capture biologically salient conditions such as the actual distance between creature and source and the creature's knowledge about the source.  The classification of information states turns on the structure of the probability mass function $\pmfp{}$ reflecting the likelihood that the source is at a given position relative to the creature.   We define  
$\ISp{}$ to be the set of PMFs for which the injector's location is known, or is known to be some definite distance $d>0$ from the creature's location. In the Distance-Certain Model the probability that $\pmfp{}$ is in $\ISp{}$ is equal to one, by construction.
The set $\ISp{}$ can be partitioned into the sets $\ISp{a}$ with $a\in\{0,1,2,3\}$ and the set $\ISp{I}$. For $a\in\{0,1,2\}$ the set $\ISp{a}$ is the set of PMFs representing the condition that there is some location, $e$, such that the injector is equally like to be at the location(s) $e\pm a$ and has zero probability of being else where. The set $\ISp{3}$ is the set PMFs where the injector is equally likely to be at the locations $e\pm d$ with $d\ge3$ and has zero probability of being elsewhere.
Figure 2.1 
illustrates typical PMFs of these states.
The set $\ISp{I}$ contains the remaining PMFs in $\ISp{}$, for instance those for  which it is certain the injector is a given distance from some location $e$, but the two possible locations do not have equal probability.  
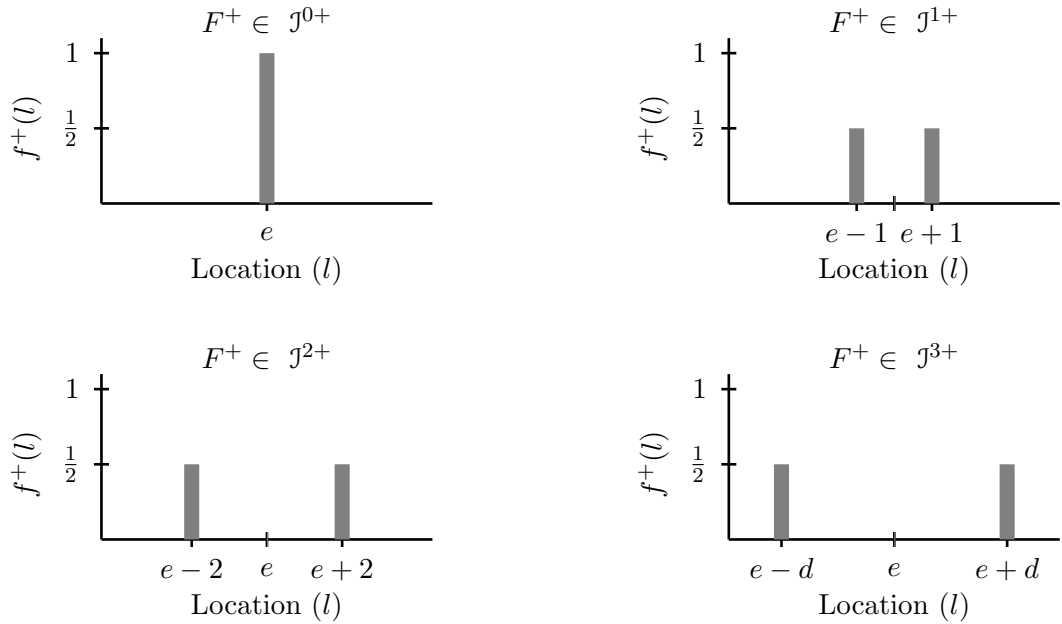
\begin{figure}
\label{fig:information_states+}
\begin{minipage}[htp]{0.5\linewidth} 
\centering
\begin{tikzpicture}[line width=1pt]
\foreach \xcv/\ymv in 
{{2.2cm}/{2cm}}
{
\path
    coordinate (xc) at (\xcv,0)
    coordinate (ym) at (0,\ymv);

 \draw (0,0) -- ($2*(xc)$);
\draw (0,0) -- ($(ym)+(0,.2)$);

\foreach \ytick / \ylabel in
{{(ym)}/1, {($1/2*(ym)$)}/{$\frac{1}{2}$}}
{
\begin{scope}[shift=\ytick]
\draw  (.1,0) -- (-.1,0);
\node [label=180:\ylabel] at (0,0) {};
\end{scope}
};

\begin{scope}[shift={($(xc)$)}]
\foreach \xdiff / \xlabel /\px in
{0/$e$/1}
{
\begin{scope} [shift={($(\xdiff,0)$)}]
\draw ($(0,.1)$) -- ($(0,-.1)$); 
\node at (0,-.4) {\xlabel}; 
\draw[ycomb,
color=gray,
line width=0.2cm]
(0,0) -- (0,\px*\ymv); 
\end{scope};
}
\end{scope};

\node at ($(xc)+(0,-.9)$) {{Location $(l)$}};
\node[rotate=90] at ($(-1,0)+1/2*(ym)$) {{$f^+(l)$}};

\node at ($(xc)+(ym)+(0,.4)$) {{\bf $\pmfp{}\in\text{ }\ISp{0}$}};
}
\end{tikzpicture}

\hspace{5mm}

\begin{tikzpicture}[line width=1pt]
\foreach \xcv/\ymv in 
{{2.2cm}/{2cm}}
{
\path
    coordinate (xc) at (\xcv,0)
    coordinate (ym) at (0,\ymv);

 \draw (0,0) -- ($2*(xc)$);
\draw (0,0) -- ($(ym)+(0,.2)$);

\foreach \ytick / \ylabel in
{{(ym)}/1, {($1/2*(ym)$)}/{$\frac{1}{2}$}}
{
\begin{scope}[shift=\ytick]
\draw  (.1,0) -- (-.1,0);
\node [label=180:\ylabel] at (0,0) {};
\end{scope}
};

\begin{scope}[shift={($(xc)$)}]
\foreach \xdiff / \xlabel /\px in
{0/$e$/0, -1./$e-2$/.5, 1./$e+2$/.5}
{
\begin{scope} [shift={($(\xdiff,0)$)}]
\draw ($(0,.1)$) -- ($(0,-.1)$); 
\node at (0,-.4) {\xlabel}; 
\draw[ycomb,
color=gray,
line width=0.2cm]
(0,0) -- (0,\px*\ymv); 
\end{scope};
}
\end{scope};

\node at ($(xc)+(0,-.9)$) {{Location $(l)$}};
\node[rotate=90] at ($(-1,0)+1/2*(ym)$) {{$f^+(l)$}};

\node at ($(xc)+(ym)+(0,.4)$) {{\bf $\pmfp{}\in\text{ }\ISp{2}$}};
}
\end{tikzpicture}

\end{minipage}
\begin{minipage}[htp]{0.5\linewidth}

\centering

\begin{tikzpicture}[line width=1pt]
\foreach \xcv/\ymv in 
{{2.2cm}/{2cm}}
{
\path
    coordinate (xc) at (\xcv,0)
    coordinate (ym) at (0,\ymv);

 \draw (0,0) -- ($2*(xc)$);
\draw (0,0) -- ($(ym)+(0,.2)$);

\foreach \ytick / \ylabel in
{{(ym)}/1, {($1/2*(ym)$)}/{$\frac{1}{2}$}}
{
\begin{scope}[shift=\ytick]
\draw  (.1,0) -- (-.1,0);
\node [label=180:\ylabel] at (0,0) {};
\end{scope}
};

\begin{scope}[shift={($(xc)$)}]
\foreach \xdiff / \xlabel /\px in
{0/{}/0, -.5/$e-1$/.5, .5/$e+1$/.5}
{
\begin{scope} [shift={($(\xdiff,0)$)}]
\draw ($(0,.1)$) -- ($(0,-.1)$); 
\node at (0,-.4) {\xlabel}; 
\draw[ycomb,
color=gray,
line width=0.2cm]
(0,0) -- (0,\px*\ymv); 
\end{scope};
}
\end{scope};

\node at ($(xc)+(0,-.9)$) {{Location $(l)$}};
\node[rotate=90] at ($(-1,0)+1/2*(ym)$) {{$f^+(l)$}};

\node at ($(xc)+(ym)+(0,.4)$) {{\bf $\pmfp{}\in\text{ }\ISp{1}$}};
}
\end{tikzpicture}

\hspace{5mm}

\begin{tikzpicture}[line width=1pt]
\foreach \xcv/\ymv in 
{{2.2cm}/{2cm}}
{
\path
    coordinate (xc) at (\xcv,0)
    coordinate (ym) at (0,\ymv);

 \draw (0,0) -- ($2*(xc)$);
\draw (0,0) -- ($(ym)+(0,.2)$);

\foreach \ytick / \ylabel in
{{(ym)}/1, {($1/2*(ym)$)}/{$\frac{1}{2}$}}
{
\begin{scope}[shift=\ytick]
\draw  (.1,0) -- (-.1,0);
\node [label=180:\ylabel] at (0,0) {};
\end{scope}
};

\begin{scope}[shift={($(xc)$)}]
\foreach \xdiff / \xlabel /\px in
{0/$e$/0, -1.5/$e-d$/.5, 1.5/$e+d$/.5}
{
\begin{scope} [shift={($(\xdiff,0)$)}]
\draw ($(0,.1)$) -- ($(0,-.1)$); 
\node at (0,-.4) {\xlabel}; 
\draw[ycomb,
color=gray,
line width=0.2cm]
(0,0) -- (0,\px*\ymv); 
\end{scope};
}
\end{scope};

\node at ($(xc)+(0,-.9)$) {{Location $(l)$}};
\node[rotate=90] at ($(-1,0)+1/2*(ym)$) {{$f^+(l)$}};

\node at ($(xc)+(ym)+(0,.4)$) {{\bf $\pmfp{}\in\text{ }\ISp{3}$}};
}
\end{tikzpicture}
\end{minipage}
\caption[PMFs in $\ISp{}$ for the Distance-Certain Model.]{The primary information states for the set of probability mass functions $\ISp{}$, representing the conditional probability of the current source location given a creature's history of observations through the present time. The function $f^+$ is a probability mass function on the ring world $\ZN$. For every $f^+$ in $\ISp{0}$ there is some unique location $0\le e\le N-1$ such that $f^+(e)=1$ and $f^+(l)=0$ for all other locations $l\ne e$. For the other information states there are two locations equidistant from some location $e$ where the injector is equally likely to be; furthermore the probability of the injector being at any other location is zero.  That is, $f^+(e-d)=f^+(e+d)=1/2$, and $f^+(l)=0$ for $l\ne e\pm d$.  For $f^+$ to be in $\ISp{1}$ or $\ISp{2}$, $d$ must equal one or two, respectively.  For $f^+$ to be in $\ISp{3}$, $d$ must be greater than two but less than or equal to $N/4$.}
\end{figure}

For each probability mass function $f^+\in\ISp{}$ representing the likelihood of the source's current location, there is a corresponding PMF representing the likelihood of its next location.  We define $\ISm{}$ as the set of PMFs that can be generated by Equation \ref{eq:pmfm} given $\pmfp{}\in\ISp{}$; for completeness we also include in $\ISm{}$  the uniform PMF to account for the initial state of ignorance about the source, $\pmfm{0}$. We partion the set $\ISm{}$ into subsets $\ISm{a}$ with $a\in\{0,1,2,3,I\}$ and $\ISm{*}=\{f_{u}\}$ (the uniform distribution, $f_u\equiv1/N$). Except for the set $\ISm{*}$, a PMF $f^-$ is in the set $\ISm{a}$ if and only if there is a PMF $f^+$ in $\ISp{a}$ such that $f^-=f^+*P_x$.
Figure \ref{fig:information_states-} shows typical PMFs of these states. Each such partition corresponds to an information state; we say a creature is \emph{in the information state} $\IS^{a\pm}$ when the random PMF $F^\pm$ takes a value $f^\pm$ in the state $\IS^{a\pm}$.



\begin{figure}
\label{fig:information_states-}
\begin{multicols}{2}
\begin{tikzpicture}[line width=1pt,scale=.45]
\foreach \xcv/\ymv in 
{{5cm}/{8cm}}
{
\path
    coordinate (xc) at (\xcv,0)
    coordinate (ym) at (0,\ymv);

 \draw (0,0) -- ($2*(xc)$);
\draw (0,0) -- ($1/2*(ym)+(0,.2)$);

\foreach \ytick / \ylabel in
{{($.3*(ym)$)}/{$x$}, {($.4*(ym)$)}/{$1-2x$}}
{
\begin{scope}[shift=\ytick]
\draw  (.1,0) -- (-.1,0);
\node [label=180:\ylabel] at (0,0) {};
\end{scope}
};

\begin{scope}[shift={($(xc)$)}]
\foreach \xdiff / \xlabel /\px in
{0/$e$/.4, -1/{}/.3, 1/{}/.3}
{
\begin{scope} [shift={($(\xdiff,0)$)}]
\draw ($(0,.1)$) -- ($(0,-.1)$); 
\node[anchor=north] at (0,0) {\xlabel}; 
\draw[ycomb,
color=gray,
line width=0.25cm]
(0,0) -- (0,\px*\ymv); 
\end{scope};
}
\end{scope};

\node at ($(xc)+(0,-1.6)$) {{Location $(l)$}};

\node at ($(xc)+1/2*(ym)+(0,.4)$) {{\bf $\pmfm{}\in\text{ }\ISm{0}$}};
}
\end{tikzpicture}

\begin{tikzpicture}[line width=1pt,scale=.45]
\foreach \xcv/\ymv in 
{{5cm}/{8cm}}
{
\path
    coordinate (xc) at (\xcv,0)
    coordinate (ym) at (0,\ymv);

 \draw (0,0) -- ($2*(xc)$);
\draw (0,0) -- ($1/2*(ym)+(0,.2)$);

\foreach \ytick / \ylabel in
{{($.15*(ym)$)}/{}, {($.2*(ym)$)}/{$\frac{1-2x}{2}$}}
{
\begin{scope}[shift=\ytick]
\draw  (.1,0) -- (-.1,0);
\node [label=180:\ylabel] at (0,0) {};
\end{scope}
};

\node [label=180:$\frac{x}{2}$] at ($.1*(ym)$) {};

\begin{scope}[shift={($(xc)$)}]
\foreach \xdiff / \xlabel /\px in
{0/$e$/0, -3/{}/.15, 3/{}/.15, -1/{}/.15, 1/{}/.15, -2/{$e+2$}/.2, 2/{$e+2$}/.2}
{
\begin{scope} [shift={($(\xdiff,0)$)}]
\draw ($(0,.1)$) -- ($(0,-.1)$); 
\node[anchor=north] at (0,0) {\xlabel}; 
\draw[ycomb,
color=gray,
line width=0.25cm]
(0,0) -- (0,\px*\ymv); 
\end{scope};
}
\end{scope};

\node at ($(xc)+(0,-1.6)$) {{Location $(l)$}};

\node at ($(xc)+1/2*(ym)+(0,.4)$) {{\bf $\pmfm{}\in\text{ }\ISm{2}$}};
}
\end{tikzpicture}

\centering

\begin{tikzpicture}[line width=1pt,scale=.45]
\foreach \xcv/\ymv in 
{{5cm}/{8cm}}
{
\path
    coordinate (xc) at (\xcv,0)
    coordinate (ym) at (0,\ymv);

 \draw (0,0) -- ($2*(xc)$);
\draw (0,0) -- ($1/2*(ym)+(0,.2)$);

\foreach \ytick / \ylabel in
{{($.15*(ym)$)}/{}, {($.2*(ym)$)}/{$\frac{1-2x}{2}$}, {($.3*(ym)$)}/{}}
{
\begin{scope}[shift=\ytick]
\draw  (.1,0) -- (-.1,0);
\node [label=180:\ylabel] at (0,0) {};
\end{scope}
};

\node [label=180:$\frac{x}{2}$] at ($.1*(ym)$) {};
\node [label=180:$x$] at ($.33*(ym)$) {};

\begin{scope}[shift={($(xc)$)}]
\foreach \xdiff / \xlabel /\px in
{0/{}/.3, -2/{}/.15, 2/{}/.15, -1/$e-1$/.2, 1/$e+1$/.2}
{
\begin{scope} [shift={($(\xdiff,0)$)}]
\draw ($(0,.1)$) -- ($(0,-.1)$); 
\node[anchor=north] at (0,0) {\xlabel}; 
\draw[ycomb,
color=gray,
line width=0.25cm]
(0,0) -- (0,\px*\ymv); 
\end{scope};
}
\end{scope};

\node at ($(xc)+(0,-1.6)$) {{Location $(l)$}};

\node at ($(xc)+1/2*(ym)+(0,.4)$) {{\bf $\pmfm{}\in\text{ }\ISm{1}$}};
}
\end{tikzpicture}

\begin{tikzpicture}[line width=1pt,scale=.45]
\foreach \xcv/\ymv in 
{{5cm}/{8cm}}
{
\path
    coordinate (xc) at (\xcv,0)
    coordinate (ym) at (0,\ymv);

 \draw (0,0) -- ($2*(xc)$);
\draw (0,0) -- ($1/2*(ym)+(0,.2)$);

\foreach \ytick / \ylabel in
{{($.15*(ym)$)}/{}, {($.2*(ym)$)}/{$\frac{1-2x}{2}$}}
{
\begin{scope}[shift=\ytick]
\draw  (.1,0) -- (-.1,0);
\node [label=180:\ylabel] at (0,0) {};
\end{scope}
};

\node [label=180:$\frac{x}{2}$] at ($.1*(ym)$) {};

\begin{scope}[shift={($(xc)$)}]
\foreach \xdiff / \xlabel /\px in
{0/$e$/0, -4/{}/.15, 4/{}/.15, -2/{}/.15, 2/{}/.15, -3/{$e-d$}/.2, 3/{$e+d$}/.2}
{
\begin{scope} [shift={($(\xdiff,0)$)}]
\draw ($(0,.1)$) -- ($(0,-.1)$); 
\node[anchor=north] at (0,0) {\xlabel}; 
\draw[ycomb,
color=gray,
line width=0.25cm]
(0,0) -- (0,\px*\ymv); 
\end{scope};
}
\end{scope};

\node at ($(xc)+(0,-1.6)$) {{Location $(l)$}};

\node at ($(xc)+1/2*(ym)+(0,.4)$) {{\bf $\pmfm{}\in\text{ }\ISm{3}$}};
}
\end{tikzpicture}

\end{multicols}

 \caption[Information States following Injector Movement.]{Information states following injector movement, and before the next observation.  The function $f^-$ is a probability mass function on the ring world $\ZN$, representing the likelihood of finding the food source at a given location \emph{after} its subsequent move.  Each $f^-$ in  $\ISm{}$ is obtained by convolving a corresponding PMF $f^+$ in $\ISp{}$ with the transition matrix kernel $[\cdots, x,(1-2x),x,\cdots]$ for the source movement generator $P_x$.  Compare Figure 2.1. 
 Note the shapes of the PMFs depicted are qualitatively correct provided we restrict the range of the source mobility parameter $x$.  For instance a PMF in $\ISm{1}$ takes the values $f^-(e)=x$, $f^-(e\pm 1)=(1-2x)/2$, and $f^-(e\pm 2)=x/2$ for some location $e$ in the ring world, $0\le e \le N-1$.  The PMF has the shape shown (increasing monotonically to a central peak at location $e$) provided $x/2<(1-2x)/2<x$, which is equivalent to assuming $1/4<x<1/3$.  For this figure we set $x=0.3$.}

\end{figure}
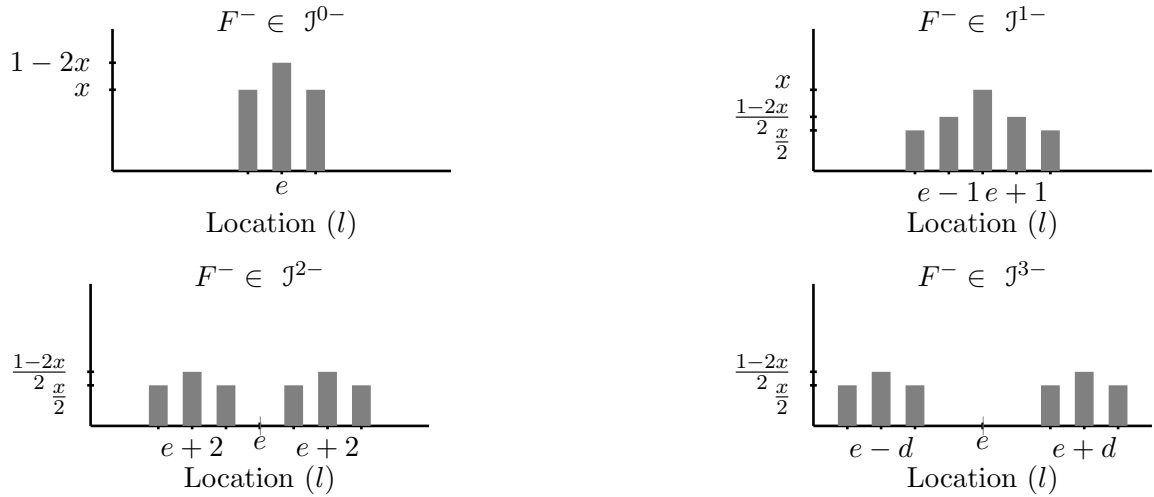

\subsection{Markov Chains on Information States}
\label{sec:markov}
When the creature is in the information state $\ISm{0}$ (see Figure \ref{fig:information_states-}) it has a PMF of the form
\begin{align}\label{eq:f-0}
	f^-(l)&=
\left\lbrace
\begin{array}{ll}
	x 	& 	l=e\pm1\\
	1-2x	& 	l=e\\
	0 	& 	\text{otherwise.}
\end{array}
\right.
\end{align}
Because in this state the current location of the source is known with certainty, the location of the source at the next time step is known to obey the distribution given in Equation \ref{eq:f-0}.  Under the Infomax strategy, the creature in information state $\ISm{0}$ must chose to move to location $e+1$ (or equivalently, $e-1$) because this move ensures that the creature will know the state with certainty again after the next observation.  To see this one need only consider three cases.  Given that $S_t=e$ is known, if $C_{t+1}=e-1$ is chosen then the displacement, $S_{t+1}-C_{t+1}$, can only take one of three values: 0 if the source moves CCW by one step; 1 if the source does not move; 2 if the source moves CW by one step.  The Max-Likelihood strategy, in contrast, calls for setting $C_{t+1}=e$, i.e.~placing the creature at the next time step at the most likely location of the source.  However, with probability $2x$, the source will have moved either CW and CCW, and the creature's observation of the molecule count or the distance, taken at $C_{t+1}$ will leave the new location of the source ambiguous.  
In this case the goals of minimizing uncertainty and maximizing immediate food gain lead to two different choices.  Similar calculations lead to a set of (probabilistic) transitions between the information states for each of the three strategies considered.  We can capture the transitions induced by each strategy succinctly in a matrix representation that makes the Markov chain structure of the coarse grained system explicit.  Let $p_{t}$ be the (row) vector representing the probability of being in the initial state ($\ISm{*}$), or state $\ISm{a}$, $a=0,1,2,3$, respectively, on iteration $t$.   Then for each strategy we have a $5\times 5$ matrix $M$ such that $p_{t+1}=p_t M$.   Table \ref{tab:5x5matrices} shows the resulting transition matrices for the three strategies considered.  

\begin{table}[htbp]
   \centering
      \begin{tabular}{lc} 
     Infomax & $M_{ITC}=
\left(\begin{array}{ccccc}
     0&\frac{2}{N}&\frac{4}{N}&\frac{4}{N}&\frac{N-10}{N} \\ 
\cline{2-2}
     0& \multicolumn{1}{|c|}{\mathbf{1}}&0&0&0\\
\cline{2-2}
     0&1&0&0&0\\
     0&1&0&0&0\\
     0&1&0&0&0
      \end{array} \right)$\\ \hline \\ 
     Max-Likelihood & $M_{MLC}=\left(\begin{array}{ccccc}
     0&\frac{2}{N}&\frac{4}{N}&\frac{4}{N}&\frac{N-10}{N} \\  
\cline{2-4}
     0&\multicolumn{1}{|c}{\mathbf{(1-2x)}}&\mathbf{2x}&\multicolumn{1}{c|}{\mathbf{0}}&0\\
     0&\multicolumn{1}{|c}{\mathbf{x}}&\mathbf{(1-2x)}&\multicolumn{1}{c|}{\mathbf{x}}&0\\
     0&\multicolumn{1}{|c}{\mathbf{(1-x)}}&\mathbf{x}&\multicolumn{1}{c|}{\mathbf{0}}&0\\
\cline{2-4}
     0&(1-x)&x&0&0
      \end{array} \right)$\\ \hline \\
     Hybrid & $M_{mMLC}=\left(\begin{array}{ccccc}
     0&\frac{2}{N}&\frac{4}{N}&\frac{4}{N}&\frac{N-10}{N} \\ 
\cline{2-3}
     0&\multicolumn{1}{|c}{\mathbf{(1-2x)}}&\multicolumn{1}{c|}{\mathbf{2x}}&0&0\\
     0&\multicolumn{1}{|c}{\mathbf{1}}&\multicolumn{1}{c|}{\mathbf{0}}&0&0\\
\cline{2-3}
     0&1&0&0&0\\
     0&1&0&0&0
      \end{array} \right)$
   \end{tabular}
   \caption[Transition matrices for each strategy.]{Transition matrices for each strategy. TOP: Information Theory Creature (ITC).  CENTER: Maximum Likelihood Creature (MLC).  BOTTOM: modified Maximum Likelihood Creature, or Hybrid (mMLC).  Each row represents the probability, under the given strategy, of moving from one state (in order: $\ISm{*}, \ISm{0},\ISm{1},\ISm{2},\ISm{3}$) to any of the same five states. Each row sums to unity.  For $M_{ITC}$, there is a minimal absorbing set comprising a single state; for $M_{MLC}$ there is a minimal absorbing set of three states, and for $M_{mMLC}$ (the hybrid strategy) there are two. In each case the absorbing set is indicated by the boxed terms.}
   \label{tab:5x5matrices}
\end{table}

The initial information state is necessarily the maximally uninformed state $\ISm{*}$ with initial PMF $F^-\equiv 1/N$, regardless of the creature's strategy.  
After the first measurement the creature enters one of the other states $\ISm{a}, 0\le a \le 3$.  
For each transition matrix $M$ there is a subset of information states comprising a minimal absorbing set.  It is straightforward to show that for each strategy considered, the Markov chain when restricted to the minimal absorbing set is positive recurrent (see Appendix \ref{app:strategytransitionmatrices}).  Therefore by the Perron-Frobenius theorem there is a unique stationary distribution $\pi\in\pmf$ such that $\pi M=\pi$ and $p_t\to\pi$ as $t\to \infty$.  The stationary distribution may be calculated explicitly for each strategy and for each value of the source mobility, $x$.  The properties of this asymptotic distribution of state occupancies in turn allow us to calculate the average amount of food obtained in the long run for each strategy, as well as the mean uncertainty (entropy) of the source position at each step, given the creature's prior observations.

As an example we will present the calculation of the expected food gained for the simplest case.  Suppose, as above, that the creature is in information state $\ISm{0}$, i.e.~at time $t$ the creature knows with certainty that the source is currently at some location $S_t=e$.  Recall that the food concentration (Model II) or mean food count (Model I) $m$ steps CW from the source is given by $\lambda_m, 0\le m \le N-1$.  
If the creature chooses as its next location $C_{t+1}=e+1$ (or equivalently $e-1$), then there is a probability $x$ that the creature lands on the source and sees (on average) a quantity $\lambda_0$ of food. With probability $1-2x$, the source will be at location $e$ (one step away) and so the creature will see an average food count of $\lambda_1$. Finally the creature will see on average $\lambda_2$ when the source is at $e-1$ with probability $x$. Therefore the expected food for location $e+1$ when in the state $\ISm{0}$ is $x\lambda_0+(1-2x)\lambda_1+x\lambda_2$.
The other cases proceed similarly; 
Figures \ref{fig:is0-transitions}, \ref{fig:is1-transitions}, \ref{fig:is2-transitions}, and \ref{fig:is3-transitions} present the results on transitions between information states and expected food gains for each recurrent information state. 



\begin{figure}
\begin{center}
\begin{tikzpicture}
[inner sep=2mm,node distance=2.5cm,auto, line width=1pt,
statep/.style={circle, shade, top color=white,
    bottom color=yellow!50!black!20, draw=yellow!40!black!60, very
    thick,minimum size=8mm},
statem/.style={circle, shade, top color=white,
    bottom color=blue!50!black!20, draw=blue!40!black!60, very
    thick,minimum size=8mm},
 location/.style={rectangle, rounded corners, shade, top color=white,
    bottom color=green!50!black!20, draw=green!40!black!60, very
    thick,text width=4cm,text badly centered }
]

\linespread{1}
\node (e) at (0,0) [anchor=east, location, rectangle split, rectangle split parts=2] {$C_{t+1}=e$ \nodepart{second} $\E(\text{food})$\\$(1-2x)\lambda_0+2x\lambda_1$};
\node (epm1) at (0,-3) [anchor=east, location, rectangle split, rectangle split parts=2] {$C_{t+1}=e\pm 1$ \nodepart{second} $\E(\text{food})$\\$x\lambda_0+(1-2x)\lambda_1+x\lambda_2$};
\linespread{1.6}

\node (IS1p) at  (4,-.25) [statep] {$\ISp{1}$};
\node (IS0p) [below of=IS1p] [statep] {$\ISp{0}$};
\node (IS0m) [right of=IS0p] [statem]  {$\ISm{0}$};
\node (IS1m) [right of=IS1p] [statem] {$\ISm{1}$};

\path[>=stealth,*->,shorten <=-4pt] 
	  (epm1)		edge	[sloped]		node		{1}		(IS0p)
	  (e.east)	edge	[sloped,pos =.4]	node		{$1-2x$}	(IS0p)
			edge	[sloped,above]		node		{$2x$}		(IS1p)
;

\path[>=stealth,->] 
(IS0p) 	edge	[sloped]		node		{1}		(IS0m)
	  (IS1p)	edge	[sloped]		node		{1}		(IS1m);
\end{tikzpicture}
\caption[Transitions from $\ISm{0}$.]{Transitions from $\ISm{0}$. This figure states the expected food at each choice for the creature's next location ($C_{t+1}$) when in the information state $\ISm{0}$. It also depicts the transition probabilities to other information states based on $C_{t+1}$. By definition, if the creature is in $\ISp{0}$ or $\ISp{1}$ at time $t$ the creature will be in $\ISm{0}$ or $\ISm{1}$ 
at time $t+1$ (respectively). Future graphs will not depict the intermediate information states.}\label{fig:is0-transitions}
\end{center}
\end{figure}

\begin{figure}
 \begin{center}
\begin{tikzpicture}
[inner sep=2mm,node distance=2.5cm,auto, line width=1pt, shorten <=-4pt,
statep/.style={circle, shade, top color=white,
    bottom color=yellow!50!black!20, draw=yellow!40!black!60, very
    thick,minimum size=8mm},
statem/.style={circle, shade, top color=white,
    bottom color=blue!50!black!20, draw=blue!40!black!60, very
    thick,minimum size=8mm},
 location/.style={rectangle, rounded corners, shade, top color=white,
    bottom color=green!50!black!20, draw=green!40!black!60, very
    thick,text width=4cm,text badly centered },
crossline/.style={preaction={draw=white, -,shorten <=10pt,
line width=6pt}}]

\linespread{1}
\node (e) at (0,0) [anchor=east, location, rectangle split, rectangle split parts=2] {$C_{t+1}=e$ \nodepart{second} $\E(\text{food})$\\$x\lambda_0+(1-2x)\lambda_1+x\lambda_2$};

\node (epm1) at (0,-3) [anchor=east, location, rectangle split, rectangle split parts=2] {$C_{t+1}=e\pm 1$ \nodepart{second} $\E(\text{food})$\\$\frac{(1-2x)(\lambda_0+\lambda_2)}{2}+\frac{x(3\lambda_1+\lambda_3)}{2}$};

\node (epm2) at (0,-6) [anchor=east, location, rectangle split, rectangle split parts=2] {$C_{t+1}=e\pm2$ \nodepart{second} $\E(\text{food})$\\$\frac{x(\lambda_0+2\lambda_2+\lambda_4)}{2}+\frac{(1-2x)(\lambda_1+\lambda_3)}{2}$};
\linespread{1.6}

\node (ISIm) at  (4,.75) [statem]  {$\ISm{I}$};
\node (IS2m) [below of=ISIm] [statem]  {$\ISm{2}$};
\node (IS1m) [below of=IS2m] [statem] {$\ISm{1}$};
\node (IS0m) [below of=IS1m] [statem] {$\ISm{0}$};

\path[>=latex,*->] (epm2.east)	edge	[sloped]			node		{1}			(IS0m)
	  (e.east)	edge	[sloped, pos=.2,below]		node		{$x$}			(IS0m)
			edge	[sloped]			node		{$1-2x$}		(IS1m)
			edge	[sloped, pos=.35,above]		node		{$x$}			(IS2m)
	  (epm1.east)	edge	[crossline]								(ISIm)
			edge	[sloped,below,pos=.2]		node		{$\frac{3x}{2}$}	(ISIm)
			edge	[sloped,pos =.4,below]		node		{$1-\frac{3x}{2}$}	(IS0m)
			
;

\end{tikzpicture}
\caption[Transitions from $\ISm{1}$.]{Transitions from $\ISm{1}$. This figure states the expected food at each choice for the creature's next location ($C_{t+1}$) when in the information state $\ISm{1}$. It also depicts the transition probabilities to other information states based on $C_{t+1}$.}\label{fig:is1-transitions}
\end{center}
\end{figure}

\begin{figure} 
\begin{center}
\begin{tikzpicture}
[inner sep=2mm,node distance=2.5cm,auto, line width=1pt, shorten <=-4pt,
statep/.style={circle, shade, top color=white,
    bottom color=yellow!50!black!20, draw=yellow!40!black!60, very
    thick,minimum size=8mm},
statem/.style={circle, shade, top color=white,
    bottom color=blue!50!black!20, draw=blue!40!black!60, very
    thick,minimum size=8mm},
 location/.style={rectangle, rounded corners, shade, top color=white,
    bottom color=green!50!black!20, draw=green!40!black!60, very
    thick,text width=4cm,text badly centered },
crossline/.style={preaction={draw=white, -,shorten <=10pt,
line width=6pt}}]
\linespread{1}
\node (epm1) at (0,.5) [anchor=east, location, rectangle split, rectangle split parts=2] {$C_{t+1}=e\pm1$\nodepart{second} $\E(\text{food})$\\$ \frac{(1-2x)(\lambda_1+\lambda_3)}{2}+\frac{x(\lambda_0+2\lambda_2+\lambda_4)}{2}$};

\node (epm2) at (0,-2.5) [anchor=east, location, rectangle split, rectangle split parts=2] {$C_{t+1}=e\pm 2$\nodepart{second} $\E(\text{food})$\\$\frac{(1-2x)(\lambda_0+\lambda_4)}{2}+\frac{x(2\lambda_1+\lambda_3+\lambda_5)}{2} $};

\node (epm3) at (0,-5.5) [anchor=east, location, rectangle split, rectangle split parts=2] {$C_{t+1}=e\pm3$\nodepart{second} $\E(\text{food})$\\$\frac{x(\lambda_0+\lambda_2+\lambda_4+\lambda_6)}{2}+\frac{(1-2x)(\lambda_1+\lambda_5)}{2} $};
\linespread{1.6}

\node (IS2m) at  (4,0) [statem]  {$\ISm{2}$};
\node (IS1m) [below of=IS2m] [statem] {$\ISm{1}$};
\node (IS0m) [below of=IS1m] [statem] {$\ISm{0}$};

\path[>=latex,*->] 
	  (epm3.east)	edge	[sloped]			node		{1}			(IS0m)
	  (epm2.east)	edge	[sloped,pos=.4]		node		{$1-x$}			(IS0m)
			edge	[sloped, near start]			node		{$x$}			(IS1m)
	  (epm1.east)	edge	[crossline] (IS0m)
			edge	[sloped, pos=.3]		node		{$1-x$}			(IS0m)
			edge	[sloped]		node		{$x$}			(IS2m)
			
;

\end{tikzpicture}
\caption[Transitions from $\ISm{2}$.]{Transitions from $\ISm{2}$. This figure states the expected food at each choice for the creature's next location ($C_{t+1}$) when in the information state $\ISm{2}$. It also depicts the transition probabilities to other information states based on $C_{t+1}$.}\label{fig:is2-transitions}
\end{center}
\end{figure}

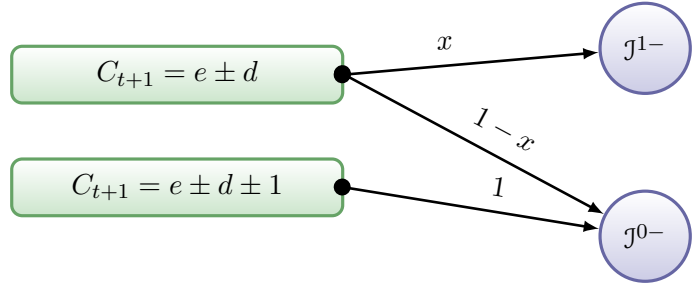
\begin{figure}
\begin{center}
\begin{tikzpicture}
[inner sep=2mm,node distance=2.5cm,auto, line width=1pt, shorten <=-4pt,
statep/.style={circle, shade, top color=white,
    bottom color=yellow!50!black!20, draw=yellow!40!black!60, very
    thick,minimum size=8mm},
statem/.style={circle, shade, top color=white,
    bottom color=blue!50!black!20, draw=blue!40!black!60, very
    thick,minimum size=8mm},
 location/.style={rectangle, rounded corners, shade, top color=white,
    bottom color=green!50!black!20, draw=green!40!black!60, very
    thick,text width=4cm,text badly centered },
crossline/.style={preaction={draw=white, -,shorten <=10pt,
line width=6pt}}
]
\node (epmd) at (0,0) [anchor=east, location] {$C_{t+1}=e\pm d$};
\node (epmdpm1) at (0,-1.5) [anchor=east, location] {$C_{t+1}=e\pm d\pm1$};

\node (IS1m) at (4,.35) [statem] {$\ISm{1}$};
\node (IS0m) [below of=IS1m] [statem] {$\ISm{0}$};

\path[>=latex,*->] 
	  (epmd.east)	edge	[sloped, pos=.4]		node		{$1-x$}			(IS0m)
			edge	[sloped]			node		{$x$}			(IS1m)
	  (epmdpm1.east)	
			edge	[sloped]		node		{$1$}			(IS0m)
;
\end{tikzpicture}
\caption[Transitions from $\ISm{3}$.]{Transitions from $\ISm{3}$. The expected food in this state is a function of $d$; the creature strategies do not spend more than one turn in this state so the effect on the expected food over time is neglible. It does depict the transition probabilities to other information states based on $C_{t+1}$.}\label{fig:is3-transitions}
\end{center}
\end{figure}

\subsection{Strategies Defined by Information States}
\label{sec:formal_strategies}

As defined earlier, a Markov algorithm is any strategy that (perhaps stochastically) maps all possible PMFs to a choice for the creature's next location.
The Infomax, Max-Likelihood, and Hybrid algorithms are depicted in Figure \ref{fig:strategies}. 
It is straightforward to prove (see Lemma \ref{lem:infomaxstrat}) that the Infomax strategy is in fact information optimal in the appropriate sense.
For comparison, we may consider the Max-Likelihood strategy, under which the creature always chooses to move to the location at which the injector is most likely to be situated after its next move (with a coin toss in case of a tie).  
The transitions depicted in Figure \ref{fig:strategies} define the Max-Likelihood strategy provided the parameter $x$ describing  the source random walk $P_x$ is between $1/4$ and $1/3$. 
The Hybrid strategy, also depicted in Figure \ref{fig:strategies}, makes the same choices as the Infomax strategy in the states $\ISm{1}$, $\ISm{2}$, and $\ISm{3}$; however, it makes the same choice as the Max-Likelihood strategy in state $\ISm{0}$.


\begin{figure}
\begin{multicols}{2}
\centering
$\ISm{0}$\\
\begin{tikzpicture}[xscale=.4,y=4cm]
\foreach \xpos / \xheight in
{-1/.3, 0/.4, 1/.3}
{
\draw[ycomb,
color=gray,
line width=0.15cm]
(0\xpos,0) -- (\xpos,\xheight);
}
\draw (-1.2,0) -- (1.2,0);
\draw[color=red,very thick] (-1,.15) ellipse (.4cm and .8cm);
\draw[color=blue,rotate=15,very thick] (.2,.2) ellipse (.45cm and 1.1cm);
\draw[color=darkgreen,rotate=-15,very thick] (-.2,.2) ellipse (.45cm and 1.1cm);
\node[anchor=north] at (0,-.05) {e};
\end{tikzpicture}

\vspace{.5cm}

\centering
$\ISm{2}$\\
\begin{tikzpicture}[xscale=.4,y=8cm]
\foreach \xpos / \xheight in
{-3/.15, -2/.2, -1/.15,3/.15, 2/.2, 1/.15 }
{
\draw[ycomb,
color=gray,
line width=0.15cm]
(0\xpos,0) -- (\xpos,\xheight);
}
\draw (-3.2,0) -- (3.2,0);

\draw[color=red,very thick] (-3,.075) ellipse (.4cm and 1cm);
\draw[color=darkgreen,very thick] (3,.075) ellipse (.4cm and 1cm);
\draw[color=blue,very thick] (-2,.1) ellipse (.4cm and 1.1cm);
\node[anchor=north] at (0,-.02) {e};
\draw (0,.01) -- (0,-.01);
\end{tikzpicture}

\columnbreak
\centering
$\ISm{1}$\\
\begin{tikzpicture}[xscale=.4,y=5.33cm]
\foreach \xpos / \xheight in
{-2/.15, -1/.2, 0/.3, 1/.2, 2/.15}
{
\draw[ycomb,
color=gray,
line width=0.15cm]
(0\xpos,0) -- (\xpos,\xheight);
}
\draw (-2.2,0) -- (2.2,0);
\draw[color=red,very thick] (-2,.075) ellipse (.4cm and .6cm);
\draw[color=darkgreen,very thick] (2,.075) ellipse (.4cm and .6cm);
\draw[color=blue,very thick] (0,.15) ellipse (.4cm and 1.1cm);
\node[anchor=north] at (0,-.05) {e};
\end{tikzpicture}

\vspace{.5cm}
\centering
$\ISm{3}$\\
\begin{tikzpicture}[xscale=.4,y=8cm]
\foreach \xpos / \xheight in
{-3.5/.15, -2.5/.2, -1.5/.15,3.5/.15, 2.5/.2, 1.5/.15 }
{
\draw[ycomb,
color=gray,
line width=0.15cm]
(0\xpos,0) -- (\xpos,\xheight);
}
\draw (-3.7,0) -- (3.7,0);
\draw[color=red,very thick] (-1.5,.075) ellipse (.4cm and 1cm);
\draw[color=darkgreen,very thick] (1.5,.075) ellipse (.4cm and 1cm);
\draw[color=blue,very thick] (-2.5,.1) ellipse (.4cm and 1.1cm);
\node[anchor=north] at (0,-.02) {e};
\draw (0,.01) -- (0,-.01);
\end{tikzpicture}
\end{multicols}
\caption[Strategy Definitions]{The {\color{red} Infomax (red)}, {\color{blue} Max-Likelihood (blue)}, and {\color{darkgreen}  Hybrid (green)} strategies are defined by the choice of location given $\pmfm{}$ is in a particular information state. 
In each panel above, a colored oval indicates the location to which a creature following the given strategy would move at time $t+1$, starting from the corresponding information state at time $t$.
For example,
if the creature is certain that the source is currently at a given location (say $e$), \textit{i.e.}~the creature is in information state $\ISp{0}$ (Top Left panel), then under  
the {\color{blue} Max-Likelihood} and {\color{darkgreen} Hybrid} strategies the creature would move to location $e$, the most likely location for the source to be at the next step (provided the probability of the source not moving is greater than $1/3$ or equivalently $x<1/3$).  But the {\color{red} Infomax} strategy would move to $e\pm 1$ with equal likelihood, because this location minimizes the expected uncertainty about the source's location at the next time step.   The remaining panels are interpreted similarly.
In the information state $\ISm{1}$ (Top Right panel), the injector is more likely to be at the location $e$ than at the locations $e\pm1$ if and only if $x>1/4$. Therefore the Max-Likelihood \emph{algorithm} as depicted in this figure is consistent with the likelihood maximization \emph{strategy} only for $1/4<x<1/3$. 
By symmetry, if a strategy would choose the location $e+m$ with $m\in\{0,\cdots,N-1\}$, the transition probabilities to different states and the expected food would be the same at location $e-m$. 
Therefore, the set of strategies that choose $e+m$ with probability $p\in[0,1]$ and $e-m$ with probability $1-p$ in a given information state make equivalent choices for  $C_{t+1}$ with respect to the information-state coarse graining.
}\label{fig:strategies}
\end{figure}
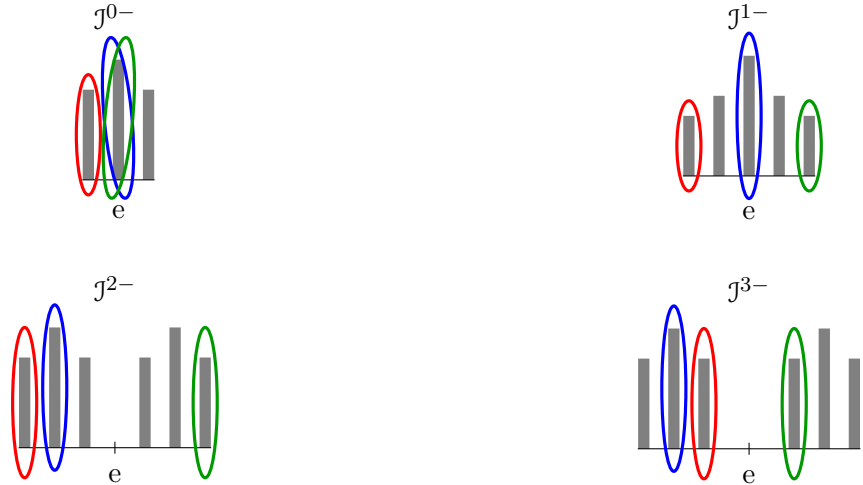

After two observations, all three strategies enter the absorbing states. In the long term, there is a non-zero probability that the creature is in each absorbing state, and the creature ``almost certainly'' remains in the absorbing set. 
These states are highlighted for each strategy in Figure \ref{fig:strategy_transitions}. As the transition probabilities for these finite, absorbing, coarse-grained states are exact, it is possible to calculate the expected time the creature will spend in each state (see Appendix \ref{app:ergodic}).
%


\begin{figure}
 
\begin{multicols}{2}
\subsection*{Strategy Definitions}

\begin{multicols}{2}
\centering
$\ISm{0}$\\
\begin{tikzpicture}[xscale=.4,y=4cm]
\foreach \xpos / \xheight in
{-1/.3, 0/.4, 1/.3}
{
\draw[ycomb,
color=gray,
line width=0.15cm]
(0\xpos,0) -- (\xpos,\xheight);
}
\draw (-1.2,0) -- (1.2,0);
\draw[color=red,very thick] (-1,.15) ellipse (.4cm and .8cm);
\draw[color=blue,rotate=15,very thick] (.2,.2) ellipse (.45cm and 1.1cm);
\draw[color=darkgreen,rotate=-15,very thick] (-.2,.2) ellipse (.45cm and 1.1cm);
\end{tikzpicture}

\vspace{.5cm}

\centering
$\ISm{2}$\\
\begin{tikzpicture}[xscale=.4,y=8cm]
\foreach \xpos / \xheight in
{-3/.15, -2/.2, -1/.15,3/.15, 2/.2, 1/.15 }
{
\draw[ycomb,
color=gray,
line width=0.15cm]
(0\xpos,0) -- (\xpos,\xheight);
}
\draw (-3.2,0) -- (3.2,0);

\draw[color=red,very thick] (-3,.075) ellipse (.4cm and 1cm);
\draw[color=darkgreen,very thick] (3,.075) ellipse (.4cm and 1cm);
\draw[color=blue,very thick] (-2,.1) ellipse (.4cm and 1.1cm);
\end{tikzpicture}

\columnbreak
\centering
$\ISm{1}$\\
\begin{tikzpicture}[xscale=.4,y=5.33cm]
\foreach \xpos / \xheight in
{-2/.15, -1/.2, 0/.3, 1/.2, 2/.15}
{
\draw[ycomb,
color=gray,
line width=0.15cm]
(0\xpos,0) -- (\xpos,\xheight);
}
\draw (-2.2,0) -- (2.2,0);
\draw[color=red,very thick] (-2,.075) ellipse (.4cm and .6cm);
\draw[color=darkgreen,very thick] (2,.075) ellipse (.4cm and .6cm);
\draw[color=blue,very thick] (0,.15) ellipse (.4cm and 1.1cm);
\end{tikzpicture}

\vspace{.5cm}
\centering
$\ISm{3}$\\
\begin{tikzpicture}[xscale=.4,y=8cm]
\foreach \xpos / \xheight in
{-3.5/.15, -2.5/.2, -1.5/.15,3.5/.15, 2.5/.2, 1.5/.15 }
{
\draw[ycomb,
color=gray,
line width=0.15cm]
(0\xpos,0) -- (\xpos,\xheight);
}
\draw (-3.7,0) -- (3.7,0);
\draw[color=red,very thick] (-1.5,.075) ellipse (.4cm and 1cm);
\draw[color=darkgreen,very thick] (1.5,.075) ellipse (.4cm and 1cm);
\draw[color=blue,very thick] (-2.5,.1) ellipse (.4cm and 1.1cm);
\end{tikzpicture}
\end{multicols}

\subsection*{{\color{darkgreen}Hybrid Creature}}
\centering
\begin{tikzpicture}
[inner sep=.5mm,node distance=2.5cm,auto, line width=1pt, >=stealth,
statem/.style={circle, shade, top color=white,
    bottom color=blue!50!black!20, draw=blue!40!black!60, very
    thick,minimum size=5mm},
statema/.style={circle split, shade, top color=white,
    bottom color=blue!50!black!20, draw=darkgreen, very
    thick,minimum size=5mm}
]

\node[statem] 	(iss)	at (0,0)			{$\ISm{*}$};
\node		(start) at (-1,0)			{Start};
\node[statem] 		(is3) [below of=iss] 		{$\ISm{3}$};
\node[statem] 		(isi) at (0,1.75) 		{$\ISm{I}$};
\node[statema] 		(is1) [right of=iss] 		{$\ISm{1}$\nodepart{lower} {$\frac{2x}{1+2x}$}};
\node[statem] 		(is2) [right of=isi] 		{$\ISm{2}$};
\node[statema] 		(is0) [below of=is1] 		{$\ISm{0}$\nodepart{lower} {$\frac{1}{1+2x}$}};

\path[->]
(start) edge[shorten >=1pt] (iss)
(iss) edge node[pos=.3] {$\frac{4}{N}$} (is2)
(iss) edge node {$\frac{4}{N}$} (is1)
(iss) edge node[pos=.3] {$\frac{2}{N}$} (is0)
(iss) edge node {$\frac{N-10}{N}$} (is3)
(is3) edge node{$1$} (is0); 

\path[->]
(is0) edge [loop below] node {$1-2x$} ()
(is1) edge [left,bend left] node {$1$} (is0)
(is2.east) edge [bend left] node {$1$} (is0.east);

\path[->]
(is0) edge [right,bend left] node {$2x$} (is1);

\end{tikzpicture}

\subsection*{{\color{red}Information Theory Creature}}   
\centering
\begin{tikzpicture}
[inner sep=.5mm,node distance=2.5cm,auto, line width=1pt, >=stealth,
statem/.style={circle, shade, top color=white,
    bottom color=blue!50!black!20, draw=blue!40!black!60, very
    thick,minimum size=5mm},
statema/.style={circle split, shade, top color=white,
    bottom color=blue!50!black!20, draw=red, very
    thick,minimum size=5mm}
]

\node[statem] 	(iss)	at (0,0)			{$\ISm{*}$};
\node		(start) at (-1,0)			{Start};
\node[statem] 		(is3) [below of=iss] 		{$\ISm{3}$};
\node[statem] 		(isi) at (0,1.75) 		{$\ISm{I}$};
\node[statem] 		(is1) [right of=iss] 		{$\ISm{1}$};
\node[statem] 		(is2) [right of=isi] 		{$\ISm{2}$};
\node[statema] 		(is0) [below of=is1] 		{$\ISm{0}$ \nodepart{lower} {$1$}};

\path[->]
(start) edge[shorten >=1pt] (iss)
(iss) edge node[pos=.3] {$\frac{4}{N}$} (is2)
(iss) edge node {$\frac{4}{N}$} (is1)
(iss) edge node[pos=.3] {$\frac{2}{N}$} (is0)
(iss) edge node {$\frac{N-10}{N}$} (is3)
(is3) edge node{$1$} (is0);

\path[->]
(is0) edge [loop below] node {1} ()
(is1) edge [bend left] node {1} (is0)
(is2.east) edge [bend left] node {1} (is0.east);

\end{tikzpicture}

\subsection*{{\color{blue}Maximum Likelihood Creature}}
\centering
\begin{tikzpicture}
[inner sep=.5mm,node distance=2.5cm,auto, line width=1pt, >=stealth,
statem/.style={circle, shade, top color=white,
    bottom color=blue!50!black!20, draw=blue!40!black!60, very
    thick,minimum size=5mm},
statema/.style={circle split, shade, top color=white,
    bottom color=blue!50!black!20, draw=blue, very
    thick,minimum size=5mm},
crossline/.style={preaction={draw=white, -,
line width=6pt}}
]

\node[statem] 	(iss)	at (0,0)			{$\ISm{*}$};
\node		(start) at (-1,0)			{Start};
\node[statem] 		(is3) [below of=iss] 		{$\ISm{3}$};
\node[statem] 		(isi) at (0,1.75) 		{$\ISm{I}$};
\node[statema] 		(is1) [right of=iss] 		{$\ISm{1}$\nodepart{lower} {$\frac{2}{4 + x}$ }};
\node[statema] 		(is2) [right of=isi] 		{$\ISm{2}$\nodepart{lower} {$\frac{2x}{4 + x}$ }};
\node[statema] 		(is0) [below of=is1] 		{$\ISm{0}$\nodepart{lower} {$\frac{2-x}{4 + x}$ }};

\path[->]
(start) edge[shorten >=1pt] (iss)
(iss) edge node[pos=.3] {$\frac{4}{N}$} (is2)
(iss) edge node {$\frac{4}{N}$} (is1)
(iss) edge node[pos=.3] {$\frac{2}{N}$} (is0)
(iss) edge node[left] {$\frac{N-10}{N}$} (is3)
(is3) edge[crossline] (is1)
(is3) edge node[pos=.3]{$x$} (is1)
(is3) edge node{$1-x$} (is0);

\path[->]
(is0) edge [loop below] node {$1-2x$} ()
(is1) edge [left,bend left] node {$x$} (is0)
(is2.east) edge [bend left=60] node[pos=.85] {$1-x$} (is0.east);

\path[->]
(is1) edge [loop right] node [pos=.15, above] {$1-2x$} ()
(is0) edge [right, bend left] node {$2x$} (is1)
(is2) edge [bend left] node {$x$} (is1)
(is2) edge [bend left] node {$x$} (is1);

\path[->]
(is1) edge [bend left] node {$x$} (is2);

\end{tikzpicture}

\end{multicols}

\caption[Strategy Transitions]{Transition Graphs for each of the strategies.  Top left panel recapitulates Figure \ref{fig:strategies}.  Other panels depict transitions between information states for each strategy as labeled.  Each creature begins in state $\ISm{*}$.  Arrow labels indicate transition probabilities.  Large colored circles  denote positive recurrent states for each strategy, with expressions provided for the equilibrium probability of occupying the state. Note the sum of transition probabilities  on each set of arrows pointing away from a given node sum to one.  For example, the creature always starts in state $\ISm{*}$, the uniform distribution, representing complete ignorance of the injector's location.  From this state it moves to state $\ISm{2}$ with probability $4/N$, to state $\ISm{1}$ with probability $4/N$, to state $\ISm{0}$ with probability $2/N$, and to state $\ISm{3}$ with probability $(N-10)/N$.  These transitions depend only on making the first observation, and are the same for each strategy.  Subsequent transition probabilities differ according to the strategy employed, which determines how the creature moves relative to the source.}\label{fig:strategy_transitions}
\end{figure}
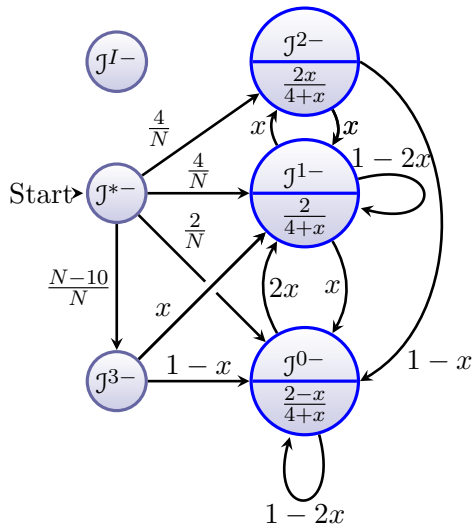

\subsection{Strategies Compared}
\label{sec:asymptotic}


\begin{figure}
\label{fig:strategy_information}
\centering
\begin{tikzpicture}[domain=.25:.33, xscale=100, yscale=3]
    \draw[->] (.25,2) -- (.292,2) node[below] {Source Mobility} --(.333,2) node[right] {$x$};
    \draw[->] (.25,2) -- (.25,3) node[above] {$H(S;\Hist{})$ (bits)};
    \draw (.25,2) -- (.25,1.95) node [below] {1/4};
    \draw (.33,2) -- (.33,1.95) node [below] {1/3};
    \draw (.25,2) -- (.248,2) node [left] {$2$};
    \draw (.25,2.8) -- (.248,2.8) node [left] {$2.8$};
    \draw[color=red] plot file {IITC} 
        node[right] {ITC};
    \draw[color=darkgreen] plot file {ImMLC} 
        node[right] {mMLC};
    \draw[color=blue] plot file {IMLC} 
        node[right] {MLC};
\end{tikzpicture} 
\caption[Expected Information]{The expected information for each strategy as a function of $x$, the source mobility parameter.   Larger values of  $x$ correspond to greater unpredictability of future source locations. Each curve represents the expected decrease in uncertainty about the source position when conditioned on the creature's history and observations, compared with the underlying uncertainty in the absence of any observations  ($\log_2(N)$ bits), for the different strategies considered.
As expected, the movement strategy based on information maximization (ITC, Information Theory Creature) dominates the maximum likelihood based strategy (MLC, Maximum Likelihood Creature) and the hybrid strategy (mMLC, modified Maximum Likelihood Creature) over the entire range of $x$ considered.  
Also as expected, the information about the source position relative to the creature's position that is gained through observation decreases as the source's inherent unpredictability increases, \textit{i.e.}~all three curves have strictly negative slope.}
\end{figure}
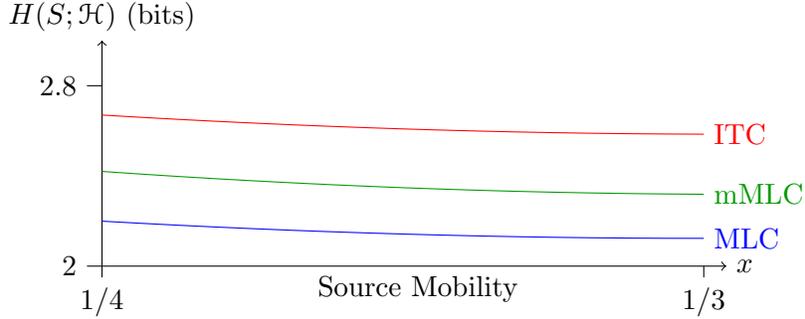
The information a creature has about the injector's location given an information state is calculated in Lemma \ref{lem:infomaxstrat}. Using the expected time spent in each state for a given strategy\footnote{The fraction of time a creature spends in a given information state is exactly the equilibrium probability distribution, derived in Appendix \ref{app:strategytransitionmatrices}. See also Figure  \ref{fig:strategy_transitions}, Table 2.2 
 and Figure \ref{fig:fraction-of-time}).} the expected information for each strategy is shown in Figure 2.9. 
This figure shows the strict ranking of information theoretic performance between the strategies; namely, maximum likelihood has the least, the hybrid strategy is in the middle, and Infomax is the best. 
This result, which is not unexpected, is easily explained by examining the fraction of time a creature spends in a given information state.  The uncertainty about the source's location as quantified by the entropy of the source location distribution (see Table \ref{tab:entropies}) is strictly increasing as a function of both $x$ and the information state index $k$.  The creature in information state $\ISp{0}$ knows exactly the current location of the food source; creatures in $\ISp{1}$ and $\ISp{2}$ have progressively more uncertainty. The information-seeking creature is able to remain in state $\ISp{0}$ on every time step.  
In Lemma \ref{lem:infomaxstrat} we prove that the Infomax strategy is information optimal, in the sense that the creature following the Infomax strategy always has the least uncertainty, \textit{i.e.}~the most information, about  the source's location.
For all three strategies as the source mobility  $x$ increases, the creature's uncertainty about the source location gradually grows, as seen in the decrease of the mutual information in Figure 2.9. 
Table 2.2 
shows the fraction of time spent by each strategy in each information state; Figure \ref{fig:fraction-of-time} displays this comparison graphically.

\begin{table}\label{tab:fraction-of-time}
\centering
\renewcommand{\arraystretch}{1.5}
\begin{tabular}{c|c|ccc|ccc}
$k$ & ITC & $x\to 1/4$ & Hybrid & $x\to 1/3$ &$x\to 1/4$ & MLC & $x\to 1/3$\\ 
0 & 1 & $\frac{2}{3}\le$ & $\frac{1}{1+2x}$ & $\le \frac{3}{5} $ & $\frac{7}{17}\approx 0.41 \le $& $ \frac{2-x}{4+x}$ & $ \le \frac{5}{13}\approx 0.38 $
\\
1 & 0 &  $\frac{1}{3}\le$ & $\frac{2x}{1+2x}$ & $\le\frac{2}{5}$ & $\frac{8}{17}\approx 0.47\le$ & $\frac{2}{4+x}$ & $\le\frac{6}{13}\approx 0.46$
\\
2 & 0 & & 0 & & $\frac{2}{17}\approx 0.12\le$ & $\frac{2x}{4+x}$ & $\le\frac{2}{13}\approx 0.15$
\\
3+& 0 & & 0 & & & 0 & 
\end{tabular}
\renewcommand{\arraystretch}{1}
\caption[Fraction of time spent in each information state.]{Fraction of time spent in each information state for each strategy.  Each row indicates the fraction of time $\pi_k$ the  creature spends in the corresponding information state as a function of $x$, and the lower and upper limits of $\pi_k$ viewed as a function of $x$. The Information Theory Creature (ITC) remains permanently in the maximally informed state, $k=0$.  The Hybrid creature (mMLC) spends twice as much time in state $0$ when $x$ is small as it does in the less informed state $1$.  As $x$ increases from $x\approx 1/4$ to $x\approx 1/3$, the creature spends only 50\% as much time in state $0$ as in state $1$.   The pure MLC creature always spends \emph{more} time in state $0$ than in state $1$, and an increasing amount of time in state 2 as $x$ increases. 
 Compare Figure \ref{fig:fraction-of-time}.
}
\end{table}

\begin{figure}[htbp] 
   \centering
   \includegraphics[width=6in]{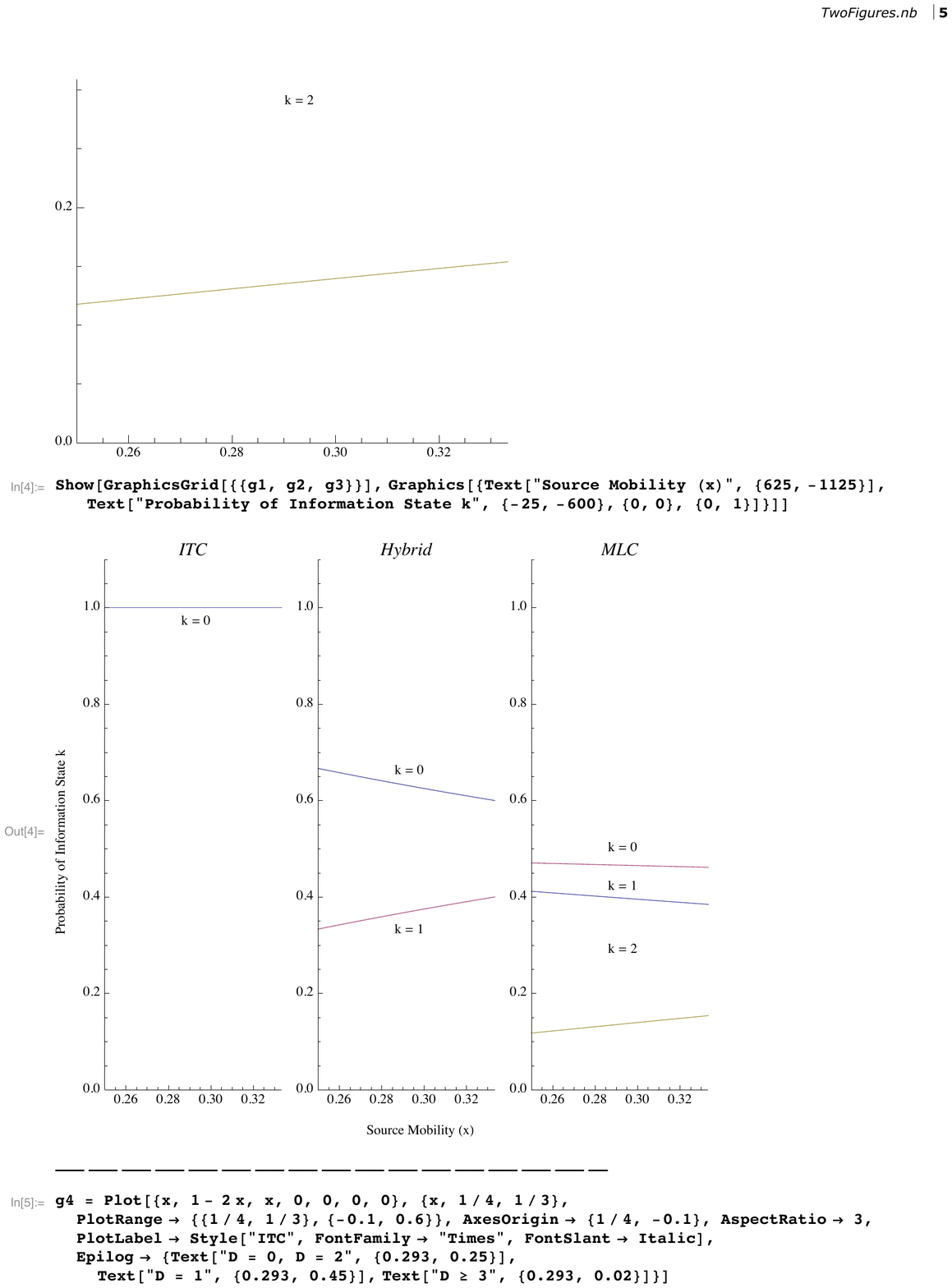} 
 \caption[Fraction of time spent in each information state.]{Fraction of time spent in the $k^{th}$ information state for each strategy, as a function of the source mobility $x$.  In state $\ISp{k}$ the food source is known to be at position $e\pm k$ for some location $e$. \textit{ITC}, Information Theory Creature.  The creature following this algorithm remains in information state zero for all time after an initial transient. \textit{Hybrid}, modified Maximum Likelihood Creature.  The creature following this algorithm asymptotically spends all of its time in information states zero and one.  The fraction of time spent in the lowest entropy state (state zero) decreases as the source mobility increases.  \textit{MLC}, Maximum Likelihood Creature.     The creature following this algorithm asymptotically spends its time in information states zero, one and two. Compare Table 2.2.}   
\label{fig:fraction-of-time}
\end{figure}

Figure \ref{fig:strategy_food} shows the expected food gain for each strategy.
Unlike the expected information, \emph{the expected food shows no consistent ranking between strategies.} For some choice of source mobility in the range $1/4<x<1/3$ each strategy outperforms the other two.   In particular, the infomax strategy provides a better average food intake than either the maximum likelihood or the hybrid strategy, when the source is the most unpredictable ($x\lesssim 1/3$), but fails to provide the best performance as the next source location becomes easier to predict, given the current location.


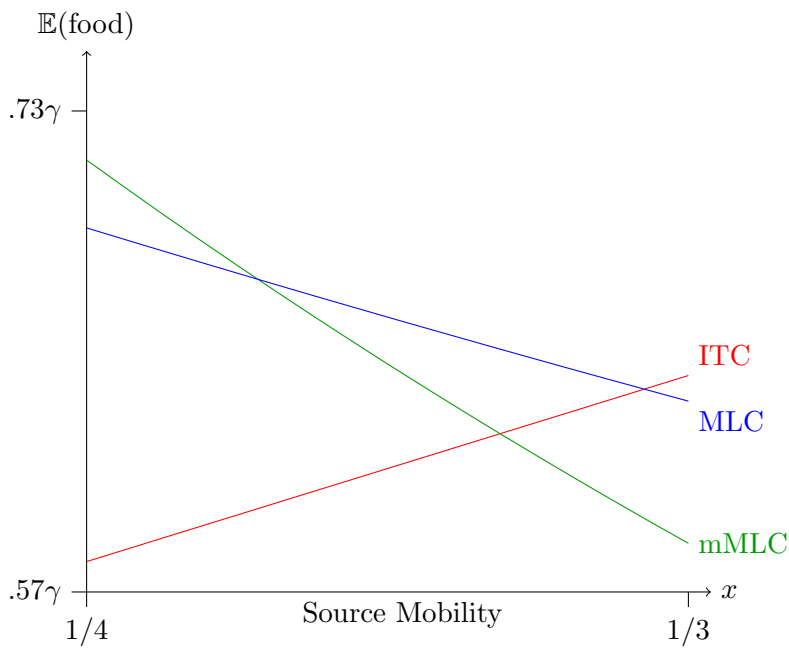
\begin{figure}
 \centering
     \begin{tikzpicture}[domain=.25:.33, xscale=100, yscale=40]
    \draw[->] (.25,.57) -- (.292,.57) node[below] {Source Mobility} -- (.333,.57) node[right] {$x$};
    \draw[->] (.25,.57) -- (.25,.75) node[above] {$\E(\text{food})$};
    \draw (.25,.57) -- (.25,.565) node [below] {1/4};
    \draw (.33,.57) -- (.33,.565) node [below] {1/3};
    \draw (.25,.57) -- (.248,.57) node [left] {$.57\gamma$};
    \draw (.25,.73) -- (.248,.73) node [left] {$.73\gamma$};
    \draw[color=red] plot file {EFITC} 
        node[above right] {ITC};
    \draw[color=darkgreen] plot file {EFmMLC} 
        node[right] {mMLC};
    \draw[color=blue] plot file {EFMLC} 
        node[below right] {MLC};
\end{tikzpicture}           
\caption[Expected Food]{The expected food for each strategy as a function of the source mobility parameter $x$. (ITC, Information Theory Creature.  MLC, Maximum Likelihood Creature. mMLC, modified Maximum Likelihood Creature, or Hybrid.)  Expected food is calculated as described in Section \ref{sec:markov}.  Each strategy is superior to the other two (in terms of long-term food gain) for some range of source mobility.  
Over the range considered ($1/4<x<1/3$), the infomax strategy dominates only in a narrow range corresponding to relatively random motion of the source.  
The maximum likelihood based strategy dominates over a broad range of mobilities, but for the lowest mobilities the hybrid strategy is the most successful.
Compare Figure 2.9.}
\label{fig:strategy_food}
\end{figure}

It is tempting to explain this result by noting that when $x\lesssim 1/3$, the food source itself generates an ensemble of trajectories that has a higher entropy generation rate then when $x$ is smaller.  Perhaps these more ``information rich" trajectories somehow cause a creature focused on \emph{information about} the food source to outperform those focused only on \emph{colocation with} the source.  Unfortunately such qualitative reasoning would fail to explain why, for intermediate values of the source mobility, it is the maximum likelihood creature that gains more food, not the hybrid  that mixes the ML and IT strategies.   
Also puzzling is the trend of the ITC curve. It is not surprising that the performance of the MLC and mMLC strategies decreases as the unpredictability of the source increases, but why should the ITC performance curve have a \emph{positive} slope?

To understand more fully how ``information'' and ``utility'' intertwine to determine the performance of each strategy, we may consider the transition probabilities for the Markovian random walk on the set of states \emph{jointly} describing the creature's information state and its actual distance from the food source.  In a sense this takes us a step backwards as we ``un-coarse-grain" the system.  With the benefit of hindsight, we take advantage of the fact that for each strategy, the  underlying system admits a Markovian coarse graining in terms of the information states alone.
Equations (\ref{eq:joint-trans-ITC}-\ref{eq:joint-trans-MLC}) in Appendix \ref{app:joint} provides the transition probabilities for joint (distance, information) states for each strategy.  
Examination of these transition probabilities shows that two of the strategies (ITC and MLC) admit an exact coarse graining in terms of the distance to the source.  
Table 2.3 
provides the resulting distribution of time spent at each distance from the source, for each strategy; these results are plotted in Figure \ref{fig:distance-probs}.

\begin{table}\label{tab:distance-probs}
\centering
\begin{tabular}{r|rrr}
Distance & ITC & Hybrid & MLC \\ \hline
0 & $x$ 	& $(1-x)^2/(1+2x)$ 	& $(2-2x)/(4+x)$ \\
1 & $1-2x$ 	& $(3x-2x^2)/(1+2x)$ 	& $2/(4+x)$ \\
2 & $x$ 	& $2x^2/(1+2x)$ 	&  $2x/(4+x)$ \\
3 & 0 		& $(x-2x^2)/(1+2x)$ 	&  $x^2/(4+x)$ \\
4 & 0 		& $x^2/(1+2x)$ 		&  $(x-2x^2)/(4+x)$ \\
5 & 0 		& 0 			&  $x^2/(4+x)$ \\
$\ge 6$ &  0 	& 0 			& 0
\end{tabular}
\caption[Fraction of time spent a given distance from the source.]{Fraction of time spent by each creature a given absolute distance from the food source, as a function of the mobility parameter $x$. ITC, Information Theory Creature; Hybrid, modified Maximum Likelihood Creature; MLC; Maximum Likelihood Creature.  Compare Figure \ref{fig:distance-probs}. }
\end{table}

\begin{figure}[htbp] 
   \centering
   \includegraphics[width=6in]{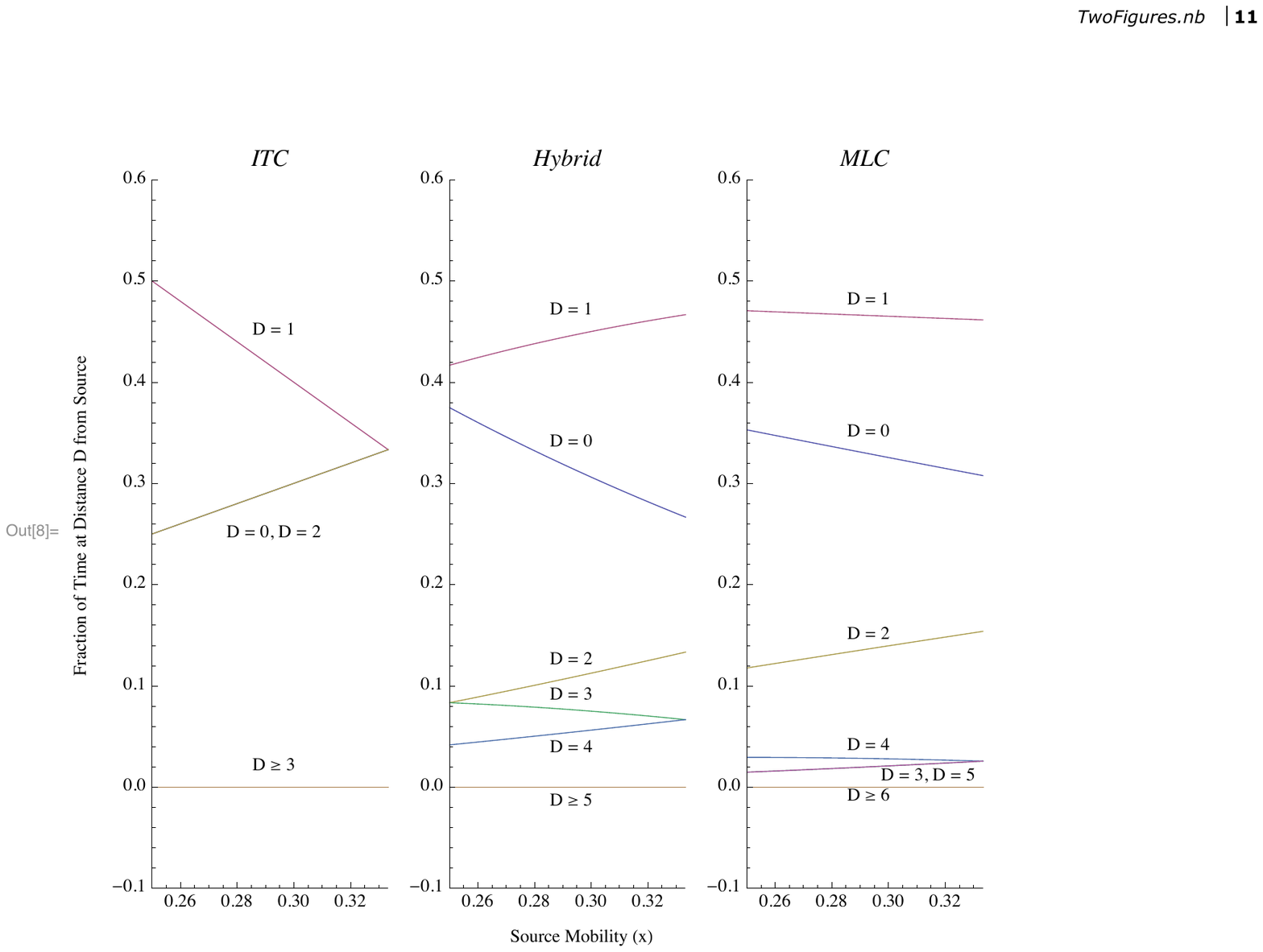} 
   \caption{Fraction of time spent by the creature a given absolute distance from the food source for each strategy, as a function of the mobility parameter $x$.  \textit{ITC}, Information Theory Creature.  Under this algorithm, the creature spends equal amounts of time at distance $D=0$ and $D=2$ from the source. The amount of time at the highest food concentration ($D=0$) increases as  the mobility parameter $x$ increases, while the amount of time spent one step away from the source ($D=1$) decreases. The probability of the ITC being a distance $D\ge 3$ from the source is zero, after the initial transient. \textit{Hybrid}, modified Maximum Likelihood Creature.  As $x$ increases, the mMLC spends more time at distance $D=1$ and less time at distance $D=0$.  The probability of the mMLC being a distance $D\ge 5$ from the source is zero, after the initial transient.   \textit{MLC}, Maximum Likelihood Creature.  As $x$ increases the MLC spends slightly less time at distances $D=1$ and $D=0$, and more time at distance $D=2$.   The probability of the MLC being a distance $D\ge 6$ from the source is zero, after the initial transient. Compare Table  2.3.}
   \label{fig:distance-probs}
\end{figure}

The expected food obtained under a given strategy is assumed to be the weighted average of the food available at each distance, weighted by the probability of being that distance from the source.  The food distribution for model II, as derived in Appendix \ref{app:food} (see also Figure \ref{fig:food-distribution}) is strictly decreasing in $d$.  Therefore the best conceivable performance would be for the creature to remain at zero distance from the source at all times.  The unpredictability of the source motion makes this impossible, and consequently the strategies vary in the fraction of time spent at distance $D=0$, as  a function of the source mobility parameter $x$.  As $x$ increases, both the Max-Likelihood and the Hybrid creatures spend less and less time colocalized with the source, while probability of $D=0$ increases for the Infomax strategy.  

The increase of utility for the ITC with increasing $x$ therefore has a simple, intuitive explanation.  The ITC acts in a way that takes into account the actual amount of food to be had at each destination only in a secondary fashion.  Aiming directly for the source's last location will lead, with probability $2x$, to a one bit increase in uncertainty about the source's location on the subsequent time step.  Therefore the IT creature will always choose to miss the most likely location by a single step.  It takes the second best position, from the point of view of food collection, in order to have the best position for continued certainty about the source's trajectory.    This strategy is a poor choice on its face, because the only way the creature will collect better than the second highest food amount is if it gets ``lucky" and the next move of the source happens to take it to the creature's location by chance.
The probability of colocation if the creature moves to a location one step to the right or left of the source's last known position is exactly the source mobility $x$.  Hence as $x$ increases the probability of collocation for the IT creature grows rather than decreases.  

Quantitatively, the expected food obtained by following the infomax strategy is straightforward to obtain.  If the food obtained a distance $d$ from the source is $\lambda_d$, then the expected food for the ITC may be written in terms of the second difference (discrete second derivative) of the food distribution:
\begin{eqnarray}
\E_{\mbox{ITC}}(\mbox{food})&=&x\lambda_0+(1-2x)\lambda_1+x\lambda_2\\
&=&\lambda_1+x(\lambda_0-2\lambda_1+\lambda_2) 
\end{eqnarray}
Therefore the increase in the expected food gained under the infomax strategy directly reflects the upwards concavity of the food distribution curve as shown in Figure \ref{fig:food-distribution}.  

To determine the effect of changing the shape of the food distribution curve about the source, we considered several generic food distribution functions $F(d)$, given in Table 2.4. 
\begin{table}\label{tab:food-distributions}
\begin{center}
\begin{tabular}{|cc|}\hline 
Type & $F(d), 0\le d \le 5$ \\ \hline
Concave Up & $(5-d)^2$\\
Concave Down & $25-d^2$\\
Linear & $25-5 d$\\
Delta Function &$25\, \delta(d)$\\ \hline
\end{tabular}
\caption[Generic dependence of food on distance.]{Examples of different shapes of food distribution functions: concave up, concave down, linear and delta function.  Each function has a nominal peak value of $F(0)=25$; each is taken to be zero for $d\ge 5$.  Dependence of expected food gained for each strategy for each function shown is plotted in Figure \ref{fig:food-distributions}.}
\end{center}
\end{table}
Three distributions are chosen representing ``concave up'', ``concave down'', or linear.  Each has constant second difference over the relevant distance range, respectively positive, negative, or zero.  Each is monotonically decreasing as a function of distance, with a peak at $d=0$.  Because the recurrent states only occupy distances in the range $0\le d \le 5$, we set our trial food distribution functions to zero for $d\ge 5$.  We also consider an all or none dependence of food gained on distance to the source (``delta function'' in Table 2.4). 
In the latter case food \emph{signaling} is decoupled from food \emph{obtained}.  Here it is assumed that the creature can determine the distance to the source by detecting the local concentration of a signal (visually or \textit{via} olfaction, for instance), but only captures the food through colocation.  This situation would be familiar to any predator. 

Figure \ref{fig:food-distributions} illustrates the results.   As anticipated, the slope of the expected food gain under the infomax strategy has the same sign as the second difference of the food distribution.
More surprisingly, the topology of the relative rankings of strategies as a function of source mobility appears to be relatively insensitive to the concavity of the food distribution. For the four cases  examined (Table 2.4) 
 the rankings appeared in the same order, with the hybrid strategy dominating the others for the lowest source mobilities, the maximum likelihood strategy dominating for intermediate values and the ITC dominating for the highest source mobilities.\footnote{The topology of relative rankings is not \emph{universally} invariant with respect to the food distribution, however. For instance, choosing $\lambda_0=5,\lambda_1=4,\lambda_2=1,\lambda_{k\ge 3}=0$ results in $\E_{ITC}(\mbox{food})<\E_{MLC}(\mbox{food})$ when $x=1/3$.}   The combined plots of strategy performance show the same ``triangle'' present for the distribution considered originally, compare Figure \ref{fig:food-distributions} and Figure \ref{fig:strategy_food}.\footnote{In order to see the full topology of the curves, we plot the expected food over a range of mobility values extending to lower mobility than the $x=1/4$ cutoff used in the rest of the paper.  For a discussion of the applicability of the different movement algorithms outside the range $1/4<x<1/3$, see Section \ref{ssec:strat-vs-alg}.}


\begin{figure}[htbp] 
   \centering
   \includegraphics[width=6in]{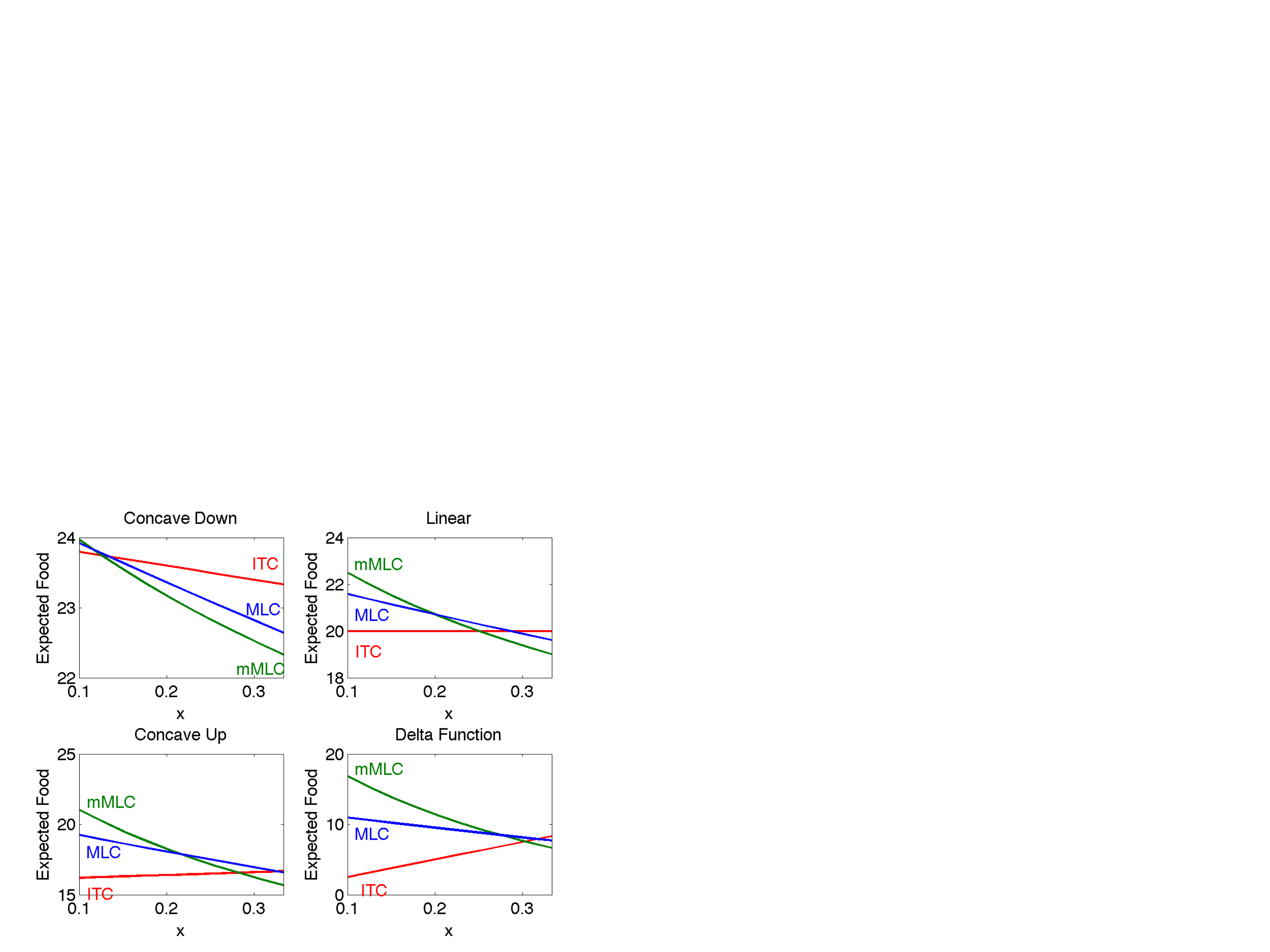} 
   \caption[Persistence of the ranking across food distributions]{Persistence of the ranking vis-a-vis expected food gathered across different food distribution shapes. Each plot shows expected food gained, in nominal units, versus source mobility on an expanded scale $1/10<x<1/3$.  Food gained as a function of distance from the source, $F(d)$, was taken to have constant second difference in three generic cases.  The nominal food function $F(d)$ was set to zero for $d\ge 5$ for this comparison.  Concave Down:  $F(d)=25-d^2$. Linear: $F(d)=25-5d$. Concave Up: $F(d)=(5-d)^2$.  Delta Function: $F(d)=25$ if $d=0$, otherwise $F=0$.  In each case, there is an upper range in which the Infomax movement algorithm (red curve) dominates, a middle range in which the Max-Likelihood movement algorithm (blue curve) dominates, and a lower range in which the Hybrid movement algorithm (green curve) dominates. In each case the curve for the Infomax algorithm is a straight line with slope either positive (for ``Concave Up'' and ``Delta Function'' food distributions), negative (for ``Concave Down'' food distribution) or zero (for ``Linear'' food distribution).}
   \label{fig:food-distributions}
\end{figure}

\subsection{Distance Traveled}

The distance traveled in pursuit of a strategy is another factor that might impact an organism's survival.  While detailed consideration of metabolic constraints would go beyond the scope of this paper, the models considered do lead to simple analytic expressions for two quantities of potential relevance in this context. In Appendix \ref{app:distance-traveled} we indicate how one calculates the expected distance moved per time step by a creature following each of the strategies, as well as the probability of movement per time step (as opposed to the probability of remaining motionless).  The average distance moved could be related directly or indirectly to a metabolic movement cost.  The probability of movement could also be related to a cost for initiating movement, such as the effort a foraging bird might have to exert in order to ascend to a given altitude, after which soaring arbitrary distances might incur relatively little additional metabolic cost, or to the differential risk of predation during a motile \textit{versus} a sessile phase of activity.  
\begin{table}\label{tab:distance-traveled}
\begin{center}
\renewcommand{\arraystretch}{2}
\begin{tabular}{lcccccr}\hline
&MLC&&ITC&&mMLC\\ \hline
$\E\left[\left|C_{t+1}-C_t\right|\right]$ & $\frac{8x}{4+x}$ & $<$ & $2x$ & $<$ & $\frac{6x}{1+2x}$ & (for $0<x\le1/2$)\\
$\Pr\left[C_{t+1}\ne C_t\right]$ & $\frac{4x}{4+x}$ & $<$ & $2x$& $<$ & $\frac{x(4-x)}{1+2x}$ & (for $0<x<0.4$)\\ \hline
\end{tabular}
\renewcommand{\arraystretch}{1}
\end{center}
\caption{Mean distance traveled and probability of moving per time step.  MLC: Maximum Likelihood Creature, ITC: Information Theory Creature, mMLC: modified Maximum Likelihood Creature.  
The mean, or expected distance traveled in one time step is $\E\left[\left|C_{t+1}-C_t\right|\right]$.  The probability of moving on a given time step is the probability that $C_t$ and $C_{t+1}$ take different values, \textit{i.e.}~$\Pr\left[C_{t+1}\ne C_t\right]$.  The last column shows the range of validity of the inequalities.  For the second row, the inequality MLC $<$ ITC holds for all $x>0$, but the inequality ITC $<$ mMLC reverses when $x>2/5$.   Compare Figure \ref{fig:distance-traveled}.}
\end{table}
\begin{figure}[htbp] 
   \centering
\includegraphics[width=6in]{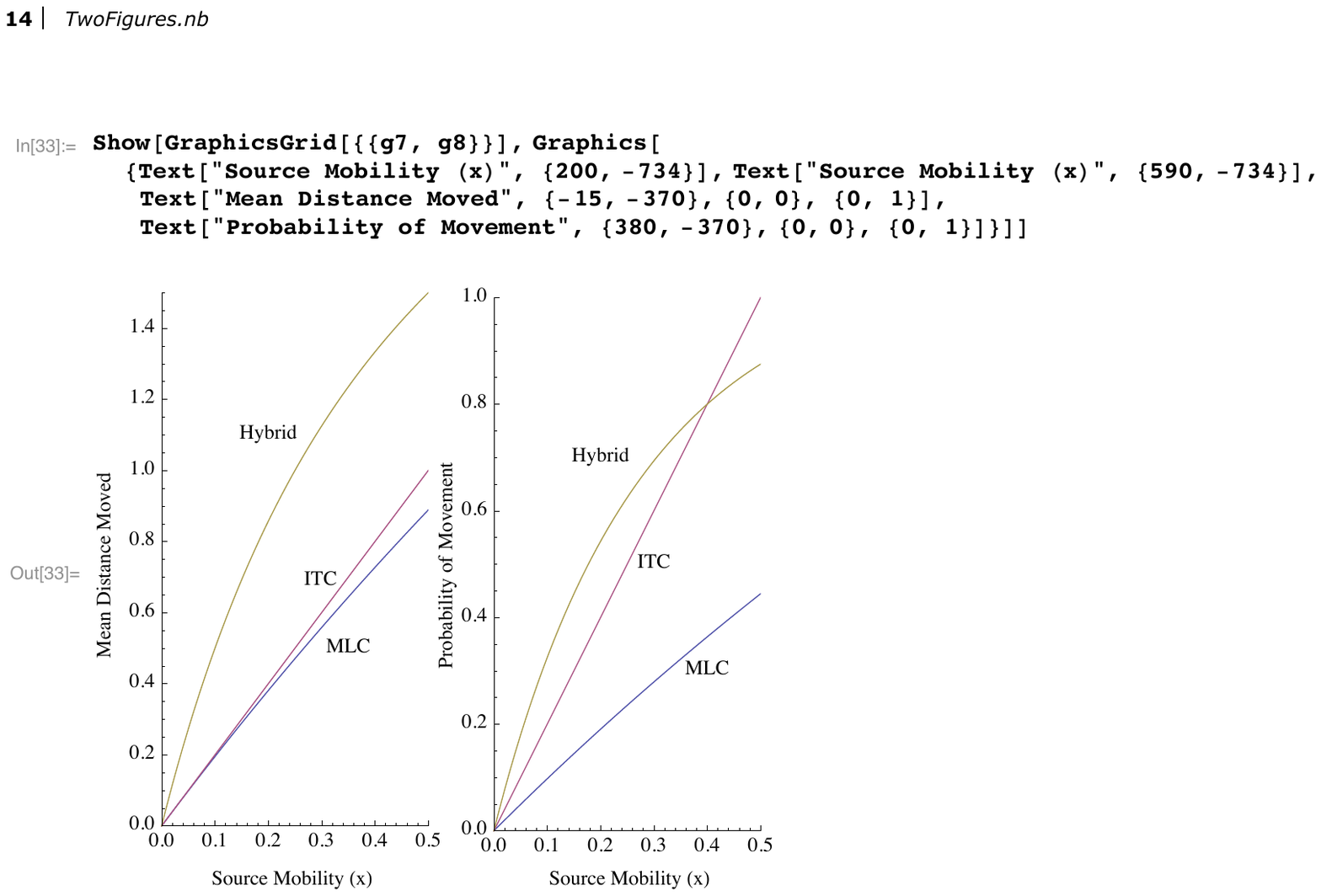}
   \caption{Mean distance traveled and probability of moving per time step. MLC: Maximum Likelihood Creature, ITC: Information Theory Creature, mMLC: modified Maximum Likelihood Creature.  \textbf{Left Panel:} the average distance moved per time step as a function of the mobility parameter $x$.  \textbf{Right Panel:} the probability of moving (as opposed to remaining in place) on each time step as a function of the mobility parameter $x$.  
Over the range $1/4<x<1/3$ the movement algorithms strictly correspond to the movement strategies.   Within this range, the MLC moves less frequently and moves a shorter distance on average than the ITC or the Hybrid creature.   However, the mean distance moved by the MLC is only slightly less than that moved by the ITC.  Compare Table 2.5.}
   \label{fig:distance-traveled}
\end{figure}

As Table 2.5 indicates, the average distance moved per time step by a creature following the Max-Likelihood strategy is strictly less than that moved by the Infomax creature, which in turn is less than the mean distance moved by the Hybrid creature.  Similarly, the probability of movement (the probability that $C_{t+1}-C_t\ne 0$) is least under the MLC strategy and greatest under the Hybrid strategy.  Regardless of the detailed nature of the movement cost, the maximum likelihood algorithm would have the greatest advantage in terms of movement efficiency. If resource storage or resource utilization were limiting factors, this might lead to selection against the pure information maximization strategy in favor of the maximum likelihood strategy. 
Likewise, exposure to predation could also impose a selective cost that might further favor the ML creature over the ITC and especially the Hybrid creature.  However it should be noted that as shown in Figure \ref{fig:distance-traveled}  the advantage enjoyed by the MLC in terms of mean distance traveled is very slight compared to the ITC, whereas the absolute difference in the probability of moving is more significant.  Together our results reinforce the conclusion that while information efficiency is surely one important factor in the evolution of adaptive behavior, the significance of information efficiency relative to other considerations is nuanced and context dependent.  

\section{Discussion and Conclusions}\label{sec:conclusion}
\subsection{Summary \& Conclusions}

We developed an analytically tractable model for a particular instance of a perception-action loop: a creature searching for a wandering food source confined to a one-dimensional ring world.  The model encompassed the statistical structure of the possible observations, the effects of the creature's actions on that structure, and the creature's strategic decision making. Although the underlying model took the form of a Markov process on an infinite dimensional state space, we successfully devised an exact coarse graining that reduced the model  to a Markov process on a finite number of ``information states".  This technique allowed us to make quantitative comparisons between the performance of an information-theoretically optimal strategy with other candidate search strategies.  

From the results shown in Figures 2.9 
 and \ref{fig:strategy_food} in Section \ref{sec:asymptotic} we conclude for the simple Distance-Certain model system (Model II) that:
\begin{enumerate}
\item Information optimal search does not necessarily optimize utility.
\item The rank ordering of strategies by information performance does not predict their ordering by expected food obtained, \textit{i.e.}~there is no  simple tradeoff between information and utility.
\item The relative advantage of different strategies depends on the statistical structure of the environment, in particular the variability of motion of the source:
\begin{enumerate}
\item The likelihood maximization strategy outperformed the other two strategies considered (in terms of utility) for a wide range of source mobility.  
\item For the lowest source mobilities considered ($x\gtrsim 1/4$), the hybrid strategy was superior to the other two.
\item The information maximization strategy dominated the other two in a narrow range at the highest source mobility considered ($x\lesssim 1/3$).
\end{enumerate}
\end{enumerate}
Consequently we conclude that the strong hypothesis,   ``behavioral optimality implies information optimality" is false, at least in this model system.  

More broadly, our results suggest that while information efficiency is surely one important factor in the evolution of adaptive behavior, the significance of information efficiency relative to other factors such as expected food gain or expected distance traveled is complex, and needs to be defined through a careful analysis of the organism and its environment.  Rather than proposing to replace the strong hypothesis of information efficiency with an equally sweeping general hypothesis, the model developed here points the way towards a more systematic analysis of the interplay between environmental structure and uncertainties, a creature's needs and means of satisfying them, and the various senses in which it can do so efficiently.  

The difficulties inherent in analyzing systems in which the dependencies of action on sensation and sensation on action form an apparently intractable closed loop are well established.  
In addition to establishing a counterexample against the strong hypothesis, our work also leads to a positive conclusion: that  
despite these difficulties, one may make progress in understanding the interplay of information processing and utility for survival in behavioral systems by embedding a creature and its world together in a combined Markov process.  As we point out in Section \ref{ssec:future} below, this approach can lead to numerous extensions of the present work.


%

\subsection{Limitations of the Model}
\label{ssec:limitations}

How do the simplifications required to obtain a tractable model limit the scope of our conclusions?

The simple model analyzed here incorporates observation, action, and the effects of action on future observations in a nontrivial way to form a closed action-perception cycle.  The key aspect of the model we construct is that the system comprising the ``creature'' and the ``world'' together constitute a discrete state, discrete time Markov chain amenable to analysis using standard probabilistic tools.
One may readily conceive of more elaborate models that include additional elements such as: the cost of memory and computation required to implement a given strategy, imperfect observations, imperfect knowledge of the world, interaction of more than one signal or more than one type of nutrient, constraints such as mortality after excessively long periods without food, goals such as accumulating a sufficient intake of food to achieve reproductive success, and metabolic or predation related movement costs.  

Would any of these additional considerations weaken the conclusion that the model creatures considered here establish a counterexample to the strong hypothesis?  On the contrary.  In the model used here, the information maximization creature has been given all the possible advantages: perfect information about the operating conditions of the world, perfect measurement capability, exact representation of its internal state (probability distribution for the source location), the ability to move arbitrary distances without cost.  If it does not win the competition with all the winds in its favor, it should not stand a chance under more realistic conditions either.  In particular, the addition of memory or computational costs for maintaining the internal state should not affect one creature more than another, as all three strategies are based on the same internal representation of the source probability distribution.  Removing the assumption that the IT creature has perfect knowledge of the world (aside from the source location) and perfect measurement capability would only weaken its performance.  The ITC does not on average move shorter distances or move less frequently than the MLC.  It is likely that more realistic models should reinforce our conclusions.  

We note the significance of the Bayesian update and convolution for predicting the injector's location. Although we assumed a particular geometry for the world (a ring) and a symmetric random walk for the source, our framework would work equally well for a discrete world of any shape where observations (of any kind) are conditionally independent of prior observations given a source's location $S$ and creature's location $C$. For instance, we could consider  2D or 3D worlds, and observations governed by distributions other than Poisson. 
We assumed that the transition matrix $P$ governing movement of the source had translational symmetry, \textit{i.e.}~that the \textit{increments} of the injector's location be independent of the injector's location. However, this assumption could be relaxed without changing the general conclusions of the study. For example, the injector could have landmarks that influence its movement, provided the creature knew their location and influence. In this situation we would still have a Markov process with a transition matrix $P$.  The mapping from PMF for the current location $F^+_t$ to the PMF of the new location $F^-_{t+1}$ would be given by matrix multiplication: $F^-_{t+1}=P \cdot F^+_t$. In this case, the world would not need to be symmetric.

\subsection{Survival Strategies \textit{versus} Movement Algorithms}
\label{ssec:strat-vs-alg}

The movement algorithms are defined in terms of destination given an information state $\ISm{}$.  The algorithms specified in  Figure \ref{fig:strategies} remain well defined over the range $0\le x \le 1/2$.  
However the qualitative properties satisfied by the movement algorithms are only consistent with the given survival strategies (information maximization, likelihood maximization, track and pounce) over narrower ranges of $x$.  The information maximization \emph{strategy} (first maximize information; then maximize expected food among equally informative locations) is consistent with the ITC movement algorithm for all $0<x\le 1/2$.   For $x\equiv 0$, however, the infomax strategy chooses to colocate with the source following the first observation. The movement algorithm for the MLC as defined in Figure \ref{fig:strategies} is consistent with likelihood maximization only when $1/4<x<1/3$.  For $0<x \le 1/4$, the central location is no longer the most likely location in information state $\ISm{1}$.  Consequently a strategy of strict likelihood maximization will no longer remain within information states $\ISm{0}-\ISm{2}$ and the resulting Markov chain will no longer be absorbed by the same positive recurrent set.  The MLC movement \emph{algorithm}, while no longer consistent with strict likelihood maximization outside the range $1/4<x<1/3$, is nevertheless a well defined movement rule and its performance is evaluated in the same fashion as that of the other algorithms in Figure \ref{fig:food-distributions}.  
The hybrid strategy and the hybrid movement algorithm remain consistent over the range $0\le x<1/3$. When $x >1/3$ the MLC, ITC and mMLC strategies coincide with the ITC algorithm.  

\subsection{Relation to Ideal Observer Analysis}

Our analysis of the present model is closely related to Ideal Observer Analysis \cite{GeislerChapter52:2004}. Ideal observer analysis is a framework for quantitative assessment of the task-specific performance of sensory systems  that finds wide use in psychophysics \cite{GreenSwets1966}.  It complements information theory, with which it is sometimes confused; in the theoretical neuroscience literature the two approaches are often assumed to lead to the same conclusions, although there are important examples where they give different results \cite{Thomson+Kristan:2005:NC}.  In the results we show here, both the pursuit of food and the pursuit of information can be construed in terms of ``ideal observers" with differing objectives.  Our results concur with those of \cite{GeislerChapter52:2004,Simoncelli+Olshausen:2001:AnnRevNsci,Thomson+Kristan:2005:NC} and others in emphasizing the importance of task-specific objectives for characterizing ``optimal" behavior of a system; information maximization in the absence of meaningful biological constraints is generally \emph{not} sufficient as an explanatory principle for understanding biological behavior.

\subsection{Future Directions}
\label{ssec:future}

Each of the additional model elements mentioned in Section \ref{ssec:limitations} can be taken into account while preserving the Markovian structure of the models considered.  However, some of them would complicate the model sufficiently to require extensive computational analysis rather than exact treatment along the lines performed here.\footnote{For example the action-perception system studied by Klyubin \textit{et al.}~\cite{KlyubinPolaniNehaniv:2007:NeuralComp} enjoys a Markovian structure, yet required computational rather than direct analysis.}  Whether through analytic or computational means, there is great opportunity for expanding our understanding of the interaction of a creature with a probabilistic environment through embedded Markovian models.  Some directions that are accessible to analysis like that presented here include:
\begin{itemize}
\item Incorporating metabolic thresholds \textit{e.g.}~for starvation (a lower threshold on net metabolic resources) or for reproduction (setting an upper threshold to be reached before the creature can reproduce). Provided the statistical interaction of the creature and its sensorium preserves its Markovian character, the addition of an absorbing boundary allows the system to be analyzed in terms of first passage time distributions, for which there are abundant classical tools available.  Assuming the creature is aware of its own metabolic resources and thresholds and adjusts its strategy accordingly, the creature's internal reserve would become an additional component of its internal state.  
\item Allowing for multiple food types and heterogeneous metabolic resource requirements.  Suppose a creature required two nutrients (``A'' and ``B''), and survival or reproduction depended on avoiding or encountering some boundary in the joint space representing internal reserves of both A and B.  Suppose in addition the creature had limited sensing capacity.  For instance, it might only be able to produce a finite total quantity of cell surface receptors divided between A and B.  Clearly if it were near the threshold for mortality-for-want-of-A, it might choose to reduce the information it took in about B in order to increase its sensitivity to A, despite a net loss in overall information. Within this framework one may study precisely questions about the tradeoffs between net information and ``relevant'' information, a long standing conundrum in the application of information theory to biological systems.  
\item Motile cells and multicellular organisms often navigate by \emph{chemotaxis}, detecting and climbing gradients of signaling molecules carrying information about the location of food or conspecific organisms. 
Chemotaxis based on gradient sensing is an example of a greedy algorithm that locally maximizes the expected rate of gain in the signal. 
As mentioned in the Introduction, Shraiman and colleagues introduced an alternative search strategy, \emph{infotaxis}, that locally maximizes the expected rate of
information gain rather than gain in food or signal concentration in a two dimensional geometry \cite{vergassola2007infotaxis}, and proposed it as an explanation for the zigzagging structure of casting paths observed in the flights of moths pursuing sparsely distributed chemoattractants.  The information maximization strategy we consider here is formally equivalent  to infotaxis.  While we do not focus on the low concentration limit considered in \cite{vergassola2007infotaxis}, it is worth noting that in our system infotaxis outperforms a greedy concentration maximization strategy (the Max-Likelilhood strategy) \emph{only in the case when the source mobility is large, i.e.~the source movement is relatively unpredictable}.  Shraiman and colleagues only considered the case of a stationary source rather than a source performing a random walk.  It would be interesting to  investigate the case of a randomly moving source of a diffusible signal in a two dimensional geometry. 
\end{itemize}
Is there a globally optimal strategy for food gathering for the system considered here?  We have defined a Markovian algorithm as any map from the set of PMFs representing the creature's current information about the source, to the creature's next move. Any such strategy is equivalent to a Markov process on the state space of creature and source locations, observations and PMFs.  Such a process can be described entirely by its transition matrix.  If one considered the transition matrix on information states as a collection of free parameters, one could seek the optimal strategy relative to any objective function (such as maximizing the expected food gained over the long term, or the expected survival duration by some criterion) by numerically or analytically varying the transition matrix.  It would be of great interest to compare the resulting optimal movement algorithm with those defined by heuristic strategies such as the infomax, track-and-pounce, and max likelihood strategies.  As the complications considered above are more fully taken into account it may become possible to generate testable hypotheses about environmental scenarios under which selection pressures might lead to the evolution of distinct strategies, observable in actual organisms, related to the strategies investigated theoretically here.  

\section{Acknowledgments}
PJT acknowledges support of the National Science Foundation, grant DMS-0720142, and research support from the Oberlin College Libraries.  The authors thank K.~Loparo for invaluable advice and discussion and for providing critical comments on a draft of the paper.  The authors also wish to thank M.~Lewicki, E.~Meckes,  B.~Olshausen and T.~Schneider for helpful discussion.  

\appendix


\section{Summary of Notation and Abbreviations}
\label{sec:notation}
Essential notation and abbreviations, in (roughly) order of appearance:

\begin{tabular}{ll}\hline
$N$ & Number of locations in the ring world.\\
$\ZN$ & The ring world, $\{0,1,\cdots,N-1\}$, with addition mod $N$.\\
$t$& Time: $t\in\{0,1,2,\cdots\}$.\\
$S_t$ & Source location at time $t$. \\
$C_t$ & Creature location at time $t$.\\
PMF & Probability Mass Function. \\
$P, P_x$ & PMF describing movement of the source.\\
$x$ & Source mobility parameter, $1/4<x<1/3$.\\
$\pmf$& Set of PMFs on $\ZN$.\\
$\gamma$ & Food injection rate.\\
$\alpha$ & Food decay rate.\\
$y$ & Food molecule transition rate between adjacent nodes (fast timescale).\\
$\lambda_m$ & Mean number of food molecules $m$ steps CW from source.\\
$\Lambda$ & Vector containing mean numbers of food molecules.\\
$Z_t$ & Creature's observation at time $t$: either distance ($D_t$) or number ($M_t$).\\
$\ZZ$ & The space of individual observations, \textit{e.g.}~$M_t\in\mathbb{N}$; $D_t\in\{0,1,\cdots,[[N/2]]\}.$\\
$[[u]]$& The greatest integer less than or equal to a number $u$.\\
$\Hist{t}$ & The creature's history (of locations and observations) as of time $t$.\\
$\allhist$ & The collection of all (infinitely long) histories.\\
$\pmfp{t}$ & PMF of source location at time $t$, given observations through time $t$.  \\
$f^+$ & A particular distribution taken (at random) by some $\pmfp{}$.\\
$\pmfm{t}$ & PMF of source location at time $t$, given observations through time $t-1$.\\
$f^-$ & A particular distribution taken (at random) by some $\pmfm{}$.\\
$\pmfm{0}$ & Initial PMF of source location, before any observations.  $\pmfm{0}\equiv 1/N$.\\
$\ISp{}$ & The set of PMFs for which the injector's location is known, \\
& or is known to be some definite distance $d>0$ from the creature's location.\\
$\ISm{}$ & The set of PMFs obtained from $F\in\ISp{}$ by convolving with $P$.\\
$\ISm{a}$ & Particular information states ($a=0,1,2,3,I$ or $*$).\\
ITC & Information theory Creature: creature following infomax strategy.\\
MLC & Max-Likelihood Creature: creature following likelihood maximization strategy.\\
mMLC & Hybrid or modified Max-Likelihood Creature: creature following hybrid strategy.\\
$\pi$ & A stationary distribution of a Markov process.\\
\hline
\end{tabular}

\section{Food Distribution}\label{app:food}
The distribution of food molecules in Model I is obtained as the steady state of a  rapidly equilibrating discrete time stochastic process. The following food distribution process is assumed to converge to equilibrium much more quickly than the time scale for source or creature movements: (i) Molecules of food enter the world at the location of the food source \textit{via} a Poisson process with mean $\gamma$.  After injection they transition between adjacent nodes with probability $y$.  Molecules leave the world with probability $\alpha$ uniformly in space and time.  We assume both $0<y,\alpha < 1$. 
It can be shown (see \cite{Agarwala2009}, Chapter 3) that 
the molecule counts at each location relative to the injector location are independent and Poisson distributed with means given by a vector $\Lambda$.  We obtain this vector as follows.  Let $Q_y$ and $R_{\gamma}$
be respectively the transition matrix and constant source vector
$$Q_y=\left(\begin{array}{cccc}1-2y & y &&y \\ y&1-2y&& \\&&\ddots&\\y &&&1-2y\end{array}\right)\hspace{1cm}R_{\gamma}=\left(\begin{array}{c}\gamma\\ 0\\ \vdots   \\ 0 \end{array}\right).$$



Then $\Lambda$ satisfies 
$\Lambda=(1-\alpha)Q_y\Lambda+R_{\gamma}.$ Equivalently,
\begin{equation}\label{eq:lambdalemma}
((1-\alpha)Q_y-I){\Lambda}=-R_{\gamma}.
\end{equation}

\begin{lem}\label{lem:lambdaunique}
The matrix $(1-\alpha)Q-I$ is non-singular and hence Equation \ref{eq:lambdalemma} has unique solution $\Lambda$.
\end{lem}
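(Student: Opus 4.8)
The plan is to show that $0$ is not an eigenvalue of $A := (1-\alpha)Q_y - I$, which is equivalent to showing that $1$ is not an eigenvalue of the scaled matrix $(1-\alpha)Q_y$. I would record two independent routes; both exploit the fact that $Q_y$ is the transition matrix of a random walk on $\ZN$ and is therefore row-stochastic, since each row sums to $y + (1-2y) + y = 1$, i.e.\ $Q_y \mathbf{1} = \mathbf{1}$. (Here $Q$ in the statement is the matrix $Q_y$ of the preceding definition.)

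\textbf{Route one (spectral radius).} Because $Q_y$ is row-stochastic with nonnegative entries, every eigenvalue $\mu$ satisfies $|\mu|\le 1$ (Perron--Frobenius, or Gershgorin applied to $Q_y$ itself). Hence every eigenvalue of $(1-\alpha)Q_y$ has modulus at most $(1-\alpha)$, and since $0<\alpha<1$ we have $(1-\alpha)<1$. Thus $1$ cannot be an eigenvalue of $(1-\alpha)Q_y$, so $\det\bigl((1-\alpha)Q_y - I\bigr)\ne 0$ and the system in Equation \ref{eq:lambdalemma} has a unique solution $\Lambda$.

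\textbf{Route two (Gershgorin).} This I would write out as the primary argument, because it is elementary and self-contained. Write $\beta = 1-\alpha \in (0,1)$. The matrix $A$ has constant diagonal entries $\beta(1-2y)-1$ and, in each row, off-diagonal entries summing to $2\beta y$. By the Gershgorin disc theorem every eigenvalue of $A$ lies in the disc centered at $\beta(1-2y)-1$ of radius $2\beta y$, so it suffices to place the origin outside this disc, i.e.\ to verify $\bigl|\beta(1-2y)-1\bigr| > 2\beta y$. One first checks the center is negative: since $\beta<1$ and $1-2y\le 1$ (the subcase $y>1/2$, where $1-2y<0$, being trivial), we get $\beta(1-2y)<1$, so the center equals $-(1-\beta(1-2y))$ and $\bigl|\beta(1-2y)-1\bigr| = 1-\beta+2\beta y$. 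The required inequality $1-\beta+2\beta y > 2\beta y$ then collapses to $1-\beta>0$, i.e.\ $\alpha>0$, which holds by hypothesis. Hence $0$ is excluded from every Gershgorin disc and $A$ is non-singular.

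There is essentially no deep obstacle here; the result is a soft consequence of the strict sub-stochasticity introduced by the decay factor $(1-\alpha)<1$. The only points needing care are bookkeeping: verifying that the center of the Gershgorin disc is genuinely negative so that the absolute value simplifies cleanly (this is where the case split on the sign of $1-2y$ enters), and confirming the circulant boundary structure of $Q_y$ --- that on the ring each row carries total off-diagonal mass exactly $2y$ (for a degenerate ring such as $N=2$ a single neighbor would carry weight $2y$, but the Gershgorin radius is still $2\beta y$, leaving the argument intact). I would remark that the conclusion in fact holds for every $0<\alpha<1$ and $0<y<1$, independent of the parity or size of $N$.
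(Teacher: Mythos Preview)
Your proposal is correct. Route two is essentially the paper's own argument: the paper verifies strict diagonal dominance of $(1-\alpha)Q_y - I$ and invokes the Levy--Desplanques theorem, which is precisely the Gershgorin-disc statement that a strictly diagonally dominant matrix is non-singular. Your bookkeeping is a bit more explicit (tracking the sign of the center and the case $1-2y<0$), but the substance is identical.

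Route one, the spectral-radius argument, is a genuinely cleaner alternative the paper does not use. It bypasses the row-by-row Gershgorin check by observing directly that a row-stochastic matrix has spectral radius $1$, so $(1-\alpha)Q_y$ has spectral radius $1-\alpha<1$ and cannot have $1$ as an eigenvalue. This buys you a one-line proof with no case analysis, and it generalizes immediately to any stochastic $Q$ (not just the nearest-neighbor circulant), whereas the Gershgorin route requires knowing the specific row structure to compute the disc radii.
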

\begin{proof}
$\sum_m Q_{lm}=1$ because every molecule that was just moved must have come from some location with a probability of 1.  
And  $1>1-\alpha>0$ because $1>\alpha>0$.
Therefore, $1>(1-\alpha)\sum_mQ_{lm}>0$.
Subtracting $(1-\alpha)Q_{ll}$ from both sides of $1>(1-\alpha)\sum_mQ_{lm}$ and taking the absolute value results in $|(1-\alpha) Q_{ll}-1|>(1-\alpha)|\sum_{m:l\not=m}Q_{lm}|$. With the triangle inequality $|(1-\alpha) Q_{ll}-1|>(1-\alpha)\sum_{m:l\not=m}|Q_{lm}|$.
Each diagonal entry of the matrix $((1-\alpha)Q-I)$ is $(1-\alpha) Q_{ll}-1$
 and the sum of all entries of any row excluding the diagonal is $(1-\alpha)\sum_{m:l\not=m}|Q_{lm}|$.
Therefore $(1-\alpha)Q-I$ is a diagonally dominant matrix.
 By the Levy-Desplanques theorem \cite{Taussky-1949},
 $(1-\alpha)Q-I$ is non-singular.
\end{proof}
Because $(1-\alpha)Q_y-I$ is non-singular, it is invertible. Therefore,
\begin{align}
\Lambda&=-((1-\alpha)Q_y-I)^{-1}R_{\gamma}. \label{eq:lambdahat}
\end{align}
Considering $\Lambda$ as a function of $\gamma$, 
\begin{align*}
 \Lambda(\gamma) &=-((1-\alpha)Q_y-I)^{-1}R_{\gamma}\\
&=\gamma\left[-((1-\alpha)Q_y-I)^{-1}R_1\right]\\
&=\gamma\Lambda(1).
\end{align*}

Due to the symmetry of $Q_y$ and molecule injection occuring only at the source's location, $\Lambda$ is symmetric about the source location. Therefore the expected molecule counts $m$ steps CW and CCW of the injector are the same.

\section{Information Theory}

Shannon's \textit{Mathematical Theory of Communication}  quantifies the uncertainty of a discrete random variable $X$ taking values $x$ with probability $\Prob(X=x)$ in terms of Boltzmann's \emph{entropy} function (\cite{Shannon1948}, Section I.6)
\begin{align}
H(X)&:=-\left(\sum_{\{x:\Prob(X=x)\not=0\}}\Prob(X=x)\log_2(\Prob(X=x))\right), \label{eq:entropy}
\end{align}
measured here in bits.  
%
%
%
When one variable depends conditionally on another we speak of the
\emph{conditional entropy of $X$ given $Y$}: 
\begin{align}
 H(X|Y)&:=\E_Y(H(X|Y=y))\notag\\
&=\sum_{y}\Prob(Y=y)H(X|Y=y). \label{eq:conditionalentropy}
\end{align}
Observing the variable $Y$ may or may not decrease our uncertainty about $X$,
but it cannot (on average) increase it.  Hence we have the inequality $H(X|Y)\ge H(X)$.
%
%
A quantity of central interest in information theory is the \emph{mutual information} of two random variables.  It quantifies the (average) gain in information of $X$ obtained from an observation of $Y$. Equivalently, it measures the  departure from statistical independence of the random variables $X$ and $Y$ one another.  The mutual information is given by 
\begin{align}
H(X;Y)&:=H(X) - H(X|Y).\label{eq:mutualinfo}
\end{align}
In Section \ref{sec:results} we apply this expression to measure the creature's information about the source location.  In that case we take $H(X)$ to be the distribution of the source unconditioned on any observations, namely the uniform distribution on the ring, with entropy of $\log_2(N)$ bits.    We take the conditional entropy $H(X|Y)$ to be the creature's uncertainty about the source's next location after making an observation, \textit{i.e.}~the distribution $\pmfm{}$, defined in Appendix \ref{app:F}. 


\section{Histories and Internal States}\label{app:histories}
\subsection{Histories}

The creature searching for the source knows only the history of its own locations (relative to its starting position) and its observations of the local food concentration.  In Model I these observations take the form of molecule counts at the creature's location; in Model II the creature observes the inferred distance to the food source.  Let the random variable $Z_t$ represent the observation at time $t$ in either case. 
 Before making the observation, the creature knows its location but not how many molecules are present, or how distant the source is.  

Technically, each history is one of an infinite set of all possible histories, a typical element of which would be 
$$h=\{C_0=c_0,Z_0=z_0,C_1=c_1,Z_1=z_1,\cdots,C_t=c_t,Z_t=z_t,\cdots\}$$
(recall, however that we set $C_0\equiv 0$ with probability one). 
When we consider the information known to the creature up to a certain point we will use conditional probabilities, conditioning on the set of histories specified up to the given time.\footnote{Our usage parallels the language of a filtration, an increasing family of $\sigma$-algebras in the theory of continuous time stochastic processes.  The associated measure theoretic machinery will not be needed for the arguments here and we will simplify the notation as much as possible.}  For the argument in the main text it suffices to consider histories as lists of observations-to-date, but for completeness we give here a more formal definition, which is necessary to state the definition of the random measures $F^{\pm}_t$ precisely. 

Let $\allhist$ be the set of all histories of the form $h$ given above, and $\ZZ$ the space of individual observations ($\ZZ=\mathbb{N}$ in Model I, and $\ZZ=\{0,1,\cdots,[[N/2]]\}$ in Model II).  We define projection operators $\pi_t^c$ and $\pi_t^z$ from $\Hist{}$ to $\mathbb{Z}_N$ and $\ZZ$, respectively, as
\begin{eqnarray*}
\pi_t^c:h\in\allhist&\longrightarrow&c_t\\
\pi_t^z:h\in\allhist&\longrightarrow&z_t.\\
\end{eqnarray*}
We define \textit{the history at time $t$} as the (random) subset of all histories consistent with the observations through time $t$: 
$$\Hist{t}=\{h\in\allhist | \pi_0^c(h)=C_0,\pi_0^z(h)=Z_0,\cdots,\pi_t^c(h)=C_t,\pi_t^z(h)=Z_t\}.$$
On each time step $t$ the latest creature location and observation is added \textit{e.g.}~:
\begin{eqnarray*}
\Hist{0}&=&\{h\in\allhist | \pi_0^c(h)=C_0,\pi_0^z(h)=Z_0\}\\
\Hist{1}&=&\{h\in\allhist | \pi_0^c(h)=C_0,\pi_0^z(h)=Z_0,\pi_1^c(h)=C_1,\pi_1^z(h)=Z_1\}\\
&\vdots&
\end{eqnarray*}
Consequently the histories form a system of nested subsets of $\allhist$:
$$\allhist\supset\Hist{0}\supset\Hist{1}\supset\cdots\supset\Hist{t-1}\supset\Hist{t}\supset\Hist{t+1}\supset\cdots$$
and we can form successive histories by intersection
\begin{eqnarray}\label{eq:history-intersection}
\Hist{t}&=&\Hist{t-1}\bigcap\{h\in\allhist|\pi^c_t(h)=C_t,\pi^z_t(h)=Z_t\}.
\end{eqnarray}

\subsection{Internal States, or the Random Measures $F^{\pm}$}\label{app:F}
Given a history corresponding to the random sequences $\{C_t\}$ and $\{Z_t\}$, we can define what an ideal observer would be able to infer about the location of the source given the information available to the creature in terms of two probability mass functions.  One ($\pmfp{t}$, say) captures what a creature knows about the current location of the source upon making the observation $Z_t$ at time $t$.  The other ($\pmfm{t+1}$, say) captures what a creature knows about the \emph{next} location of the source, before the creature itself moves and makes another observation.  Hence $\pmfm{t+1}$ is determined entirely by $\pmfp{t}$ and the rule governing the movement of the source.  Formally, we may write
\begin{eqnarray}
\pmfm{t}(l)&=&\Prob(S_t=l|\Hist{t-1})\\
\pmfp{t}(l)&=&\Prob(S_t=l|\Hist{t}) \label{eq:pmfp-defined}\label{eq:pmfp-def}
\end{eqnarray}
for $0\le l \le N-1$.  Hence $F^{\pm}_t$ are random vectors that are also measures, or elements of the probability simplex on the ring
$$\pmf=\{f\in\mathbb{R}^N|f\ge 0, \sum_i f_i=1\}.$$

\section{Bayesian Update of Probability Mass Functions}
\label{app:bayes}
Here we derive Equation \ref{eq:pmfp} giving the Bayesian rule that updates the probability mass function $\pmfp{t}$, which is the PMF of the source's location at time $t$ given the creature's history. 
We start from the definition of $\pmfp{t}$ given by Equation \ref{eq:pmfp-defined}.
For the conditional probability $P(S_t=l|h\in\Hist{t})$ to be defined
we restrict consideration to histories $\Hist{t}$ that have strictly positive possibility of occurring, \textit{i.e.}~we restrict $\Hist{t}$ such that $\Prob(h\in\Hist{t})>0$.  To consolidate notation, let $\ip$ denote the event $S_t=l$, $\cp$ denote the event that $C_t=c$, $\obs$ denote the event that $Z_t=z$, and $\Histg{t}$ denote the event $h\in\Hist{t}$. 
In this notation, note that $\Histg{t}=\obs\+\cp\+\Histg{t-1}$ (\textit{cf.}~Equation \ref{eq:history-intersection}). Applying Bayes law, $\Prob(A\+ B)=\Prob(A|B) \Prob(B)$, Equation \ref{eq:pmfp-def} for $\pmfp{t}(s)$ can be expanded as follows:
\begin{align}
\pmfp{t}(l)&=\Prob(\ip|\Histg{t})\notag\\
&=\frac{\Prob(\ip\+ \obs\+\cp\+ \Histg{t-1})}{\Prob(\obs\+\cp\+ \Histg{t-1})}\label{eq:St|h2}\\
&=\frac{\Prob(\obs| \cp\+\ip\+  \Histg{t-1})\cdot \Prob(\ip| \cp\+\Histg{t-1})\cdot \Prob(\cp\+\Histg{t-1})}{\Prob(\obs|\cp\+\Histg{t-1})\cdot \Prob(\cp\+\Histg{t-1})} \label{eq:St|h3}\\
&=\frac{\Prob(\obs| \cp\+\ip\+  \Histg{t-1})\cdot \Prob(\ip| \cp\+\Histg{t-1})}{\Prob(\obs|\cp\+\Histg{t-1})}. \label{eq:St|h}
\end{align}

The number of molecules at $C_t$ is a Poisson random variable with mean $\lambda_{S_t-C_t}$ and the distance $D_t$ is $S_t-C_t$, therefore both observations are conditionally independent of the history given $C_t=c$ and $S_t=l$.
But because  the creature's observation  $Z_t$ is conditionally independent of the history given $C_t=c$ and $S_t=l$, we have
 \begin{align*}
\Prob(\obs| \cp\+\ip\+ \Histg{t-1} ) &= \Prob(\obs| \cp\+\ip).
\end{align*}
The injector's location is conditionally independent of the creature's location $C_t$ given $\Hist{t-1}$ because $S_t=S_{t-1}*P$ and $C_t$ is a mapping from $\Hist{t-1}$ to $\ZN$; this fact implies
\begin{align*}
\Prob(\ip| \cp\+\Histg{t-1})&=\Prob(\ip|\Histg{t-1})\\
&=\pmfm{t}(l).
\end{align*}
The second equality follows from the definition of $\pmfm{t}$ in Appendix \ref{app:F}. 
Applying Bayes law and both examples of conditional independence to the term $\Prob(\obs|\cp\+\Histg{t-1})$ we see,
 \begin{align*}
\Prob(\obs| \cp\+ \Histg{t-1})&= \sum_{m=0}^{N-1}\Prob(\obs|S_t=m\+ \cp\+\Histg{t-1})\cdot\Prob(S_t=m|\cp\+\Histg{t-1})\\
&=\sum_{m=0}^{N-1}\Prob(\obs|S_t=m\+ \cp)\cdot\Prob(S_t=m|\Histg{t-1})\\
&=\sum_{m=0}^{N-1}\Prob(\obs|S_t=m\+ \cp)\cdot\pmfm{t}(m).
\end{align*}
Therefore, $\pmfp{t}(l)$ can be written as
\begin{align*}
\pmfp{t}(l)&=\frac{\Prob(\obs| \cp\+\ip\+  \Histg{t-1})\cdot \Prob(\ip| \cp\+\Histg{t-1})}{\Prob(\obs|\cp\+\Histg{t-1})}\\
&=\frac{\Prob(\obs| \cp\+\ip)\cdot \pmfm{t}(l)}{\sum_{m=0}^{N-1}\Prob(\obs|S_t=m\+ \cp)\cdot\pmfm{t}(m)},
\end{align*}
which concludes the derivation of Equation \ref{eq:pmfp}.

\section{Strategies}\label{app:strategies}

\subsection{Strategies Defined in Terms of Probability Mass Functions and Information States}

The Max-Likelihood strategy is defined as the strategy under which the creature  moves to the injector's \emph{most likely next location}, given the creature's history.  If more than one location is equally likely, the MLC algorithm chooses the closest.  If two are equally likely and equally close, the algorithm chooses between them at random.  Given the next source location PMF $\pmfm{t+1}$, define two subsets of locations $\mathcal{E}'_M\subset \mathcal{E}_M\subset\ZN$ as
\begin{eqnarray}
\mathcal{E}_M&=&\argmax{l\in\ZN}\pmfm{t+1}(l)\\
\mathcal{E}'_M&=&\argmin{l'\in \mathcal{E}'_M}|C_t-l'|
\end{eqnarray}
and choose $C_{t+1}|\pmfm{t+1}$ at random, uniformly, from ${\mathcal E}'_M$.  

Under the Infomax strategy the creature moves to a location that will give the lowest expected uncertainty of the source's location, upon the next observation.  When more than one location would satisfy this criterion, the creature selects among the most informative locations those that provide the highest expected food reward.  If multiple locations satisfy \emph{both} the information maximization and utility maximization criteria, the creature selects the closest such location.  If two  locations  are equally informative, food-rich, and close, the creature selects between them at random.  Given the source location PMF $\pmfm{t+1}$, define sets of locations $\mathcal{E}''_I\subset \mathcal{E}_I'\subset\mathcal{E}_I\subset\ZN$ as
\begin{eqnarray}
{\mathcal E}_I&=&\argmax{l\in\ZN}\E[-H(\pmfm{t+2})|C_{t+1}=l]\\
{\mathcal E}'_I&=&\argmax{l'\in{\mathcal E}_I} \E(M_{t+1}|C_{t+1}=l')\\
{\mathcal E}''_I&=&\argmin{l''\in{\mathcal E}'_I}|C_t-l'' |
\end{eqnarray}
Here $H(f)$ is the entropy of the PMF $f\in\pmf$.  
Choose $C_{t+1}|\pmfm{t+1}$ at random, uniformly, from ${\mathcal E}''_I$.

The modified Maximum Likelihood strategy (\textit{a.k.a.}~the Hybrid or the Track-and-Pounce strategy) chooses between the two previous strategies, depending on the creature's information state.  When $\pmfm{t+1}\in\ISm{0}$, the creature selects $C_{t+1}$ from $\mathcal{E}'_M$, otherwise the creature selects $C_{t+1}$ from $\mathcal{E}''_I$.

\subsection{The Infomax Strategy Minimizes Uncertainty}\label{app:infomaxstrat}


The mutual information between the source's location at time $t$ and the creature's prior history is determined by Equation \ref{eq:mutualinfo}. This is equivalent to applying the same equation to each information state. From Lemma \ref{lem:probS}, we have the first term of Equation \ref{eq:mutualinfo} and we then calculate the conditional entropy of each State.

\begin{lem}\label{lem:probS}
The (unconditioned) probability that the injector is at location $s$ at time $t$ is $1/N$. That is, $\Prob(S^t=s)=1/N$.
\end{lem}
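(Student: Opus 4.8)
The plan is to argue by induction on $t$ that the unconditioned law of the source location $S_t$ is the uniform PMF $f_u\equiv 1/N$ on $\ZN$. The only substantive ingredient is the elementary fact that the uniform distribution is a fixed point of convolution with \emph{any} probability mass function on the finite group $\ZN$; everything else is bookkeeping. For the base case $t=0$ the claim holds by hypothesis, since the model stipulates that the source begins at a uniformly distributed position $S_0$, so $\Prob(S_0=s)=1/N$ for every $s\in\ZN$.

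For the inductive step, suppose $\Prob(S_t=s)=1/N$ for all $s$. Let $X_{t+1}$ denote the (signed) increment of the source at step $t+1$, so that $S_{t+1}=S_t+X_{t+1}$ with $X_{t+1}$ distributed according to $P_x$ and independent of $S_t$. As stated in the model, the PMF of $S_{t+1}$ is then the convolution of the PMF of $S_t$ with $P_x$, which the law of total probability makes explicit:
\begin{align*}
\Prob(S_{t+1}=s)&=\sum_{m}\Prob(S_{t+1}=s\mid X_{t+1}=m)\,P_x(m)\\
&=\sum_{m}\Prob(S_t=s-m)\,P_x(m)\\
&=\frac{1}{N}\sum_{m}P_x(m)=\frac{1}{N},
\end{align*}
where arithmetic is taken mod $N$ and the last line uses the inductive hypothesis together with $\sum_m P_x(m)=1$. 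This completes the induction and establishes $\Prob(S_t=s)=1/N$ for all $t$ and all $s$.

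There is no genuine obstacle here: the result is a direct consequence of the translation invariance of convolution on $\ZN$ combined with the uniform initial condition. The only point that merits care is the invocation of the independence of the increment $X_{t+1}$ from the current location $S_t$ (and hence from the entire past), which is exactly the independent-increments assumption built into the random walk; this is what licenses writing the one-step transition as an \emph{unconditional} convolution rather than conditioning on the creature's history. Equivalently, one may observe that $S_t=S_0+\sum_{i=1}^{t}X_i \bmod N$ with $S_0$ uniform and independent of the increments, so that $S_t$ is uniform regardless of the distribution of the $X_i$, since convolving the uniform PMF with any PMF again returns the uniform PMF.
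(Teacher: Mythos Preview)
Your proof is correct and follows essentially the same idea as the paper's: both exploit the translation invariance of the random walk on $\ZN$ together with the uniform initial condition. The paper phrases this as a path-shifting symmetry argument (any path from $l$ to $m$ can be translated to an equiprobable path from $l+a$ to $m+a$, whence $\Prob(S_t=s)=\Prob(S_t=s+a)$ for all $a$), whereas you compute the convolution explicitly by induction; the underlying content is the same.
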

\begin{proof} 
The transition matrix $P$ generating the injector's next location is independent of its current location.  In addition, the initial distribution of the injector is assumed to be uniform on the ring, \textit{i.e.}~$\forall s$ $\Prob(S_0=s)=1/N$.
Consider the injector starting at some location, $l$, moving along some path according to $P$ and ending up at location $m$. 
It is equally probable that the injector
 could have started at $l+a$ because $$\Prob(S_0=l)=\Prob(S_0=l+a)=1/N,$$
 moved along the same (equally probable) path, and
ended up at location $m+a$.
This argument holds for all starting locations and paths so $\forall a,\forall t$ $$\Prob(S^t=s)=\Prob(S^t=s+a).$$ Therefore, $\forall t,\forall s,\, \Prob(S^t=s)=1/N$.
\end{proof}

The infomax strategy greedily optimizes for expected information and then expected food. For the Distance-Certain Model we propose a strategy in Figure \ref{fig:strategies} which we now show is an infomax strategy. Furthermore, we demonstrate that the infomax strategy is long-term optimal for information. First we demonstrate that a creature in $\ISm{0}$ has more information about the injector's location than a creature in any other state and then point out that the strategy proposed in Figure \ref{fig:strategies} always transitions to $\ISm{0}$. Finally, we show that the proposed strategy selects the location with highest expected food that is guaranteed to transition to $\ISm{0}$.

\begin{lem}\label{lem:ISinfo}
A creature in the information state $\ISm{0}$ has optimal information about the injector's location.
\end{lem}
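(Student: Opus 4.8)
The plan is to recast ``optimal information'' as ``minimal uncertainty'': I would show that the entropy of the next-location PMF $\pmfm{}$ attains its minimum over all information states exactly on $\ISm{0}$. By Lemma \ref{lem:probS} the unconditioned source law is uniform, so $H(S_{t+1})=\log_2 N$ in every state; since the information gained about the source equals $\log_2 N$ minus the entropy of $\pmfm{}$, having the most information is the same as having the smallest $H(\pmfm{})$. It therefore suffices to compare the entropies $H(f^-)$ across the states $\ISm{0},\ISm{1},\ISm{2},\ISm{3},\ISm{I}$ (the uniform state $\ISm{*}$ has entropy $\log_2 N$ and is dismissed immediately).

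The key structural fact I would exploit is that, by Equation \ref{eq:pmfm}, every $f^-$ is the law of $S^-=S^++W$, where $S^+$ is distributed according to the current-location PMF $f^+$ of that state and $W\sim P_x$ is an \emph{independent} increment. In $\ISm{0}$ the PMF $f^+$ is a point mass, so $S^-$ is just a fixed shift of $W$ and $H(f^-_0)=H(W)$. In every other state $f^+$ is supported on two \emph{distinct} locations (the two candidates $e\pm d$), so there $S^+$ is genuinely random.

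First I would apply the fact that conditioning cannot increase entropy. For any state, writing the post-movement variable as $S^-=S^++W$,
\begin{equation*}
H(f^-)=H(S^-)\ge H(S^-\mid S^+).
\end{equation*}
Because $W$ is independent of $S^+$, for each value $u$ one has $H(S^-\mid S^+=u)=H(u+W)=H(W)$, hence $H(S^-\mid S^+)=H(W)=H(f^-_0)$. This gives $H(f^-)\ge H(f^-_0)$ for every information state, so $\ISm{0}$ indeed minimizes the uncertainty. For \emph{strict} optimality I would note that equality in the conditioning bound forces $S^-$ to be independent of $S^+$, which fails as soon as $f^+$ has two distinct support points: the two translated supports $\{u_i-1,u_i,u_i+1\}$ are then not identical, so some observed values of $S^-$ single out one candidate and reveal $S^+$.

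The one place that needs care, and the main (minor) obstacle, is the borderline state $\ISm{1}$, where the candidates $e\pm 1$ lie only two apart and their convolved supports overlap at the central site $e$; I would resolve it by exhibiting an outlying value (e.g.\ $e-2$) reachable from a single candidate, which still breaks independence. I would also flag the tempting wrong turn: bounding via $H(S^++W)\ge H(S^+)$ does not compare the two post-movement entropies and in fact fails near $x=1/3$ (where $H(W)=\log_2 3>1$); conditioning on $S^+$ rather than on $W$ is exactly what makes the comparison valid across the whole range $1/4<x<1/3$.
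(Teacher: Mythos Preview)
Your argument is correct, and it takes a genuinely different route from the paper's proof. The paper proceeds by explicit computation: it partitions the reachable PMFs into $\ISm{0}$, $I^{1-}$ (distance one, possibly asymmetric weights $p,q$), and $I^{2-}$ (distance $\ge 2$, possibly asymmetric), writes down the exact form of $f^-$ in each case, and computes the mutual information directly, obtaining $H_1 = H_0 + \phi(p)(1-x)$ and $H_2 = H_0 + \phi(p)$, from which the strict ranking follows. You instead invoke the single inequality $H(S^-)\ge H(S^-\mid S^+)=H(W)$, which dispatches every non-degenerate state at once and handles the asymmetric cases in $\ISm{I}$ without any separate bookkeeping. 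Your route is cleaner and in fact holds for all $x\in(0,1/2)$, not just the restricted range $1/4<x<1/3$ the paper invokes when comparing its explicit formulas. What the paper's computation buys in exchange is the exact entropy of each information state as a function of $x$ and $p$; those closed forms are what feed into Table~\ref{tab:entropies}, Figure~2.9, and the long-run averages in Appendix~\ref{app:ergodic}, so for the paper's later quantitative comparisons the explicit route is not wasted effort. Your handling of the equality case (the $\ISm{1}$ overlap at $e$) is fine: exhibiting a single value of $S^-$ that determines $S^+$ suffices to rule out independence, and hence strictness.
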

\begin{proof} Recall that $\ISp{}$ is the set of information states in which the distance to the source is known with certainty.  
 We partition $\ISp{}$ into the following subsets: $\ISp{0}$ where the injector's location is known, $I^{1+}$ where the injector is known to be one step away from some location but not in $\ISp{0}$, and $I^{2+}$ where the injector is known to be a distance $d$ away from some location ($d\ge 2$) but not in $\ISp{0}$.  This partition of $\ISp{}$ leads to a natural partition of $\ISm{}$ where $\ISm{0}$, $I^{1-}$, and $I^{2-}$ are generated from $\ISp{0}$, $I^{1+}$, and $I^{2+}$ respectively by application of the convolution representing the injector's movement algorithm $P_x$.  Note that $\ISm{1}\subset I^{1-}$ because $I^{1-}$ also includes situations where the source is not equally likely to be CW as CCW.
 
Each $f^0$ in $\ISm{0}$ has the form
\begin{align}
	f^0(l)&=
\left\lbrace
\begin{array}{ll}
	x 	& 	l=e-1\\
	1-2x	& 	l=e\\
	x 	& 	l=e+1\\
	0 	& 	\text{otherwise}
\end{array}
\right.
\end{align}
with $e\in\{0,1,\cdots,N-1\}$.

Each $f^1$ in $I^{1-}$ has the form
\begin{align}
	f^1(l)&=
\left\lbrace
\begin{array}{ll}
	px 		& 	l=e-2\\
	p(1-2x) 	& 	l=e-1\\
	x		& 	l=e\\
	q(1-2x)	& 	l=e+1\\
	qx 	& 	l=e+2\\
	0 		& 	\text{otherwise}
\end{array}
\right.\label{eq:f-1}
\end{align}
with $e\in\{0,1,\cdots,N-1\}$, $p,q>0$, and $p+q=1$. Here $p$ is the probability that the injector was 1 step CCW before moving.

Each $f^2$ in $I^{2-}$ has the form
\begin{align}
	f^2(l)&=
\left\lbrace
\begin{array}{ll}
	px 		& 	l=e-d\pm1\\
	p(1-2x)	& 	l=e-d\\
	qx 	& 	l=e+d\pm1\\
	q(1-2x)	& 	l=e+d\\
	0 		& 	\text{otherwise}
\end{array}
\right.\label{eq:f-2}
\end{align}
with $e\in\{0,1,\cdots,N-1\}$, $d\in\{2,3,\cdots,N-2\}$, $p,q>0$, and $p+q=1$. Here $p$ is the probability that the injector was $1<d<N-1$ steps CCW before moving.

If a creature has a history that maps $\pmfm{}$ into $\ISm{0}$, then the mutual information between the injector's location and the creature's prior history is given by Equation \ref{eq:mutualinfo} as
\begin{align*}
 H_0(S_t;h\in\Hist{t-1}\cap h\mapsto\ISm{0})&=H(S^t) - H(S^t|h\in\Hist{t-1}\cap h\mapsto\ISm{0})\\
&=H(S^t)-\E_{f^0\in\ISm{0}} [H(\pmfm{t}=f^0)]\\
&=\log_2 (N) +2x\log_2(x)+(1-2x)\log_2(1-2x).
\end{align*}
 For convenience, let us define the function $\phi:[0,1]\mapsto [0,1/2]$ as 
 \begin{equation}
 \phi(x)=\left\{\begin{array}{ll}
 0,&x=0\\
 x\log_2(x),&0<x\le 1.
 \end{array}\right.
 \end{equation}
If a creature has a history that maps $\pmfm{}$ into $I^{1-}$, then the mutual information between the injector's location and the creature's prior history (given by Equation \ref{eq:mutualinfo}) depends on the CCW probability $p, 0<p<1$. (If $p=0$ or $p=1$ then we are back in information state $\ISp{0}$, as the location of the source is known with certainty).  As a function of $p$, the mutual information is 
\begin{align*}
H_1(S_t;h\in\Hist{t-1}\cap h\mapsto I^{1-})=&H(S^t) - H(S^t|h\in\Hist{t-1}\cap h\mapsto I^{1-})\\
=&\log_2(N)+2x\log_2(x)+(1-2x)\log_2(1-2x)+\phi(p)(1-x)\\
=&H_0(S_t;h\in\Hist{t-1}\cap h\mapsto\ISm{0})+\phi(p)(1-x).
\end{align*}
Similarly, if a creature has a history that maps $\pmfm{}$ into $I^{2-}$, then the mutual information between the injector's location and the creature's prior history is
\begin{align*}
H_2(S_t;h\in\Hist{t-1}\cap h\mapsto I^{2-})=&H(S^t) - H(S^t|h\in\Hist{t-1}\cap h\mapsto I^{2-})\\
=&\log_2(N)+2x\log_2(x)+(1-2x)\log_2(1-2x)+\phi(p)\\
=&H_0(S_t;h\in\Hist{t-1}\cap h\mapsto\ISm{0})+\phi(p).
\end{align*}
As we have restricted the source mobility parameter $x$ to the range $1/4\le x \le 1/3$, we have the strict inequality that for each $p$, $H_0>\max(H_1,H_2).$
That is, the mutual information (the reduction in entropy of $S_t$ upon conditioning on the creature's history $\Hist{t-1}$) is greatest for a creature in information state $\ISm{0}$.  
As $\ISm{0}, I^{1-}$, and $I^{2-}$ partition all possible histories in the Distance Certain Model (together with the initial state $f_{u}$ which represents a state of having no information about the source), a creature in the information state $\ISm{0}$ has the maximal possible  information about the injector's location.\qedhere
\end{proof}

\begin{cor}\label{lem:infomaxstrat}
The Infomax Strategy has (long-term) optimal information about the injector's location.
\end{cor}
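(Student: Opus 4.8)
The plan is to combine Lemma~\ref{lem:ISinfo}, which identifies $\ISm{0}$ as the state of maximal mutual information, with the transition structure of the Infomax algorithm recorded by $M_{ITC}$ in Table~\ref{tab:5x5matrices}, and thereby show that the Infomax creature occupies the best possible information state at every step after an initial transient. First I would record the \emph{upper bound}: by Lemma~\ref{lem:ISinfo}, for any history, and hence under any strategy, the mutual information $H(S_t;\Hist{t-1})$ between the source location and the creature's history is at most $H_0$, the value attained in $\ISm{0}$. Since every Markovian movement algorithm places the creature, at each time $t$, in some state of the partition $\{\ISm{*},\ISm{0},\ISm{1},\ISm{2},\ISm{3},\ISm{I}\}$, no strategy can ever exceed $H_0$ bits of information about the source on any single step, uniformly in $t$ and in the choice of strategy.

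Next I would establish \emph{achievability}, namely that the Infomax algorithm attains $H_0$ at every step past the transient. The creature necessarily begins in $\ISm{*}$ and enters one of $\ISm{0},\ISm{1},\ISm{2},\ISm{3}$ after the first observation. The crucial fact is the self-loop $\ISm{0}\to\ISm{0}$: as in the three-case displacement argument of Section~\ref{sec:markov}, once $S_t=e$ is known, moving to $C_{t+1}=e\pm1$ forces the post-move displacement into $\{0,1,2\}$, and because the source can reach only $e-1,e,e+1$, each observed distance pins down the source location with certainty, returning the creature to $\ISm{0}$. Reading off the remaining rows of $M_{ITC}$ (equivalently Figures~\ref{fig:is1-transitions}--\ref{fig:is3-transitions}), every other recurrent state likewise maps into $\ISm{0}$ with probability one under the Infomax move. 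Hence $\ISm{0}$ is the single-state minimal absorbing set for $M_{ITC}$, the creature is in $\ISm{0}$ for all $t\ge 2$, and its per-step information equals $H_0$.

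Combining the two halves yields the corollary: the Infomax algorithm meets the universal per-step ceiling $H_0$ at every time step after the transient, so under any long-run information criterion --- stationary mutual information, or the time-average $\tfrac1T\sum_{t<T}H(S_t;\Hist{t-1})$ --- it is optimal. The main conceptual obstacle is that \emph{greedy} one-step information maximization need not in general coincide with \emph{long-term} optimality; I would emphasize that the two agree \emph{here} precisely because Lemma~\ref{lem:ISinfo} singles out one best state $\ISm{0}$ that is simultaneously reachable in a single move from every state and invariant under the Infomax rule. That structural coincidence --- a unique optimal state that is also absorbing --- is what collapses the potential gap between greedy and global optimality, and I would invoke it explicitly rather than leaning on the greedy property alone.

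A final bookkeeping step is to confirm that the moves drawn in Figure~\ref{fig:strategies} genuinely realize the greedy Infomax rule (maximize expected next-step information, then break ties by expected food, then by proximity), so that the algorithm whose transitions we exploited really is \emph{the} Infomax strategy. This reduces to checking, state by state, that the depicted destination lies in the set $\mathcal{E}''_I$ of Appendix~\ref{app:strategies}; given the explicit entropy expressions $H_0>\max(H_1,H_2)$ derived in Lemma~\ref{lem:ISinfo} over the range $1/4<x<1/3$, this is a short verification rather than a genuine difficulty.
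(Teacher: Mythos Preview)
Your proposal is correct and follows essentially the same approach as the paper: invoke Lemma~\ref{lem:ISinfo} to establish $\ISm{0}$ as the information-optimal state, then use the transition structure (equivalently $M_{ITC}$ or Figure~\ref{fig:strategy_transitions}) to show the Infomax creature reaches and remains in $\ISm{0}$ with probability one. Your write-up is considerably more detailed than the paper's two-line proof --- in particular, your explicit ``upper bound plus achievability'' framing and your remark that greedy and long-term optimality coincide here \emph{because} the unique best state is absorbing are nice clarifications the paper leaves implicit --- and your final bookkeeping step (verifying that the moves in Figure~\ref{fig:strategies} really implement the Infomax rule) is handled by the paper in the paragraph immediately following the corollary rather than inside the proof itself.
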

\begin{proof} 
From Lemma \ref{lem:ISinfo}, the information state $\ISm{0}$ minimizes the creature's uncertainty about the source's location, \textit{i.e.}~it minimizes the entropy of $f\in\ISm{}$. The infomax strategy illustrated  
in Figure \ref{fig:strategies} transitions to $\ISm{0}$ with probability one (see Figure \ref{fig:strategy_transitions}). Therefore, with probability one, the infomax strategy has optimal information about the injector's location. \qedhere
\end{proof}

The infomax strategy must choose between locations that have optimal expected information on the next turn. Lemma \ref{lem:ISinfo} shows that a location that is guaranteed to transition to $\ISm{0}$ has the highest possible expected information. The infomax strategy proposed in Figure \ref{fig:strategies} chooses the location that is guaranteed to transition to $\ISm{0}$ and has the highest expected food of all locations that transition to $\ISm{0}$. Therefore Figure \ref{fig:strategies} depicts an infomax strategy.

\subsubsection{Mutual Information for Sequence Entropies}
\label{sssec:MI-sequence}
In addition to calculating the mutual information as the expected reduction in entropy between $f_u\equiv 1/N$ and $\pmfm{}$, one may consider the differences in entropies between \emph{ensembles of random sequences} of source locations on  the one hand, and creature locations and internal states on the other.  One may compare the  entropy of the finite sequence $(S_0,S_1,\cdots,S_T)$ unconditioned on any observations
with the mean entropy after conditioning on the creature's trajectory $(C_0,C_1,\cdots,C_T)$ and its sequence of internal states $(\pmfp{0},\pmfp{1},\cdots,\pmfp{T})$.  Allowing for a slight abuse of notation, let the (negative) entropy function $\phi$ defined above be written with three arguments in the case of three possible outcomes: we will write $\phi(a,b,c)$ for $a\log_2 a+b\log_2 b + c\log_2 c$, for example, or 
$\phi(x,1-2x,x)$ for $2x\log_2(x)+(1-2x)\log_2(1-2x)$.  It is easy to show, using chains of conditional probabilities, that the entropy of the finite sequence of source locations $(S_0,S_1,\cdots,S_T)$  is given by
\begin{equation}
H[\{S_t\}_{t=0}^T] = \log_2(N)-T\phi(x,1-2x,x)
\end{equation}
where the $\log_2(N)$ term reflects the entropy due to the initial (uniform) distribution and the $\phi(x,1-2x,x)$ terms reflect the contribution to the entropy of the finite sequence on each successive time step.\footnote{We note that this construction works precisely because the source alone -- like the entire system -- is  Markov process.}
We may define the \emph{sequence entropy} to be the mean growth rate of the entropy of the sequence over time,\footnote{We note that Schreiber introduced a related notion, the \emph{transfer entropy}, to quantify the transfer of information from one sequence of random variables to another \cite{Schreiber:2000:PRL}.}
 \textit{i.e.}~
\begin{equation}
H_{seq}=\lim_{T\to\infty}\left( \frac{1}{T} H[\{S_t\}_{t=0}^T]\right)=\lim_{T\to\infty} \left( \frac{1}{T} \left[\log_2(N)-T\phi(x,1-2x,x)\right]\right)=-\phi(x,1-2x,x).
\end{equation}
As expected for any Markov process that tends towards an equilibrium distribution, the entropy of the sequence $T$ steps after the initial condition asymptotically grows linearly in time.  One may carry out a similar calculation for the entire system restricted to the set of one, two or three absorbing states (see \textit{e.g.}~Appendix \ref{app:strategytransitionmatrices}).  It suffices to point out, however,
that if the creature adopts the infomax strategy, hence returns on each step to the ``most informed'' of the information states $\ISp{0}$, then the \emph{sequence of source locations conditioned on the creature's history} has asymptotic entropy of zero.  The creature that remains in the information state $\ISp{0}$ can retrospectively reconstruct the source trajectory perfectly.  The other creatures cannot, and hence the infomax strategy is information-optimal both from the point of view of the next-step entropies and of the entropy of the entire sequence of source trajectories and creature histories.

\subsection{Transition Matrices}\label{app:strategytransitionmatrices}

The state space of any finite Markov chain may be uniquely decomposed into a set of transient states and one or more irreducible sets of recurrent states.  A state is \emph{recurrent} if the probability of eventual return to that state is unity, otherwise the state is \emph{transient}.  A set of states is \emph{irreducible} if all the states in the set intercommunicate, \textit{i.e.}~there is a positive probability of moving between any pair over a finite number of iterations.  For example, if $M$ is a transition matrix defined on a set of states of the Markov chain then the set is irreducible if every entry of $M^k$ is positive for some $k$.  An irreducible Markov chain on a finite state space will have a unique stationary distribution, $\mathbf{\pi}$ \cite{Grimmett+Stirzaker:2005:book}.  We will show that each of the strategies considered reduces to an irreducible Markov chain on a finite set of recurrent states.  We call this subset of the original Markov chain the \emph{absorbing set}.  We will calculate the stationary distribution $\mathbf{\pi}$ in each case.

The transition matrix between information states for the Infomax strategy is trivial.  The system enters the state $\ISm{0}$ by the second iteration and remains in state $\ISm{0}$ from then on, with probability 1. In Table \ref{tab:5x5matrices}, the matrix $M_{ITC}$ gives the transitions restricted to the five relevant information states.  State $\ISm{0}$ corresponds to the second column, and is absorbing after two iterations for all initial conditions.  Equivalently, the matrix $M_{ITC}^2$ consists of a single nonzero column containing all unit entries.  

The Hybrid strategy has an absorbing set consisting of the states $\ISm{0}$ and $\ISm{1}$. The transition matrix for the strategy restricted to these states is
\begin{align}
M&=\left[
\begin{array}{ll}
 1-2x & 2x\\
1 & 0
\end{array}
\right].
\end{align}
This set of states is irreducible and recurrent since $M^2$ has all positive entries in this case.  The stationary distribution is
\renewcommand{\arraystretch}{1.5}
\begin{align}
 \pi_{mMLC} &= \left[
\begin{array}{l}
\frac{1}{1+2x}\\
\frac{2x}{1+2x}
\end{array}
\right].
\end{align}
\renewcommand{\arraystretch}{1}

The Max-Likelihood strategy has an absorbing set consisting of the states $\ISm{0}, \ISm{1}$ and $\ISm{2}$.  The transition matrix restricted to these states is
\begin{align}
M&=\left[
\begin{array}{lll}
 1-2x & 2x & 0\\
x & 1-2x & x\\
1-x & x & 0
\end{array}
\right].
\end{align}
In this case as well these three sets for an irreducible recurrent chain; the matrix $M^2$ again has all positive entries, for the range of $x$ considered ($1/4<x<1/3$).
The stationary distribution is 
\renewcommand{\arraystretch}{1.5}
\begin{align}
 \pi_{MLC} &=\left[
\begin{array}{l}
\frac{2-x}{4+x}\\
\frac{2}{4+x}\\
\frac{2x}{4+x}
\end{array}
\right].
\end{align}
\renewcommand{\arraystretch}{1}

\section{The Expected Food and Information}\label{app:ergodic}

The mean food intake is the expected value (under the asymptotic probability distribution for the Markov process on the absorbing set) of the molecule count.  Since the system is ergodic, this expected value is equal (``almost surely") to the long time average of the molecule count.  That is
$$\lim_{T\to\infty}\frac{1}{T}\sum_{t=1}^T M_t = \E[M_t]\hspace{1cm}\mbox{(almost surely)}.$$
For a given strategy, the expected food is then the dot product of the stationary distribution $\pi$ and the expected food for each state.
Similarly, the expected information is the dot product of the stationary distribution $\pi$ and the information in each state.  


Let $\phi(x)$ denote the (negative) entropy function $\phi(x)=2x\log_2(x)+(1-2x)\log_2(1-2x)$ as in Appendix \ref{sssec:MI-sequence}.  

Table \ref{tab:entropies} shows the entropies of the probability mass functions $\pmfm{}$ representing the uncertainty of the injector's next location in each information state.  If $H_k(x)$ denotes the entropy of the PMF for state $\ISm{k}$ (as a function of the injector's mobility, $x$) then we have $H_0(x)<H_1(x)<H_2(x)=H_k(x), k\ge 2$.  
\begin{table}
\centering
\label{tab:entropies}
\begin{tabular}{cl}
Information State & Entropy of PMF $\pmfm{}$\\
$\ISm{0}$&$H_0(x) = -\phi(x)$\\
$\ISm{1}$&$H_1(x)=  -\phi(x)+1-x$\\
$\ISm{2}$&$H_2(x)=  -\phi(x) + 1 $
\end{tabular}
\caption[Entropies of PMFs by information state.]{Entropy of probability mass functions (PMFs) in different information states.  The PMF representing the next location of the source when the creature is in information state $k\in\{0,1,2\}$, is given as a function of the source mobility $x$ by $H_k(x)$.  We restrict the source mobility to $1/4<x<1/3$.  For $k>2$ we have $H_k(x)=H_2(x)$.  Clearly $H_0(x)<H_1(x)<H_2(x)$.  Notation: for $0<x<1/2$, $\phi(x)$ is here defined as 
$\phi(x)=2x\log_2(x)+(1-2x)\log_2(1-2x)$.}
\end{table}

The long-run average entropy of the information states for each strategy is obtained from Table \ref{tab:entropies} and the equilibrium distributions in Appendix \ref{app:strategytransitionmatrices}. Let $\bar{H}_{ITC}$, $\bar{H}_{mMLC}$ and $\bar{H}_MLC$ be the mean entropies of the internal state $\pmfm{}$ for the infomax, hybrid and max-likelihood creatures, respectively.  From a straightforward calculation we obtain:
\begin{eqnarray*}
\bar{H}_{ITC}&=&-\phi(x)\\
\bar{H}_{mMLC}&=&-\phi(x)+2x(1-x)/(1+2x)\\
\bar{H}_{MLC}&=&-\phi(x)+2/(4+x),
\end{eqnarray*}
from which it follows (in the range $1/4\le x \le 1/3$) that the infomax strategy gives the source location estimates with the lowest mean entropy:
$$\bar{H}_{ITC}<\bar{H}_{mMLC}<\bar{H}_{MLC}.$$

%

\section{Joint (Distance, Information State) Transitions}
\label{app:joint}

To understand the dependence of expected food gained on source mobility, we derive the \textit{joint} transition probabilities between states representing both the distance $D_t$ between the creature and the source at each time step and the creature's information state $\ISm{}_t$, for the information states that form the positive recurrent states for each strategy.  
Let the notation $(n,k)$ denote the creature at time $t$ being a distance $D_t=n$ from the food source and in information state $\ISm{k}_t$.  
The conditional probability $\Prob_S(D_t=n|\ISm{k})$ that a creature following strategy $S$ ends up  a distance $n$ from the source, given that the creature is in information state $\ISm{k}$, is as follows:
\begin{eqnarray}
\Prob_{ITC}(D_t=n|\ISm{k}_t)&=&\begin{array}{c|c|}
(n,k)	& (\cdot,0) \\ \hline
(0,\cdot) & x\\
(1,\cdot) & 1-2x\\
(2,\cdot) & x\\\hline
                \end{array}
\\
\Prob_{mMLC}(D_t=n|\ISm{k}_t)&=&\begin{array}{c|cc|}
(n,k)&(\cdot,0)&(\cdot,1)\\ \hline
(0,\cdot)&1-2x	&x/2\\
(1,\cdot)&2x	&(1-2x)/2\\
(2,\cdot)&0	&x\\
(3,\cdot)&0	&(1-2x)/2\\
(4,\cdot)&0	&x/2\\\hline
\end{array}
\\
\Prob_{MLC}(D_t=n|\ISm{k}_t)&=&\begin{array}{c|ccc|}
(n,k)&(\cdot,0)&(\cdot,1)&(\cdot,2)\\ \hline
(0,\cdot)&1-2x	&x	&(1-2x)/2\\
(1,\cdot)&2x	&1-2x	&x\\
(2,\cdot)&0	&x	&0\\
(3,\cdot)&0 	&0	&x/2\\
(4,\cdot)&0	&0	&(1-2x)/2\\
(5,\cdot)&0	&0	&x/2\\ \hline 
\end{array}.
\end{eqnarray}

To find the probability of being distance $n$ from the injector we apply Bayes law, obtaining $\Prob_S(D=n)=\sum_{k}\Prob_S(D=n|\ISm{k})\Prob_S(\pmfm{}\in\ISm{k})$.  We sum the index $k$  over the positive recurrent information states for strategy $S$. The probability of being in each information state for each creature is calculated in Appendix \ref{app:strategytransitionmatrices},
and the resulting probabilities for each combination of mobility and strategy are given in Table 2.3. 

With rows correspond to states at time $t$ and columns correspond to states at time $t+1$, the joint (distance, information) transition matrices on the positive recurrent states are:
\begin{eqnarray}\label{eq:joint-trans-ITC}
T_{ITC}&=&\begin{array}{c|ccc|}
(n,k)	&(0,0)	&(1,0)	&(2,0)	\\ \hline
(0,0)	&x	&1-2x	&x\\
(1,0)	&x	&1-2x	&x\\
(2,0)	&x	&1-2x	&x\\ \hline 
\end{array}
\\ \label{eq:joint-trans-mMLC}
T_{mMLC}&=&\begin{array}{c|ccccccc|}
(n,k)	&(0,0)	&(1,0)	&(0,1)	&(1,1)	&(1,2)	&(1,3)	&(1,4)\\ \hline
(0,0)	&1-2 x & 2 x & 0 & 0 & 0 & 0 & 0 \\
(1,0)	& 0 & 0 & \frac{x}{2} & \frac{1}{2} (1-2 x) & x & \frac{1}{2} (1-2 x) & \frac{x}{2} \\
(0,1)	& 1-2 x & 2 x & 0 & 0 & 0 & 0 & 0 \\
(1,1)	& 1-2 x & 2 x & 0 & 0 & 0 & 0 & 0 \\
(2,1)	& 1-2 x & 2 x & 0 & 0 & 0 & 0 & 0 \\
(3,1)	& 1-2 x & 2 x & 0 & 0 & 0 & 0 & 0 \\
(4,1)	& 1-2 x & 2 x & 0 & 0 & 0 & 0 & 0 \\ \hline
\end{array}
\\ \label{eq:joint-trans-MLC}
\\ \nonumber
T_{MLC}&=&\begin{array}{c|cccccccccc|}
(n,k)	&(0,0)	&(1,0)	&(0,1)	&(1,1)	&(2,1)	&(0,2)		&(1,2)	&(3,2)	&(4,2)		&(5,2)	\\ \hline
(0,0)	&1-2x	&2x	&0	&0	&0	&0		&0	&0	&0		&0	\\
(1,0)	&0	&0	&x	&1-2x	&x	&0		&0	&0	&0		&0	\\
(0,1)	&1-2x	&2x	&0	&0	&0	&0		&0	&0	&0		&0	\\
(1,1)	&0	&0	&x	&1-2x	&x	&0		&0	&0	&0		&0	\\
(2,1)	&0	&0	&0	&0	&0	&\frac{1-2x}{2}	&x	&\frac{x}{2}	&\frac{1-2x}{2}	&\frac{x}{2}	\\
(0,2)	&1-2x	&2x	&0	&0	&0	&0		&0	&0	&0		&0	\\
(1,2)	&0	&0	&x	&1-2x	&x	&0		&0	&0	&0		&0	\\
(3,2)	&1-2x	&2x	&0	&0	&0	&0		&0	&0	&0		&0	\\
(4,2)	&1-2x	&2x	&0	&0	&0	&0		&0	&0	&0		&0	\\
(5,2)	&1-2x	&2x	&0	&0	&0	&0		&0	&0	&0		&0	\\ \hline 
\end{array}.
\end{eqnarray}

From these transition matrices we obtain the stationary probability vectors for the joint (information state, distance) distributions, from which one immediately obtains the fraction of time spent at each distance relative to the food source.  In the case of the MLC each distance $0\le D \le 5$ appears in exactly one joint state, so the matrix $T_{MLC}$ may be viewed as defining a Markov process on the six recurring distances zero through five.  In the case of the hybrid creature the distance zero occurs both in conjunction with the information state 0 and the information state 1, which prevents representation of this strategy's behavior as a Markov process solely in terms of distances.  
The steady state distributions for the \emph{distances} for the hybrid and MLC strategies are given algebraically in Table 2.3, 
 and graphically in Figure \ref{fig:distance-probs}.

\section{Expected Distance Traveled}\label{app:distance-traveled}

We illustrate the calculation of expected distance traveled and movement probability under the infomax algorithm.  The other cases are similar, requiring only additional bookkeeping.  
Consider the transitions for the joint representation of information state and distance $|C_t-S_t|$ between creature and source for the ITC given by Equation \ref{eq:joint-trans-ITC}.  Once this creature enters information state $\ISm{0}$ it remains in that state, with only its position relative to the injector varying.  The creature following the IT algorithm moves on each time step to the location either one to the right or one to the left of the injector's last know position, because these locations minimize the expected uncertainty on the following time step, and maximize the expected food gained amongst locations with equal expected uncertainty.  In addition, given two locations jointly optimal for information (first) and food (second) the ITC choses the closer of the two destinations.  There are three cases to consider.  
\begin{enumerate}
\item With probability $1-2x$ the source remains in place, or $S_{t+1}=S_t$.  In this case the creature remains in place as well, or $C_{t+1}=C_t$, and the distance moved is $\Delta_t=|C_{t+1}-C_t|=0$.  
\item With probability $x$ the source moves clockwise, or $S_{t+1}=S_t+1$.  
\begin{enumerate}
\item With probability $x/2$, $C_t=S_t-1$.  In this case, the creature moves one step CW, and $C_{t+1}=C_t+1$.  
\item With probability $x/2$, $C_t=S_t+1$.  In this case the injector moves (by chance) to colocalize with the creature.  The creature then moves either CW or CCW with equal likelihood, in order to remain in state $\ISm{0}$.  Therefore
\begin{enumerate}
\item With probability $x/4$, $C_{t+1}=C_t+1$.
\item With probability $x/4$, $C_{t+1}=C_t-1$.  
\end{enumerate}
\end{enumerate}
\item With probability $x$ the source moves counterclockwise, or $S_{t+1}=S_t-1$.
\begin{enumerate}
\item With probability $x/2$, $C_t=S_t+1$.  In this case, the creature moves one step CCW, and $C_{t+1}=C_t-1$.  
\item With probability $x/2$, $C_t=S_t-1$.  The creature then moves either CW or CCW with equal likelihood, so:
\begin{enumerate}
\item With probability $x/4$, $C_{t+1}=C_t+1$.
\item With probability $x/4$, $C_{t+1}=C_t-1$.  
\end{enumerate}
\end{enumerate}
\end{enumerate}
In sum, the creature remains in information state $\ISm{0}$ with probability 1, and moves a distance $\Delta_t=1$ with probability $2x$, or a distance $\Delta_t=0$ with probability $1-2x$.  Compare the middle column of Table 2.5.  
The calculations for the MLC and mMLC strategies follow in a similar fashion from the transition probabilities given in Equations \ref{eq:joint-trans-mMLC}-\ref{eq:joint-trans-MLC}.


\bibliographystyle{plain}
\bibliography{math,PJT,infotheory,neuroscience}

\begin{thebibliography}{10}

\bibitem{adami1998ial}
C.~Adami.
\newblock {\em {Introduction to artificial life}}.
\newblock Springer-Verlag, 1998.

\bibitem{Agarwala2009}
Edward~K. Agarwala.
\newblock Food for thought: When information optimization fails to optimize
  utility.
\newblock Master's thesis, Case Western Reserve University, Cleveland, Ohio,
  June 2009.

\bibitem{Ashby1956}
W.~Rosh Ashby.
\newblock {\em An Introduction to Cybernetics}.
\newblock Chapman \& Hall LTD, 1956.

\bibitem{Attneave1954}
F.~Attneave.
\newblock Some informational aspects of visual perception.
\newblock {\em Psychol. Rev.}, 61:183--93, 1954.

\bibitem{Barlow1961}
HB~Barlow.
\newblock {\em Sensory Communications}, chapter Possible principles underlying
  the transformation of sensory messages.
\newblock MIT Press, 1961.

\bibitem{Bell1999PTRSLB}
A~J Bell.
\newblock Levels and loops: the future of artificial intelligence and
  neuroscience.
\newblock {\em Philos Trans R Soc Lond B Biol Sci}, 354(1392):2013--20, Dec
  1999.

\bibitem{Chiel+Beer:1997:TrendsNsci}
H~J Chiel and R~D Beer.
\newblock The brain has a body: adaptive behavior emerges from interactions of
  nervous system, body and environment.
\newblock {\em Trends Neurosci}, 20(12):553--7, Dec 1997.

\bibitem{Feng+Loparo:1997:AutoControl-ieee}
X.~Feng and K.A. Loparo.
\newblock Active probing for information in control systems with quantized
  state measurements: a minimum entropy approach.
\newblock {\em Automatic Control, IEEE Transactions on}, 42(2):216 --238, feb.
  1997.

\bibitem{Feng+Loparo+Fang:1997:AutoControl-ieee}
Xiangbo Feng, K.A. Loparo, and Yuguang Fang.
\newblock Optimal state estimation for stochastic systems: an information
  theoretic approach.
\newblock {\em Automatic Control, IEEE Transactions on}, 42(6):771 --785, jun.
  1997.

\bibitem{Fuster:2004:TrendsCogSci}
Joaqu{\'\i}n~M Fuster.
\newblock Upper processing stages of the perception-action cycle.
\newblock {\em Trends Cogn Sci}, 8(4):143--5, Apr 2004.

\bibitem{GeislerChapter52:2004}
Wilson~S. Geisler.
\newblock {\em The Visual Neurosciences}, chapter 52: Ideal Observer Analysis.
\newblock MIT Press, 2004.

\bibitem{GreenSwets1966}
David~M. Green and John~A. Swets.
\newblock {\em Signal detection theory and psychophysics}.
\newblock Wiley, New York, 1966.

\bibitem{Griffiths+Loparo:1985:IntJCtrl}
Barry~E. Griffiths and Kenneth~A. Loparo.
\newblock Optimal control of jump-linear gaussian systems.
\newblock {\em International Journal of Control}, 42:791--819, 1985.

\bibitem{Grimmett+Stirzaker:2005:book}
Geoffrey Grimmett and David Stirzaker.
\newblock {\em Probability and Random Processes}.
\newblock Oxford University Press, third edition, 2005.

\bibitem{JindrichFull:2002JEB}
Devin~L Jindrich and Robert~J Full.
\newblock Dynamic stabilization of rapid hexapedal locomotion.
\newblock {\em J Exp Biol}, 205(Pt 18):2803--23, Sep 2002.

\bibitem{Jordan1998}
Michael~I. Jordan.
\newblock {\em Learning in Graphical Models}.
\newblock MIT Press, 1998.

\bibitem{KlyubinPolaniNehaniv:2007:NeuralComp}
Alexander~S Klyubin, Daniel Polani, and Chrystopher~L Nehaniv.
\newblock Representations of space and time in the maximization of information
  flow in the perception-action loop.
\newblock {\em Neural Comput}, 19(9):2387--432, Sep 2007.

\bibitem{KramerMcLaughlin2001AmZool}
Donald~L. Kramer and Robert~L. McLaughlin.
\newblock The behavioral ecology of intermittent locomotion.
\newblock {\em American Zoologist}, 41(2):137--153, 2001.

\bibitem{KukillayaProctorHolmes:2009:Chaos}
R~Kukillaya, J~Proctor, and P~Holmes.
\newblock Neuromechanical models for insect locomotion: Stability,
  maneuverability, and proprioceptive feedback.
\newblock {\em Chaos}, 19(2):026107, Jun 2009.

\bibitem{Linsker1988Computer}
Ralph Linsker.
\newblock Self-organization in a perceptual network.
\newblock {\em Computer}, 21:105--117, 1988.

\bibitem{MarkinKlishkoShevtsovaPrilutskyRybak2010AnnNYAS}
Sergey~N Markin, Alexander~N Klishko, Natalia~A Shevtsova, Michel~A Lemay,
  Boris~I Prilutsky, and Ilya~A Rybak.
\newblock Afferent control of locomotor cpg: insights from a simple
  neuromechanical model.
\newblock {\em Ann N Y Acad Sci}, 1198:21--34, Jun 2010.

\bibitem{McNamara+Houston:1996:Nature}
J~M McNamara and A~I Houston.
\newblock State-dependent life histories.
\newblock {\em Nature}, 380(6571):215--21, Mar 1996.

\bibitem{Najemnik+Geisler:2005:Nature}
Jiri Najemnik and Wilson~S Geisler.
\newblock Optimal eye movement strategies in visual search.
\newblock {\em Nature}, 434(7031):387--91, Mar 2005.

\bibitem{OBrienEvansBrowman:1989:Oecologia}
W.~John {O'Brien}, Barbara~I. Evans, and Howard~I. Browman.
\newblock Flexible search tactics and efficient foraging in saltatory searching
  animals.
\newblock {\em Oecologia}, 80:100--110, 1989.

\bibitem{OfriaAdamiCollier:2003:JTB}
Charles Ofria, Christoph Adami, and Travis~C Collier.
\newblock Selective pressures on genomes in molecular evolution.
\newblock {\em J Theor Biol}, 222(4):477--83, Jun 2003.

\bibitem{Polani2009HFSP}
Daniel Polani.
\newblock Information: currency of life?
\newblock {\em Human Frontier Science Program Journal}, 8 Sept 2009.

\bibitem{ReynoldsFrye2007PLoSOne}
Andy~M Reynolds and Mark~A Frye.
\newblock Free-flight odor tracking in \emph{{Drosophila}} is consistent with
  an optimal intermittent scale-free search.
\newblock {\em PLoS One}, 2(4):e354, 2007.

\bibitem{Schneider:2000:NAR}
T.~D. Schneider.
\newblock Evolution of biological information.
\newblock {\em Nucleic Acids Res.}, 28(14):2794--2799, 2000.

\bibitem{Schneider1986JMolBiol}
T.~D. Schneider, G.~D. Stormo, L.~Gold, and A.~Ehrenfeucht.
\newblock Information content of binding sites on nucleotide sequences.
\newblock {\em J. Mol. Biol.}, 188:415--431, 1986.

\bibitem{Schneider:2010:NAR}
Thomas~D Schneider.
\newblock 70\% efficiency of bistate molecular machines explained by
  information theory, high dimensional geometry and evolutionary convergence.
\newblock {\em Nucleic Acids Res}, Jun 2010.

\bibitem{Schreiber:2000:PRL}
Schreiber.
\newblock Measuring information transfer.
\newblock {\em Phys Rev Lett}, 85(2):461--4, Jul 2000.

\bibitem{Shannon1948}
C.E. Shannon.
\newblock A mathematical theory of communication.
\newblock {\em The Bell System Technical Journal}, 27:379--423,623--656, July,
  October 1948.

\bibitem{ShikOrlovsky:1976:1976}
M~L Shik and G~N Orlovsky.
\newblock Neurophysiology of locomotor automatism.
\newblock {\em Physiol Rev}, 56(3):465--501, Jul 1976.

\bibitem{Simoncelli+Olshausen:2001:AnnRevNsci}
E~P Simoncelli and B~A Olshausen.
\newblock Natural image statistics and neural representation.
\newblock {\em Annu Rev Neurosci}, 24:1193--216, 2001.

\bibitem{Smith-Lewicki-06}
E.C. Smith and M.S. Lewicki.
\newblock {Efficient Auditory Coding}.
\newblock {\em Nature}, 439(7079):978--982, 2006.

\bibitem{Taussky-1949}
O.~Taussky.
\newblock {A recurring theorem on determinants}.
\newblock {\em American Mathematical Monthly}, 56(4):672--676, 1949.

\bibitem{Thomson+Kristan:2005:NC}
Eric~E Thomson and William~B Kristan.
\newblock Quantifying stimulus discriminability: a comparison of information
  theory and ideal observer analysis.
\newblock {\em Neural Comput}, 17(4):741--778, Apr 2005.

\bibitem{TouchetteLloyd:2000:PRL}
Touchette and Lloyd.
\newblock Information-theoretic limits of control.
\newblock {\em Phys Rev Lett}, 84(6):1156--9, Feb 2000.

\bibitem{van1998independent}
J.H. van Hateren and A.~van~der Schaaf.
\newblock {Independent component filters of natural images compared with simple
  cells in primary visual cortex}.
\newblock {\em Proceedings: Biological Sciences}, 265(1394):359--366, 1998.

\bibitem{vergassola2007infotaxis}
M.~Vergassola, E.~Villermaux, and B.I. Shraiman.
\newblock {`Infotaxis' as a strategy for searching without gradients}.
\newblock {\em Nature}, 445(7126):406--409, 2007.

\bibitem{Wainwright2002}
Martin~J. Wainwright.
\newblock {\em Stochastic processes on graphs with cycles: geometric and
  variational approaches}.
\newblock PhD thesis, Massachusetts Institute of Technology, 2002.

\bibitem{WainwrightJordan2006IEEE}
M.J. Wainwright and M.I. Jordan.
\newblock Log-determinant relaxation for approximate inference in discrete
  markov random fields.
\newblock {\em Signal Processing, IEEE Transactions on}, 54(6):2099 -- 2109,
  june 2006.

\end{thebibliography}

\end{document}